\newcommand{\pf}{\ensuremath{PF}\xspace}
\newcommand{\apf}{\ensuremath{APF}\xspace}
\newcommand{\fsync}{{\sc FSync}\xspace}
\newcommand{\ssync}{{\sc SSync}\xspace}
\newcommand{\async}{{\sc Async}\xspace}
\newcommand{\angolo}{\sphericalangle}
\newcommand{\argmin}{\arg\!\min}
\newcommand{\mathsc}[1]{\text{\textsc{#1}}}
\newcommand{\R}{ \mathbb{R} } 
\newcommand{\Ex}{\mathbb{E}}  
\newcommand{\Int}{\mathit{int}}  
\newcommand{\minview}{\textit{min\_view}}
\newcommand{\h}{\mathit{H}}
\newcommand{\halfline}{\mathit{hline}}
\newcommand{\Line}{\mathit{line}}
\newcommand{\FORM}{\mathit{FORM}}
\newcommand{\EMB}{\mathit{EMB}}
\newcommand{\FOR}{\mathit{FOR}}
\newcommand{\FIN}{\mathit{FIN}}
\newcommand{\GAT}{\mathit{GAT}}
\newcommand{\INV}{\mathit{INV}}
\newcommand{\SB}{\mathtt{SymBreak}}
\newcommand{\RS}{\mathtt{RefSys}}
\newcommand{\PForm}{\mathtt{ParForm}}
\newcommand{\Fin}{\mathtt{Fin}}
\newcommand{\Member}{\mathtt{Membership}}
\newcommand{\bloccobis}[1]{\smallskip\noindent\textbf{#1}:}
\newcommand{\DistMin}{\mathsc{CollisionFreeMove}}  
\newcommand{\SMove}{\mathsc{StationaryMove}\xspace}
\newcommand{\bzero}{\mathtt{b}_0}
\newcommand{\buno}{\mathtt{b}_1}
\newcommand{\bdue}{\mathtt{b}_2}
\newcommand{\btre}{\mathtt{b}_3}
\newcommand{\bquattro}{\mathtt{b}_4}
\newcommand{\bcinque}{\mathtt{b}_5}
\newcommand{\e}{\mathtt{e}}
\newcommand{\zuno}{\mathtt{z}_1}
\newcommand{\zdue}{\mathtt{z}_2}
\newcommand{\uuno}{\mathtt{u}_1}
\newcommand{\udue}{\mathtt{u}_2}
\newcommand{\fdue}{\mathtt{f}_2}
\newcommand{\ftre}{\mathtt{f}_3}
\newcommand{\fquattro}{\mathtt{f}_4}
\newcommand{\fcinque}{\mathtt{f}_5}
\renewcommand{\a}{\mathtt{a}}
\renewcommand{\c}{\mathtt{c}}
\newcommand{\dzero}{\mathtt{d}_0}
\newcommand{\duno}{\mathtt{d}_1}
\newcommand{\ddue}{\mathtt{d}_2}
\newcommand{\iuno}{\mathtt{i}_1}
\newcommand{\idue}{\mathtt{i}_2}
\newcommand{\itre}{\mathtt{i}_3}
\newcommand{\iquattro}{\mathtt{i}_4}
\newcommand{\icinque}{\mathtt{i}_5}
\newcommand{\isei}{\mathtt{i}_6}
\newcommand{\gzero}{\mathtt{g}_0}
\newcommand{\guno}{\mathtt{g}_1}
\newcommand{\gdue}{\mathtt{g}_2}
\renewcommand{\l}{\mathtt{l}}
\newcommand{\mzero}{\mathtt{m}_0}
\newcommand{\muno}{\mathtt{m}_1}
\newcommand{\q}{\mathtt{q}}
\newcommand{\sdue}{\mathtt{s}_2}
\newcommand{\stre}{\mathtt{s}_3}
\newcommand{\spiu}{\mathtt{s}_+}
\newcommand{\tzero}{\mathtt{t}_0}
\newcommand{\tuno}{\mathtt{t}_1}
\newcommand{\w}{\mathtt{w}}
\newcommand{\robotuno}{r^{(1)}}  
\newcommand{\robotdue}{r^{(2)}}
\newcommand{\Fcinque}{\mathcal{F}5} 
\newcommand{\FcinqueS}{ \fcinque \wedge \neg\w} 
\newcommand{\FcinqueE}{ \Funo_s \vee \Fdue_s \vee \Fquattro_s \vee \w}
\newcommand{\T}{\mathcal{T}} 
\newcommand{\TS}{ \bzero \wedge \neg\c \wedge \buno \wedge \neg \zuno} 
\newcommand{\TE}{ \U_s \vee \W_s}
\newcommand{\U}{\mathcal{U}} 
\newcommand{\US}{ \zuno  } 
\newcommand{\UE}{ } 
\newcommand{\Vuno}{\mathcal{V}1} 
\newcommand{\VunoS}{ \bzero \wedge (\neg\c \vee \neg\uuno) \wedge \bdue}
\newcommand{\VunoE}{\W_s}
\newcommand{\Vdue}{\mathcal{V}2} 
\newcommand{\VdueS}{ \bzero \wedge (\neg\c \vee \neg\udue) \wedge \btre}
\newcommand{\VdueE}{\W_s}
\newcommand{\Vtre}{\mathcal{V}3} 
\newcommand{\VtreS}{ \bzero \wedge \neg\c  \wedge \bquattro}
\newcommand{\VtreE}{\W_s}
\newcommand{\Vquattro}{\mathcal{V}4} 
\newcommand{\VquattroS}{ \bzero \wedge \neg\c  \wedge \bcinque}
\newcommand{\VquattroE}{ \Vdue_s }
\newcommand{\W}{\mathcal{W}} 
\newcommand{\WS}{ \zdue \wedge \neg \Vuno_s \wedge \neg \Vdue_s}
\newcommand{\WE}{ }
\newcommand{\Funo}{\mathcal{F}1}  
\newcommand{\FunoS}{\neg \fdue \wedge \neg \ftre \wedge \neg \fquattro \wedge \neg \fcinque \wedge \neg\w} 
\newcommand{\FunoE}{ \Fdue_s \vee \Ftre_s \vee \Fquattro_s \vee \Fcinque_s \vee \w}
\newcommand{\A}{\mathcal{A}} 
\newcommand{\Auno}{\mathcal{A}1} 
\newcommand{\AunoS}{\spiu \wedge \l}
\newcommand{\AunoE}{\A_s \vee \C_s \vee \D_s \vee \Euno_s}
\newcommand{\Adue}{\mathcal{A}2} 
\newcommand{\AdueS}{ (\spiu \vee \stre) \wedge \neg\l }
\newcommand{\AdueE}{\A_s \vee \C_s \vee \D_s \vee \Euno_s}
\newcommand{\B}{\mathcal{B}} 
\newcommand{\BS}{\sdue \wedge \mzero \wedge \neg\muno }
\newcommand{\BE}{\C_s \vee \D_s}
\newcommand{\C}{\mathcal{C}} 
\newcommand{\Cuno}{\mathcal{C}1} 
\newcommand{\CunoS}{\stre \wedge \tzero \wedge \l}
\newcommand{\CunoE}{\Cdue_s \vee \D_s \vee \Euno_s}
\newcommand{\Cdue}{\mathcal{C}2} 
\newcommand{\CdueS}{ \stre \wedge \neg\tzero \wedge \neg\tuno  \wedge \l \wedge \mzero \wedge \neg\muno}
\newcommand{\CdueE}{\D_s}
\newcommand{\D}{\mathcal{D}} 
\newcommand{\DS}{\stre \wedge \tuno \wedge \l \wedge \mzero \wedge \neg\muno}
\newcommand{\DE}{\Euno_s}
\newcommand{\E}{\mathcal{E}} 
\newcommand{\Euno}{\mathcal{E}1} 
\newcommand{\EunoS}{\stre \wedge \neg \tzero \wedge \l \wedge ( \mzero \Rightarrow \muno )}
\newcommand{\Edue}{\mathcal{E}2} 
\newcommand{\EdueS}{ \sdue \wedge \neg\l \wedge ( \mzero \Rightarrow \muno ) }
\newcommand{\Fdue}{\mathcal{F}2} 
\newcommand{\FdueS}{\fdue \wedge \neg \ftre \wedge \neg \fquattro \wedge \neg \fcinque \wedge \neg\w} 
\newcommand{\FdueE}{ \Ftre_s \vee \Fquattro_s \vee \w}
\newcommand{\G}{\mathcal{G}} 
\newcommand{\Guno}{\mathcal{G}1} 
\newcommand{\GunoS}{\neg\gzero \wedge  (\c \Rightarrow \muno)  }
\newcommand{\GunoE}{\Gdue_s}
\newcommand{\Gdue}{\mathcal{G}2} 
\newcommand{\GdueS}{\gzero \wedge \neg\guno \wedge (\c \Rightarrow \muno)}
\renewcommand{\H}{\mathcal{H}} 
\newcommand{\HS}{\c \wedge \neg\mzero }
\newcommand{\HE}{\Gdue_s}
\newcommand{\Ftre}{\mathcal{F}3} 
\newcommand{\FtreS}{ \ftre \wedge \neg \fquattro \wedge \neg \fcinque \wedge \neg\w} 
\newcommand{\FtreE}{ \Fquattro_s}
\newcommand{\M}{\mathcal{M}} 
\newcommand{\MS}{\dzero \wedge \neg\duno }
\newcommand{\ME}{\O_s}
\newcommand{\N}{\mathcal{N}} 
\newcommand{\NS}{(\dzero \Rightarrow \duno) \wedge  \ddue }
\newcommand{\NE}{\M_s \vee \O_s}
\renewcommand{\O}{\mathcal{O}} 
\newcommand{\OS}{(\dzero \Rightarrow \duno) \wedge \neg \ddue}
\renewcommand{\OE}{\N_s \vee \O_s}
\newcommand{\Fquattro}{\mathcal{F}4} 
\newcommand{\FquattroS}{\fquattro \wedge \neg \fcinque \wedge \neg\w} 
\newcommand{\FquattroE}{ \w }
\renewcommand{\P}{\mathcal{P}} 
\newcommand{\Puno}{\mathcal{P}1} 
\newcommand{\PunoS}{\neg \q \wedge \iuno}
\newcommand{\PunoE}{\Pdue_s}
\newcommand{\Pdue}{\mathcal{P}2} 
\newcommand{\PdueS}{\neg \q \wedge \idue}
\newcommand{\Q}{\mathcal{Q}} 
\newcommand{\Quno}{\mathcal{Q}1} 
\newcommand{\QunoS}{\q \wedge \itre }
\newcommand{\QunoE}{\Qdue_s \vee \Qtre_s}
\newcommand{\Qdue}{\mathcal{Q}2} 
\newcommand{\QdueS}{\q \wedge \iquattro}
\newcommand{\QdueE}{\Qtre_s}
\newcommand{\Qtre}{\mathcal{Q}3} 
\newcommand{\QtreS}{\q \wedge \icinque}
\newcommand{\QtreE}{\Qquattro_s}
\newcommand{\Qquattro}{\mathcal{Q}4} 
\newcommand{\QquattroS}{\q \wedge \isei}
\begin{document}
\title{Asynchronous Arbitrary Pattern Formation: the effects of a rigorous approach%
\thanks{The work has been supported in part by the European project  
        ``Geospatial based Environment for Optimisation Systems Addressing Fire
        Emergencies'' (GEO-SAFE), contract no. H2020-691161, and by the Italian 
        National Group for Scientific Computation (GNCS-INdAM).}
}

\author{Serafino Cicerone \and
        Gabriele Di Stefano \and
        Alfredo Navarra
}

\authorrunning{S. Cicerone, G. Di Stefano, A. Navarra} % if too long for running head

\institute{S. Cicerone \at
           Department of Information Engineering, Computer Science and Mathematics -   
           University of L'Aquila, Via Vetoio, I--67100, L'Aquila, Italy.
           \email{serafino.cicerone@univaq.it}
           \and
           G. Di Stefano \at
           Department of Information Engineering, Computer Science and Mathematics -   
           University of L'Aquila, Via Vetoio, I--67100, L'Aquila, Italy.
           \email{gabriele.distefano@univaq.it}
           \and
           A. Navarra \at
           Department of Mathematics and Computer Science, University of Perugia, 
           Via Vanvitelli 1, I--06123, Perugia, Italy.\\
           \email{alfredo.navarra@unipg.it}           
}

\date{Received: date / Accepted: date}

\maketitle

%--------------------------------
\begin{abstract}

Given any multiset $F$ of points in the Euclidean plane and a set $R$ of robots such that $|R|=|F|$, the Arbitrary Pattern Formation (\apf) problem asks for a distributed algorithm that moves robots so as to reach a configuration similar to $F$. Similarity means that robots must be disposed as $F$ regardless of translations, rotations, reflections, uniform scalings. 
Initially, each robot occupies a distinct position. When active, a robot operates in standard Look-Compute-Move cycles. Robots are asynchronous, oblivious, anonymous, silent and execute the same distributed algorithm. 
So far, the problem has been mainly addressed by assuming chirality, that is robots share a common left-right orientation. We are interested in removing such a restriction.

While working on the subject, we faced several issues that required close attention. We deeply investigated how such difficulties were overcome in the literature, revealing that crucial arguments for the correctness proof of the existing algorithms have been neglected. The systematic lack of rigorous arguments with respect to necessary conditions required for providing correctness proofs deeply affects the validity as well as the relevance of strategies proposed in the literature. 

Here we design a new deterministic distributed algorithm that fully characterizes \apf\ showing its equivalence with the well-known \emph{Leader Election} problem in the asynchronous model without chirality.
Our approach is characterized by the use of logical predicates in order to formally describe our algorithm as well as its correctness. In addition to the relevance of our achievements, our techniques might help in revising previous results. In fact, it comes out that well-established results like [Fujinaga et al., \emph{SIAM J. Comp.} 44(3) 2015], more recent approaches like [Bramas et al., \emph{PODC} and \emph{SSS} 2016] and ‘unofficial' results like [Dieudonn{\'{e}} et al., \emph{arXiv:0902.2851}] revealed to be not correct.  
\keywords{Distributed Algorithms \and Mobile Robots \and Asynchrony \and Pattern Formation}
\end{abstract}
%--------------------------------

%---------- files .tex  
\section{Introduction}
In distributed computing, one of the most studied problem is certainly the \emph{Pattern Formation} (\pf) which is strictly related to \emph{Consensus} and \emph{Leader Election}. Given a team of robots (agents or entities) and a geometric pattern in terms of points in the plain with respect to an ideal coordinate system, the goal is to design a distributed algorithm that works for each robot to guide it so that eventually all robots together form the pattern if possible. As the global coordinate system might be unknown to the robots, a pattern is declared formed as soon as robots are disposed similarly to the input pattern, that is regardless of translations, rotations, reflections, uniform scalings. 

The \pf\ problem has been largely investigated in the last years under different assumptions. 
Here we refer to the standard \emph{Look}-\emph{Compute}-\emph{Move} model.
When active, a robot operates in Look-Compute-Move (LCM) cycles.
In one cycle a robot takes a snapshot of the current global configuration 
(Look) in terms of robots' positions according 
to its own coordinate system. Successively, in the Compute phase it decides 
whether to move toward a specific target or not, and in the positive case 
it moves (Move).

Different characterizations of the environment consider whether robots are 
fully-synchronous, semi-synchronous (cf.~\cite{SY99,YS10,YUKY17}) or asynchronous (cf.~\cite{BCM16,DFSY15,FYOKY15,FPSW08,GM10}):

\begin{itemize}
\item \emph{Fully-synchronous} (\fsync): The \textit{activation} phase (i.e.
the execution of a Look-Compute-Move cycle) of all robots can be logically divided into
global rounds. In each round all the robots are activated, obtain the same
snapshot of the environment, compute and perform their move. 
\item \emph{Semi-synchronous} (\ssync): It coincides with the \fsync model,
with the only difference that not all robots are necessarily activated in each
round.
\item \emph{Asynchronous} (\async): The robots are activated independently,
and the duration of each phase is finite but
unpredictable. As a result, robots do not have a common notion of time.
Moreover, they can be seen while moving, and computations can be made based on
obsolete information about positions.
\end{itemize}

One of the latest and most important results for \pf, see~\cite{FYOKY15}, solves the problem for robots endowed with few capabilities. Initially, no robots occupy the same location, and they are assumed to be:
\begin{itemize}
\item{Dimensionless}: modeled as geometric points in the plane;
\item{Anonymous}: no unique identifiers;
\item{Autonomous}: no centralized control;
\item{Oblivious}: no memory of past events;
\item{Homogeneous}: they all execute the same \emph{deterministic} algorithm;
\item{Silent}: no means of direct communication;
\item{Asynchronous}: there is no global clock that synchronizes their actions;
\item{Non-rigid}: robots are not guaranteed to reach a destination within one move;
\item{Chiral}: they share a common left-right orientation.
\end{itemize}

Since \async robots are considered, it is worth to remark they are activated independently, and the duration of each phase is finite but unpredictable, i.e., the time between Look, Compute, and 
Move phases is finite but unbounded. This is decided by an adversary for each 
robot and for each phase. As a result, robots do not have a common notion of time. Moreover, they can be seen while moving even though a robot does not perceive whether other 
robots are moving or not. It follows that the computations and hence the moves performed by robots can be made on the base of obsolete information about robots' positions as these can refer to outdated perceptions. In fact, due to asynchrony, by the time a robot takes a snapshot of the configuration, 
this might have drastically changed when it starts moving. The adversary determining 
the Look-Compute-Move cycles timing is assumed to be fair, that is, each robot 
becomes active and performs its cycle within finite time and infinitely often. 

During the Look phase, robots can perceive \emph{multiplicities}, that is whether a same point
is occupied by more than one robot, and how many robots compose a multiplicity.

The distance traveled within a move is neither infinite nor infinitesimally small. 
More precisely, the adversary has also the power to stop a moving robot before it 
reaches its destination, but there exists an unknown constant $\nu > 0$ such 
that if the destination point is closer than $\nu$, the robot will reach it, 
otherwise the robot will be closer to it of at least $\nu$. Note that, without 
this assumption, an adversary would make impossible for any robot to ever reach 
its destination.

The main open question left in~\cite{FYOKY15} within \async\ concerns the resolution of the more general \pf\ problem in the described setting but without chirality. 
So far, the only sub-problems solved within the weakest setting are the gathering problem~\cite{CFPS12}, where all robots must move toward a common point, and the circle-formation problem~\cite{FPSV14,MV16}, where $n$ robots must form a regular $n$-gon. 

Another interesting question represents the so-called \emph{Arbitrary Pattern Formation} (\apf), that is, from which initial configurations it is possible to form \emph{any} pattern? In~\cite{DPV10}, \apf\ has been solved for a number $n\geq 4$ of asynchronous robots with chirality but excluding patterns with multiplicities. The answer to this restricted setting for \apf\ provided a nice characterization of the problem that was shown to be equivalent to Leader Election within the same set of assumptions. In particular, the configurations from which the proposed algorithm could output any pattern (without multiplicities) are the so-called \emph{leader configurations}. These are configurations of robots from which it is possible to elect a leader. Clearly, if one provides an algorithm to solve \apf for some input configurations, then also \pf can be solved for all such configurations.

The contribution of this work is threefold: 
(1) to provide counter-examples to the correctness of algorithms for the \pf\ and \apf\ problems proposed in well-established and in recent papers;
(2) to solve \apf\ (and consequently \pf) with any number of asynchronous robots without chirality when the initial configurations are leader configurations, hence generalizing the equivalence of \apf and Leader Election also within our setting for any number of robots without chirality, including patterns with multiplicities;
(3) to use a rigorous approach for handling problems in the asynchronous environment able to provide accurate arguments to state the correctness of the designed algorithms. 

\subsection{Motivation and related work}
Starting with~\cite{CDN16}, we kept on studying \pf\ without chirality but we faced several issues mainly related to asynchrony that required close attention.  Since the main difficulties were not depending on the lack of chirality, we deeply investigated how such problems were overcome in the literature. In particular, we closely explored~\cite{FYOKY15} -- that can be considered as a milestone in the advancement of \pf\ study within \async\ -- and we found that surprisingly our difficulties with \pf\ have not been addressed at all.
Actually, we are able to devise fundamental counter-examples to the correctness of~\cite{FYOKY15}. We contacted the authors of~\cite{FYOKY15} and they confirmed us that the provided counter-example and most importantly the rationale behind it represents a main issue that requires deep investigation. 

In order to catch a first idea of our findings, it is necessary to understand what should be carefully analyzed when dealing with asynchrony and robots that do not share a common coordinate system. The first problem arising when approaching \pf\ is where robots should form the pattern $F$, that is how to embed $F$ on the area occupied by robots so as each point of $F$ can be seen as a ‘landmark'.
Both in~\cite{FYOKY15} and in our strategy, the algorithms are logically divided into phases. Usually the first phase is devoted to move a few of robots in such a way the embedding becomes easy. Then, there is an intermediate phase where all robots but those placed in the first phase are moved in order to form $F$. Finally there is a third phase where the ‘special' robots are moved to finalize $F$. While in the second phase it is relatively easy to move robots to partially form $F$ because the embedding is well defined, this is not the case for the other two phases. For instance, if not carefully managed, it may happen that during a move the configuration changes its membership to a different phase, especially from the first phase. 
In order to provide the correctness proof of an algorithm under the sketched scheme, it is not sufficient to define some invariants that exclusively define the membership of a configuration to a phase (as sometimes has been done in the literature), but it is mandatory to prove that the defined moves cannot change the membership of the current configuration \emph{while robots are moving}. 
In the \async model a change of membership is possible, if not carefully considered, as robots can be seen while moving. If such a situation happens, other robots ‘believing to be in a different phase' may start moving and then the situation becomes sometimes intractable or even they prevent the algorithm to accomplish the \pf. Unfortunately this is the case for~\cite{FYOKY15}, where the authors neglected such situations, and we can provide counter-examples where their algorithm fails. It is worth to remark, that the systematic lack of arguments with respect to such events completely invalidates the correctness and the relevance of the strategy proposed in~\cite{FYOKY15}. Moreover, it is not possible to recover the algorithm with some easy patches as it requires structural intervention. 

On this base, and in order to better understand the problem, we started restricting our attention to the case where initial configurations are asymmetric. An asymmetric configurations of robots is meant as a disposal on the Euclidean plane not admitting axis of symmetries nor rotations. Even if solving this ‘restricted' version of the \pf\ problem may be perceived to be an easy task, 
the analysis of the literature revealed the following state of the art:
\begin{itemize}
\item
A first study can be found in~\cite{GM10}. The authors provide a distributed algorithm that can form many patterns, subject to significant restrictions in the input configurations. One of such constraints almost coincides with asymmetry, but it is not the only one. Unfortunately, the way it is presented does not allow to exactly specify from which kind of configurations robots can start and which patterns are formable. 
\item
A formal characterization that includes asymmetric configurations has been conducted in~\cite{DPV10-2,DPV10}. Actually the problem considered is \apf and the authors show the equivalence of \apf and the Leader Election problems under some circumstances. 
Still the results are based on chirality for a number of robots greater than or equal to 4. Moreover, the patterns considered are not exactly ‘arbitrary' because they do not allow multiplicities, that is the points of any given pattern to be formed by $n$ robots compose a set of $n$ distinct elements. In our study, as well as in~\cite{FYOKY15}, patterns may allow multiplicities, and this deeply affects the design of resolution algorithms. 
\item
Other approaches found in the literature to solve \pf\ are probabilistic, see~\cite{BT15,BT16b,YY14}. In particular, in~\cite{BT15,BT16b} the authors claim to solve \apf\ (and hence \pf) with a strategy that is divided into two main phases: the first phase is probabilistic and is used to make asymmetric the input configuration; the second phase is deterministic and solves \pf\ from asymmetric configurations even without assuming any form of multiplicity detection. Basically, the strategy in the second phase solves \apf from asymmetric configurations.  
Unfortunately, the proposed strategy suffers of some lack of rigorous arguments supporting the correctness of the proposed algorithm. We are able to provide counter-examples that affect the rationale behind the proposed strategy, and then also in this case the proposed algorithm is not correct. An extended version of those papers can be found in~\cite{BT16}.
\item
Further ‘unofficial' results can be found in~\cite{CGJM14,DPVarx09}. Concerning~\cite{CGJM14}, not all patterns are considered but only asymmetric ones. Concerning~\cite{DPVarx09}, the authors claim to solve \apf  by slightly modifying the results of~\cite{DPV10}. There are three main issues about this paper. 
First of all, the main proof is given by a sketchy description, whereas we show how formal arguments are extremely necessary in this context. In fact, we are able to provide counter-examples that invalidate the proposed strategy, hence discovering a further algorithm which is not correct.

Secondly, the patterns considered in~\cite{DPVarx09} do not allow multiplicities. In our study, as well as in~\cite{FYOKY15}, patterns may allow multiplicities, and this deeply affects the design of resolution algorithms. In fact, also the gathering becomes a sub-problem of  \pf\ where it is required to solve the so-called \emph{point formation}. It is well-known how difficult is the resolution of the gathering problem~\cite{CFPS12}. Finally, the minimum number of robots required by the proposed algorithm is $5$. Note that in robot based computing systems, instances with small numbers of robots usually require very different and non-trivial arguments with respect to the general algorithm. This is the case for instance for the square formation~\cite{MV16} and for gathering both in the Euclidean plane~\cite{CFPS12} or in discrete rings~\cite{DDN12tr,DDN11,DMN15,BPT16} and grids~\cite{DDKN12}.
\end{itemize}
Such an analysis along with a useful interaction with anonymous referees, motivated us to study not only asymmetric configurations but also leader configurations, in order to see whether \apf is equivalent to Leader Election also in our more general setting.

Nevertheless, our investigation inspired the third aim of this paper, that is to provide a rigorous approach for designing algorithms in \async. The need of new ways of expressing algorithm in \async is widely recognized. For instance, in~\cite{BLP16,DBO17,MPST14} a formal model to describe mobile robot protocols under synchrony and asynchrony assumptions is provided. So far, these only concern robots operating in a discrete space i.e., with a finite set of possible robot positions. 

\subsection{Our results}
We provide fundamental arguments affecting the correctness of~\cite{FYOKY15},~\cite{BT15,BT16b} and~\cite{DPVarx09}.  
The relevance of our finding is given not only by the fact that we re-open the \pf\ and the \apf\ problems in some \async\ contexts, but also that possibly other tasks may suffer of the same arguments. It follows that problems considered already solved, like for \pf\ in the case of robots with chirality or like \apf\ in the case of robots without chirality, must be carefully revised. We show that the resolution of such problems might be really difficult, especially for providing correctness proofs with an adeguate level of formalism. 

We fully characterize the \apf\ problem (and hence \pf) from initial leader configurations, that is for any number of robots we can form any pattern, including symmetric ones and those with multiplicities. Since we do not assume chirality, symmetries to be considered for the robots and for the patterns are not only rotations as in~\cite{FYOKY15,DPV10-2,DPV10}, but also reflections and symmetries due to multiplicities. 
The proposed algorithm shows the equivalence between \apf and Leader Election within our setting, generalizing previous results in terms of number of robots that now can be any, in terms of robots capability as we removed the chirality assumption, and in terms of formable patterns that now can include multiplicities and symmetries. Our main result can be stated as follows:

\begin{theorem}\label{th:main1}
Let $R$ be an initial configuration of \async robots without chirality. \apf is solvable from $R$ if and only if Leader Election is solvable in $R$.
\end{theorem}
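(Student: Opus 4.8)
The plan is to establish the two implications separately. The direction ``\apf\ solvable $\Rightarrow$ Leader Election solvable'' is a reduction and is the easier half; the converse, ``Leader Election solvable $\Rightarrow$ \apf\ solvable'', requires designing and analysing a concrete distributed algorithm and is where the real difficulty lies.

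For the reduction I would fix once and for all a pattern $F^\star$ with $|F^\star|=|R|$ that is \emph{totally asymmetric}: all its points are distinct and its symmetry group is trivial, so that no nontrivial rotation or reflection maps $F^\star$ onto itself. Since \apf\ is assumed solvable from $R$, feeding $F^\star$ to the \apf\ algorithm drives the robots into a configuration similar to $F^\star$. Because $F^\star$ has trivial symmetry group, the matching between the robots and the points of $F^\star$ realising the similarity is unique, so every robot can sort the occupied positions according to a rotation- and reflection-invariant encoding of the whole configuration and declare the unique minimum in this order to be the leader. The absence of chirality causes no problem here precisely because $F^\star$ admits no reflection that could swap the roles of two robots; all robots therefore agree on the same leader, and the composite strategy ``run \apf\ on $F^\star$, then elect the canonical robot'' solves Leader Election from $R$.

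For the converse, assuming a leader can be elected, I would build the algorithm as a short sequence of phases, each described by a logical predicate over configurations that serves as its membership test. A first, \emph{reference}, phase moves a constant number of designated robots so as to induce an unambiguous embedding of the target $F$ onto the region occupied by the robots; because chirality is unavailable the embedding must leave no residual symmetry --- it must fix origin, scale, and orientation --- so the phase uses the elected leader together with a few auxiliary robots to destroy every remaining rotation and reflection. A second, \emph{formation}, phase then routes the remaining robots, one at a time, to their assigned positions in the embedded $F$ through a collision-free matching between free robots and pattern points, multiplicities being handled by allowing several robots to be directed to a common point. A final phase moves the special robots onto their own targets to complete $F$. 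For each phase I would give an entry and an exit predicate and prove both progress --- some robot always has a legal move that decreases a suitable potential --- and termination.

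The main obstacle is proving correctness of the phase transitions under asynchrony, exactly the point the paper singles out as neglected in the literature. It is not enough that the source and the target of each commanded move satisfy the intended predicate: one must show that every intermediate configuration traversed during the move --- including configurations that a robot may observe through a stale snapshot --- still satisfies the phase invariant, so that no robot is ever tricked into executing a move belonging to a different phase. Concretely, for every move I would verify that the segment from source to target stays inside the region cut out by the current predicate and never re-enters an earlier phase, and that a robot computing on obsolete information nonetheless computes a target consistent with the up-to-date phase. This is aggravated by the lack of chirality (a surviving reflection axis can make a robot unable to tell its target from its mirror image, which is why the reference phase must eliminate all axes before formation begins) and by multiplicities (merging robots into a single point is irreversible and must be scheduled only after the embedding is locked). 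I therefore expect the core of the argument to be a systematic case analysis over the predicates establishing this invariance, which is precisely what the predicate-based formalism is designed to support.
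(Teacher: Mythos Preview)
Your proposal is correct and follows essentially the same route as the paper: the reduction via a totally asymmetric target pattern is exactly the argument the paper invokes (it uses the specific collinear pattern with one distinguished endpoint, but any asymmetric $F^\star$ works), and your phase decomposition into reference/formation/finalisation, governed by logical predicates and requiring that every intermediate configuration stay inside its phase, is precisely the paper's strategy (its phases $\Funo$--$\Fcinque$ and its notion of a \emph{transition-safe} algorithm). The only items your plan does not mention explicitly are the degenerate subcases the paper carves off separately --- point formation is delegated to the known Gathering algorithm, and $n\le 3$ robots are handled by an ad-hoc argument --- and a dedicated symmetry-breaking phase for leader configurations that are nevertheless symmetric (one reflection axis with robots on it, or a rotation with the centre occupied); these fit naturally into what you call the reference phase, but in the paper they require their own careful treatment because moving the leader off the axis or centre can accidentally create \emph{new} symmetries.
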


To reach this result, we design our algorithm according to a rigorous approach. Such an approach is based on basic predicates that composed in a Boolean logic way provides all the invariants needed to be checked during the execution of the algorithm. Differently to previous approaches used in the literature, we make a careful use of invariants to describe properties holding during the movements of robots. As already observed, the technique of specifying formal invariants to define the different phases of an algorithm is something that other authors have adopted. However in this paper the level of details reached to describe every single move and the corresponding trajectory is something new.
In turn, this implies that for each single move the algorithm may require three different invariants (to describe properties at the start, during, and at the end of the move). Hence, our algorithm is organized as a set of moves, each associated to up to three invariants. Moves are grouped and associated to a phase, where a phase represents a general task of the algorithm. Summarizing, the approach leads to a greater level of detail that provides us rigorous arguments to state the correctness of the algorithm.  This approach itself represents a result of this paper, as it highlights crucial properties in \async\ contexts that so far have been underestimate in the literature. 

As further remarks, it is worth to note that differently from~\cite{FYOKY15}, we do not require that the local coordinate system specific of a single robot remains the same among different Look-Compute-Move cycles. Moreover, the trajectories traced during a move specified by our algorithm are always well-defined either as straight lines or as rotations along specified circles. 

Finally, our algorithm does not require to specify the pattern to be formed as a set of coordinates in a Cartesian system. There are two possible options. As in~\cite{GM10}, the pattern can be specified as a list of ratios of its sides and angles between the sides. Another option is to provide the list of distances among all points. In both cases, each robot can locally evaluate a possible set of points consistent with the input and its local coordinate system. Clearly, it turns out that doing this way robots do not share the same information about the pattern but each one acquires its own representation.

\subsection{Outline}
In the next section, some further details on the considered robots' model are provided. In Section~\ref{sec:notation}, useful notation and definitions are introduced. Section~\ref{sec:strategy} provides a first description of our strategy to solve \apf in \async without chirality and starting from initial leader configurations. Section~\ref{sec:counter-ex} contains a description of some counter-examples about previous strategies for \pf\ and \apf\ in \async, hence re-opening some of these problems. Section~\ref{sec:algorithm}, provides our new distributed algorithm. It is given in terms of various sub-phases where different moves are performed. The correctness of the algorithm is given in Section~\ref{sec:correctness}. Finally, Section~\ref{sec:conclusion} concludes the paper. 

\section{Robot Model}
The robot model is mainly borrowed from~\cite{FYOKY15} and~\cite{CDN16}. We consider a system composed by a set $\mathcal R$ of $n$ mobile \textit{robots}. Let $\R$ be the set of real numbers, at any time the multiset $R=\{r_1,r_2,\ldots,r_n\}$, with $r_i\in \mathbb{R}^2$,  contains the \textit{positions} of all the robots. 

We arbitrarily fix an $x$-$y$ coordinate system $Z_0$ and call it the \emph{global coordinate system}. A robot, however, does not have access to it: it is used only for the purpose of description, including for specifying the input. All actions taken by a robot are done in terms of its local $x$-$y$ coordinate system, whose origin always indicates its current position. Let $r_i(t)\in \R^2$ be the location of robot $r_i$ (in $Z_0$) at time $t$. Then a multiset $R(t) = \{r_1(t), r_2(t), \ldots, r_n(t)\}$ is called the \emph{configuration} of $R$ at time $t$ (and we simply write $R$ instead of $R(t)$ when we are not interested in any specific time). 

A robot is said to be \emph{stationary} in a configuration $R(t)$ if at time $t$ it is:
\begin{itemize}
\item inactive, or
\item active, and:
\begin{itemize}
	\item it has not taken the snapshot yet;
	\item it has taken snapshot $R(t)$;
	\item it has taken snapshot $R(t')$, $t'<t$, which leads to a null movement.
\end{itemize}
\end{itemize}

\noindent
A configuration $R(t)$ is said to be  \emph{stationary}\footnote{The definition of stationary robot provided in~\cite{FYOKY15} is slightly different but also inaccurate. In fact, it does not catch the third scenario about active robots described by our definition. If removing such a case, no configuration might be declared stationary during an execution. Similarly, the definition of \emph{static} robot from~\cite{BT16b} does not describe for instance the case where a robot is not moving but has already performed the Look phase.} if all robots are stationary in $R(t)$. 

If two or more robots occupy the same position, then there is an element $r\in R$ occurring more than once. In such a case $r$
is said to belong to (or compose) a \emph{multiplicity}.

\begin{definition}
	A configuration $R$ is said \emph{initial} if it is stationary and all elements in $R$ are distinct, that is, no multiplicity occurs.
\end{definition}

Each robot $r_i$ has a local coordinate system $Z_i$, where the origin always points to its current location.  
Let $Z_i(p)$ be the coordinates of a point $p\in \R^2$ in $Z_i$. If $r_i$ takes a time interval $[t_0,t_1]$ for performing the Look phase, then it obtains a multiset $Z_i(R(t)) = \{Z_i(r_1(t)),Z_i(r_2(t)),...,Z_i(r_n(t))\}$ for some $t\in [t_0, t_1]$, where $Z_i(r_i(t)) = (0, 0)$. That is, $r_i$ has the (strong) multiplicity detection ability and can count the number of robots sharing a location. More generally, if $P$ is a multiset of points, for any $x$-$y$ coordinate system $Z$, by $Z(P)$ we denote the multiset of the coordinates $Z(p)$ in $Z$ for all $p \in P$.

Let $\{t_i : i = 0,1,\ldots\}$ be the set of time instants at which a robot takes the snapshot $R(t_i)$ during the Look phase. Without loss of generality, we assume $t_i = i$ for all $i = 0,1,\ldots$. Then, an infinite sequence $\Ex : R(0),R(1),\ldots$ is called an execution with an initial configuration $I = R(0)$ that by definition is stationary and without multiplicities. Actually, depending on the algorithm, multiplicities may be created in $R(i)$, with $i>0$. 

Let $P_1$ and $P_2$ be two multisets of points: if $P_2$ can be obtained from $P_1$ by  translation, rotation, reflection, and uniform scaling, then $P_2$ is \emph{similar} to $P_1$. Given a pattern $F$ expressed as a multiset $Z_0(F)$, an algorithm $A$ \emph{forms} $F$ from an initial configuration $I$ if for any execution $\Ex : R(0)(= I),R(1),R(2),\ldots$, there exists a time instant $i>0$ such that $R(i)$ is similar to $F$ and no robots move after $i$, i.e., $R(t) = R(i)$ hold for all real numbers $t\ge i$. 
 
Unlike the initial configuration, in general, not all robots are stationary in $R(i)$ when $i>0$, but at least one robot that takes the snapshot $R(i)$ is stationary by definition. 
Whether or not a given configuration $R$ is stationary (or a robot is stationary at $R$) depends not only on $R$ but also on the execution history, in general. Let $\Ex:R(0),R(1),\ldots, R(f)$ and $\Ex' : R'(0),R'(1),\ldots$ be two executions, and assume $R(f) = R'(0)$. Then $\Ex\Ex' : R(0),R(1),\ldots,R(f)(= R'(0)),R'(1),\ldots$ is always a correct execution for \ssync\ (and hence for \fsync) robots since $R(f)$ is stationary by the definition of  \ssync\ robots. However, this is not the case for  \async\ robots, since in $\Ex'$, $R'(0)$ is assumed to be stationary, but in $\Ex$, $R(f)$ may not be; the transition from $R'(0)$ to $R'(1)$ may be caused by the Look of a robot $r$ which is moving at $R(f)$ and hence cannot observe $R(f)$.
 If an algorithm can guarantee that $R(f)$ is stationary, like for \ssync robots, then we can safely concatenate $\Ex$ and $\Ex'$ to construct a legitimate execution even for \async robots. An execution fragment that starts and ends at a stationary configuration is called a \emph{phase}.

Consider an execution of an algorithm $A$, and assume that at a given time $t$ algorithm $A$ produces a stationary configuration $R(t)$. As observed in the Introduction, $A$ should use some invariants to decide which move to apply to $R(t)$. According to the invariants, a given move is then applied to a subset $S$ of robots in $R(t)$ in order to produce a new stationary configuration. In this general case, a phase is realized by the move, and we could say that the phase is a consequence of the invariants verified at $R(t)$. 

In order to provide the correctness proof for $A$, it is not sufficient to define the invariants associated to a phase, but it is mandatory to prove that the defined moves do not affect the stationarity of robots not belonging to $S$. In fact, if such a situation happens, robots not in $S$ may ‘believe to be in a different phase’ and hence may start moving (possibly producing intractable cases). 

For addressing correctly this problem, we introduce the following definition of \emph{transition-safe algorithm}.

\begin{definition}\label{def:safeness} Let $R$ be a configuration and $S\subseteq R$ be the robots allowed to move according to a move $m$ dictated by algorithm $A$. Let $[t_1, t_2]$ be any time interval in which some robots in $S$ are moving but at least one has not yet reached its target, and let $R'$ be any configuration observed during $[t_1,t_2]$. \\ We say that $m$ is \emph{safe} if in $R’$ algorithm $A$ allows only robots in $S$ to move; $A$ is \emph{transition-safe} if each move in $A$ is safe.
\end{definition}

Note that, in the situation described by the above definition, if $R'$ is stationary and has been obtained from $R$ by means of move $m$, necessarily the adversary has stopped some moving robots in $S$ (while the other moving robots have reached their targets). From there, if $A$ is transition-safe then either move $m$ is again performed by robots in $S$ or still robots in $S$ move according to a move $m'\neq m$ (possibly toward different targets).

%-------------------------------------------------------------
\section{Notation and basic properties}\label{sec:notation}
Given two distinct points $u$ and $v$ in the Euclidean plane, let $d(u,v)$ denote their 
distance, let $\Line(u,v)$ denote the straight line passing through these points, 
and let $(u,v)$ ($[u,v]$, resp.) denote the open (closed, resp.) segment containing 
all points in $\Line(u,v)$ that lie between $u$ and $v$. The half-line starting at 
point $u$ (but excluding the point $u$) and passing through $v$ is denoted by $
\halfline(u,v)$. We denote by $\angolo(u,c,v)$ the angle centered in $c$ and with 
sides $\halfline(c,u)$ and $\halfline(c,v)$. The angle $\angolo(u,c,v)$ is measured 
from $u$ to $v$ in clockwise or counter-clockwise direction, the measure is 
always positive and ranges from $0$ to less than $360$ degrees, and the direction in 
which it is taken will be clear by the context.

Given an arbitrary multiset $P$ of points in $\R^2$, $C(P)$ and $c(P)$ denote the smallest enclosing circle of $P$ and its center, respectively. 
Let $C$ be any circle concentric to $C(P)$. 
We say that a point $p\in P$ is \emph{on} $C$ if and only if $p$ is on the circumference of $C$; $\partial C$ denotes all the points of $P$ that are on $C$. We say that a point $p\in P$ is \emph{inside} $C$ if and only if $p$ is in the area enclosed by $C$ but not in $\partial C$; $\Int(C)$ denotes all the points inside $C$. The radius of $C$ is denoted by $\delta(C)$.  
The smallest enclosing circle $C(P)$ is unique and can be computed in linear time~\cite{M83}. A useful characterization of $C(P)$ is expressed by the following property.

\begin{property}\cite{Welz91}\label{prop1} $C(P)$ passes either through two of the points of $P$ that are on the same diameter (antipodal points), or through at least three points. $C(P)$ does not change by eliminating or adding points to $\Int(P)$. $C(P)$ does not change by adding points to $\partial C(P)$. However, it may be possible that $C(P)$ changes by either eliminating or changing positions of points in $\partial C(P)$.
\end{property}

Given a multiset $P$, we say that a point $p\in P$ is \emph{critical} if and only if $C(P) \neq C(P\setminus \{p\})$\footnote{Note that in this work we use operations on multisets.}. It easily follows that if $p\in P$ is a critical point, then $|\partial C(P)\cap \{p\}|=1$.

\begin{property}\cite{CieliebakP02}\label{prop2} If $|\partial C(P)|\ge 4$ then there exists at least one point in $\partial C(P)$ which is not critical.
\end{property}

Given a multiset $P$, consider all the concentric circles that are centered in $c(P)$ and with at least one point of $P$ on them: $C_{\uparrow}^{i}(P)$ denotes the $i$-th of such circles, and they are ordered so that by definition $C_{\uparrow}^{0}(P) = c(P)$ is the first one, $C(P)$ is the last one, and the radius of $C_{\uparrow}^{i}(P)$ is greater than the radius of $C_{\uparrow}^{j}(P)$ if and only if $i>j$.
Additionally, $C_{\downarrow}^{i}(P)$ denotes one of the same concentric circles, but now they are ordered in the opposite direction: $C_{\downarrow}^{0}(P) = C(P)$ is the first one, $c(P)$ by definition is the last one, and the radius of $C_{\downarrow}^{i}(P)$ is greater than the radius of $C_{\downarrow}^{j}(P)$ if and only if $i<j$.

The radius of three of such circles will play a special role in the remainder: $\delta_0(P)= \delta( C_{\downarrow}^{0}(P) )$, $\delta_1(P) = \delta( C_{\downarrow}^{1}(P) )$, and $\delta_2(P)= \delta( C_{\downarrow}^{2}(P) )$ (with $\delta_1$ and $\delta_2$ equal to zero when the corresponding circles do not exist). 

\begin{definition}\label{def:scaletta}
Let $R$ be a configuration.  We  define $\delta_{0,1} = (\delta_0(R) + \delta_1(R))/2$ and $\delta_{0,2} = (\delta_0(R) + \delta_2(R))/2$, and we denote by $C^{0,1}(R)$ and $C^{0,2}(R)$ the circles centered in $c(R)$ and with radii $\delta_{0,1}$ and $\delta_{0,2}$, respectively.
\end{definition}

\begin{definition}\label{def:guard-disk}
Let $R$ be a configuration and $F$ a pattern. Assuming $C(R)=C(F)$, let $d=\delta(C_{\uparrow}^{1}(F))$.
The \emph{guard circle} $C^g(R)$ and the \emph{teleporter circle} $C^t(R)$  
are defined as the circles centered in $c(R)$ of radii equal to $d/2^i$ and $d/2^{(i-1)}$, respectively, with $i>1$ being the  minimum integer such that the following conditions hold:
\begin{itemize}
\item $\Int(C^g(R)) \setminus c(C^g(R)) = \emptyset$;
\item $|\Int(C^t(R)) \setminus \Int(C^g(R)) | \le 1$;
\item $|\partial C^t(R)| \le 1.$
\end{itemize}
\end{definition}

Circles $C^g(R)$ and $C^t(R)$ are not defined for any configuration. For a visualization see Figure~\ref{fig:guards}.
Circle $C^g(R)$ will be used by our algorithm as the place where a specific robot $g$ is moved in order to establish a common reference coordinate system by which robots can embed $F$ on the area occupied by robots. Circle $C^t(R)$, which is larger than $C^g(R)$, represents the place closest to $c(R)$ where a robot deviates in case its trajectory should traverse $C^g(R)$. Doing so, no robots can be confused with $g$.

\begin{figure}[t]
\begin{center}
  \resizebox{0.25\textwidth}{!}{\input{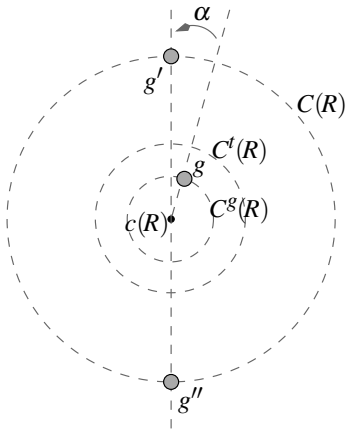}}
\caption{ A visualization of basic concepts and notation in the definition of the common reference system.}
\label{fig:guards}
\end{center}
\end{figure}

\begin{definition}\label{def:alpha}
Let $F$ be a pattern, the \emph{reference angle} $\alpha$ is defined as  the following value:
\[ \alpha = \frac 1 3\cdot \min \{ \angolo(x,c(F),y): x,y\in \partial C(F), x\neq y \}. \]
\end{definition}
Such a reference angle will be used to correctly place robot $g$ on $C^g(R)$ during the formation of pattern $F$.

%---------------------------------------------------
\subsection{View of a point and symmetries}\label{ssec:view}
We now introduce the concept of \emph{view} of a point in the plane; it can be used by robots to determine whether a configuration $R$ and/or a pattern $F$ 
is symmetric or not.

Given a set of robots $\mathcal R$, the distance $d(r,s)$ between two robots $r$, $s\in \mathcal R$ is the length of the segment connecting their positions in the plane. A map $ \varphi : \mathcal R \rightarrow  \mathcal R$ is called an \emph{isometry} or distance preserving if for any $r,s \in \mathcal R$ one has $d(\varphi(r),\varphi(s))=d(r,s)$. 
If $\mathcal R$ admits only the identity isometry, then $\mathcal R$ is said \emph{asymmetric}, otherwise it is said \emph{symmetric}. These definitions naturally  extend to the corresponding configuration $R$ of robots in $\mathcal R$, and in general to a generic multiset of points $P$. Notice that, whenever a configuration (or a pattern) contains a multiplicity, then it is symmetric, no matter where the multiplicity is. In fact, two distinct robots (or target points of a pattern) in the multiplicity can be always mapped to each other.

Let $P$ be a generic multiset of points. For $p\in P$, with $p\neq c(P)$, we denote by $V^+(p)$ the counter-clockwise view of $P$ computed from $p$. Essentially, $V^+(p)$ is a string whose elements are the polar coordinates of all points in $P$. The elements in $V^+(p)$ are arranged as follows: first $p$, then in order and starting from $c(P)$ all the points in the ray $\halfline(c(P),p)$, and finally all the points in the other rays, with rays processed in counter-clockwise fashion. Similarly, $V^-(p)$ denotes the clockwise view of $P$ computed from $p$. By assuming a lexicographic order for polar coordinates, the view of $p$ is defined as $V(p) = \min \{V^+(p),V^-(p)\}$.

If $c(P)\in P$ then $c(P)$ is said the point in $P$ of \emph{minimum view}, otherwise any $p = \argmin \{V(p'): p'\in P\}$ is said of minimum view in $P$. 
Given $P$ and $P'\subseteq P$, we use the notation $\minview(P')$ to denote any point $p$ with minimum view in $P'$. 

The possible symmetries that $P$ can admit are reflections, rotations and those due to multiplicities. 
$P$ admits a reflection if and only if there exist two points $p,q\in P$, $p,q\neq c(P)$, not necessarily distinct, such that $V^+(p) = V^-(q)$;\footnote{When chirality can be exploited, reflections can be ignored as $V^+(p)$ can be always discriminated from $V^-(q)$.}
$P$ admits a rotation if and only if there exist two distinct points $p,q\in P$, $p,q\neq c(P)$, such that $V^+(p) = V^+(q)$.
It follows that if $P$ is asymmetric then there exists a unique multipoint with minimum view. The above properties can be exploited by robots to detect whether the observed configuration during the Look phase is symmetric or not. When $P$ is symmetric, in general, it cannot be guaranteed the existence of a single point of minimum view. However, if $P$ admits exactly one axis of reflection $\ell$ with $P\cap \ell\neq\emptyset$ and at least a single point on $\ell$, then among all the single points on $\ell$ there exists one of minimum view (otherwise $P$ would admit  another axis of reflection). Similarly, if $P$ admits a rotation with $c(P)\in P$ and $c(P)$ is a single point, then $c(P)$ can be uniquely determined. In all such cases basically a \emph{leader} can be elected in $P$. 

\begin{definition}
	A configuration $R$ is said a \emph{leader configuration} if there exists a robot $r \in R$ such that for each isometry $\varphi$, $\varphi(x)=x$.
\end{definition}
	
In this work we assume any initial configuration $R(0)$ to be a leader configuration. In Figure~\ref{fig:conf-types} relationships among different kinds of configurations are shown.

\begin{figure}[h]
\begin{center}
  \resizebox{0.40\textwidth}{!}{\input{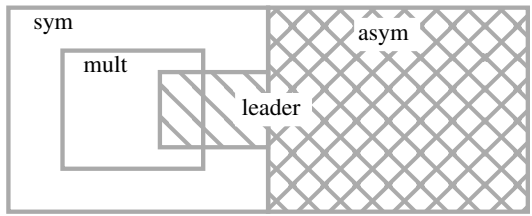}}
\caption{ A visualization of the relationships between the different kinds of configurations addressed in the paper, where ``sym'', ``asym'', ``mult'', and ``leader'' stand for symmetric configurations, asymmetric configurations, configurations with multiplicities, and leader configurations, respectively. In this work initial configurations are leader configurations without multiplicities, which include all the asymmetric configurations.}
\label{fig:conf-types}
\end{center}
\end{figure}

We now redefine the concept of clockwise (and hence of counter-clockwise) direction for a multiset of points $P$ so as to make it independent of a global coordinate system.\footnote{Indeed, this is not necessary when chirality is assumed.}
If $P$ is asymmetric and $p=\minview(P\setminus c(P))$, then the direction used to compute $V(p)$ during the analysis of all the rays starting from $c(P)$ is the \emph{clockwise direction} of $P$. If $P$ is symmetric, there might be many multipoints of minimum view. Symmetries we take care of are of three types: rotations, reflections and those due to multiplicities. If $P$ is reflexive, any direction can be assumed as the clockwise direction of $P$ since they are indistinguishable. In any other case the direction used to compute $V(p)$ from any point $p\neq c(P)$ of minimum view determines the \emph{clockwise direction} of $P$.

\begin{figure*}[h]
\begin{center}
  \resizebox{0.55\textwidth}{!}{\input{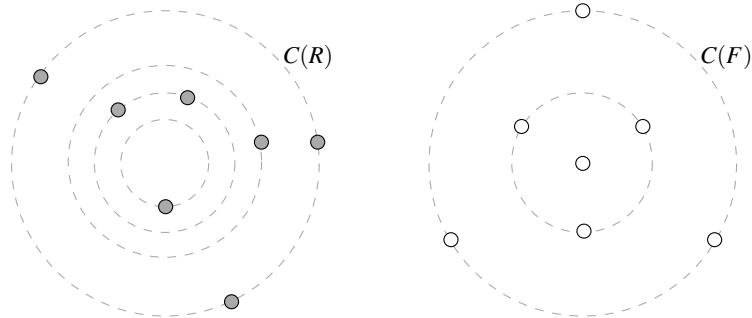}}
\caption{ An example of input for the \pf\ problem: in $(a)$, robots $R$; in $(b)$, the pattern $F$. }
\label{fig:sym}
\end{center}
\end{figure*}

We can now apply the redefined concept of clockwise direction to a configuration $R$ and/or a pattern $F$. For instance, in Figure~\ref{fig:sym}.$(a)$, the multiset $R$ is asymmetric, hence its clockwise direction coincides with that used to compute $V(r)$, with $r=\minview(R)$; whereas in Figure~\ref{fig:sym}.$(b)$, the set $F$ is rotational and reflexive, hence its clockwise direction does not distinguish a left-right orientation.

\section{The strategy}\label{sec:strategy}
In this section, we provide a general description of the strategy underlying our algorithm for the resolution of \apf starting from initial leader configurations. It is based on a functional decomposition approach: the problem is divided into five sub-problems denoted as $\RS$ (Reference System), $\PForm$ (Partial Formation), $\Fin$ (Finalization), $\SB$ (Symmetry Breaking),  and $\Member$. For each sub-problem but $\Member$ an algorithm is provided. Each algorithm is defined in a way that its execution consists of a sequence of phases, whereas $\Member$ crosses different phases. The whole strategy is then realized by composing the algorithms of each phase.

We now provide a high-level description of the four sub-problems.
 
\bloccobis{Problem $\RS$} It concerns the main difficulty arising when the pattern formation problem is addressed: the lack of a unique embedding of $F$ on $R$ that allows each robot to uniquely identifying its target (the final destination point to form the pattern).\footnote{In the literature, this is sometime realized by inducing a common coordinate system. This method can be effective only if $F$ is specified by coordinates and not by distances.} In particular, $\RS$ can be described as the problem of moving some (minimal number of) robots into specific positions such that they can be used by any other robot as a common reference system. Such a reference system should imply a unique mapping from robots to targets, and should be maintained along all the movements of robots (except for the \emph{finalization phase}, where the algorithm moves the robots forming the reference system). 

Our strategy solves $\RS$ by using three robots (called \emph{guards}). Such guards are positioned as described in Figure~\ref{fig:guards}: the \emph{boundary guards} are denoted as $g'$ and $g''$ and are two antipodal robots on $C(R)$, the \emph{internal guard} is denoted as $g$ and is the unique robot on $C^g(R)$ (the \emph{guard circle}). The internal guard is placed so that the angle $\angolo(g,c(R),g')$ is equal to $\alpha$ (the \emph{reference angle}) whose value only depends on $F$. Once $\RS$ is solved, each robot can use the guards to univocally determine its target position in the subsequent phases. 

Notice that in specific cases the presence of $C^g(R)$ implies a refinement to the strategy defined for the sub-problem $\RS$. In fact, in case of a multiplicity in $c(F)$, all the robots in $c(R)$ (but one) must be placed before $C^g(R)$ (and hence the internal guard $g$) is used. Then, two distinct phases are designed for addressing $\RS$: 
\begin{itemize}
\item phase $\Funo$, responsible for setting the external guards $g'$ and $g''$ and, 
      if required, for placing the multiplicity in $c(R)$;
\item phase $\Fdue$, responsible for setting the internal guard $g$.
\end{itemize}

\bloccobis{Problem $\PForm$} This sub-problem concerns moving all the non-guards robots (i.e., $n-3$ robots) toward the targets. In our strategy, phase $\Ftre$ is designed to solve this problem. 
 The difficulties in this phase are the following: (1) during the phase, the reference system must be preserved, (2) the movements must be performed by avoiding undesired multiplicities (collision-free routes), and (3) the movements must be performed without entering into the guard circle (routing through the teleporter circle).

Once the guards are placed, a unique robot per time is chosen to be moved toward its target: it is the one not on a target, closest to an unoccupied target, and of minimum view in case of tie. We are ensured that always one single robot $r$ will be selected since the configuration is maintained a leader configuration by the guards. 
The selected robot is then moved toward one of the closest targets until it reaches such a point. 
All moves must be performed so as to avoid the occurrence of undesired multiplicities; hence, it follows that sometimes the movements are not straightforward toward the target point but robots may deviate their trajectories. To this aim, the strategy makes use of a procedure called \DistMin\ designed ad-hoc for computing alternative trajectories. Moreover, according to the role of the internal guard $g$, robots cannot enter into the circle $C^g(R)$, otherwise the reference system induced by the guards is lost. The latter implies that, in case of a possible route passing through $C^g(R)$, a robot avoids entering into the guard circle by deviating along the boundary of $C^t(R)$ (the \emph{teleporter circle}); such a circle is centered in $c(R)$ and its radius is opportunely chosen so that in its interior there is only $g$ and in $\partial C^t(R)$ there is at most one robot.  

\bloccobis{Problem $\Fin$} It refers to the so-called finalization phase, where the last three robots (the guards) must be moved to their targets to complete the formation of pattern $F$.   In our strategy, a phase $\Fquattro$ is designed to solve this problem. This phase must face a complex task, since (1) moving the guards leads to the loss of the common reference system, and (2) moving without a common reference system makes hard to finalize the pattern formation.

In order to solve \apf, two additional problems must be faced. The first one concerns the case where the initial configuration in symmetric.

 \bloccobis{Problem $\SB$} Consider the case in which the initial leader configuration admits a symmetry while the pattern to be formed is asymmetric. In this situation it is mandatory for each solving algorithm %, as first task, 
 to break the symmetry (i.e., to transform the initial configuration into an asymmetric one). In fact, without breaking the symmetry, any pair of symmetric robots may perform the same kind of movements and this prevent the arbitrary pattern formation. In our strategy, phase $\Fcinque$ is used to address this sub-problem. It carefully moves the robot away from the center (in case of rotational symmetry) or one robot away from the unique axis (in case of reflections) until to obtain a stationary asymmetric configuration. 
 
The main difficulties in this sub-problem are: (1) to avoid the formation of other kind of symmetries that could prevent the pattern formation (e.g., rotational symmetries), and (2) to correctly face the situation in which multiple steps are necessary to reach the target. In the latter case, the algorithm must recognize the obtained asymmetric configuration as a ``configuration originated from a symmetric one where some robot has not yet reached a designed target''.

The second problem that crosses different phases is that described in the Introduction. 

\bloccobis{Problem $\Member$}
This is the problem of avoiding that moves defined in the algorithm can change the membership of the current configuration \emph{while robots are moving}.
 In fact, if such a situation happens, other robots believing to be in a different phase may start moving and then the situation becomes sometimes intractable or even they prevent the algorithm to accomplish the $\pf$ 
(for instance, this is the case for the algorithm proposed in~\cite{FYOKY15}).  
In contrast, our strategy correctly addresses such a general problem by making use of two ingredients: safe moves and an ad-hoc procedure called \SMove. 
By using only moves that are safe, we obtain an algorithm which is transition-safe.  This means that while robots are moving according to a move $m$, only robots planned by $m$ can move until all robots correctly reached the target.
By using Procedure \SMove, the algorithm can control each move that potentially could lead to non-stationary configurations, and hence to change the membership of the current configuration in an uncontrolled manner.\footnote{The technique adopted by means of Procedure \SMove is similar to the so-called \emph{cautious move} in~\cite{FPSV14,MV16}.}
Among others, Procedure  \SMove exploits two main properties of our algorithms that are ensured during the most of the computations: there is at most one moving robot and $C(R(0))=C(R(t))=C(F)$, $t> 0$. Such properties do not hold only in a few of the cases handled by phase $\Fcinque$. For instance, while solving $\RS$, the embedding of $F$ into $R$ is not yet defined but the property that $C(R)=C(F)$ might be exploited. In fact, in some cases \SMove can force the moving robots crossing circles $C^i_\downarrow(F)$ to stop on such circles (hence potentially on a point of $F$). In this way, we force the configuration to become stationary while its membership may have changed, because perhaps an embedding of $F$ that leads to a different phase holds. 

% ---------------------------------------------------------
\section{Counter-examples}\label{sec:counter-ex}
In this section we provide fundamental arguments affecting the correctness of~\cite{FYOKY15},~\cite{BT15,BT16b} and~\cite{DPVarx09}. It is worth to remark that our algorithm does not fix all the problems arisen in this section, however our technique might be helpful as a guideline for approaching such inconsistencies.

\subsection{A counter-example to the correctness of the algorithm presented in~\cite{FYOKY15}}\label{ssec:ce-1}

\begin{figure*}[h]
\begin{center}
  \resizebox{0.62\textwidth}{!}{\input{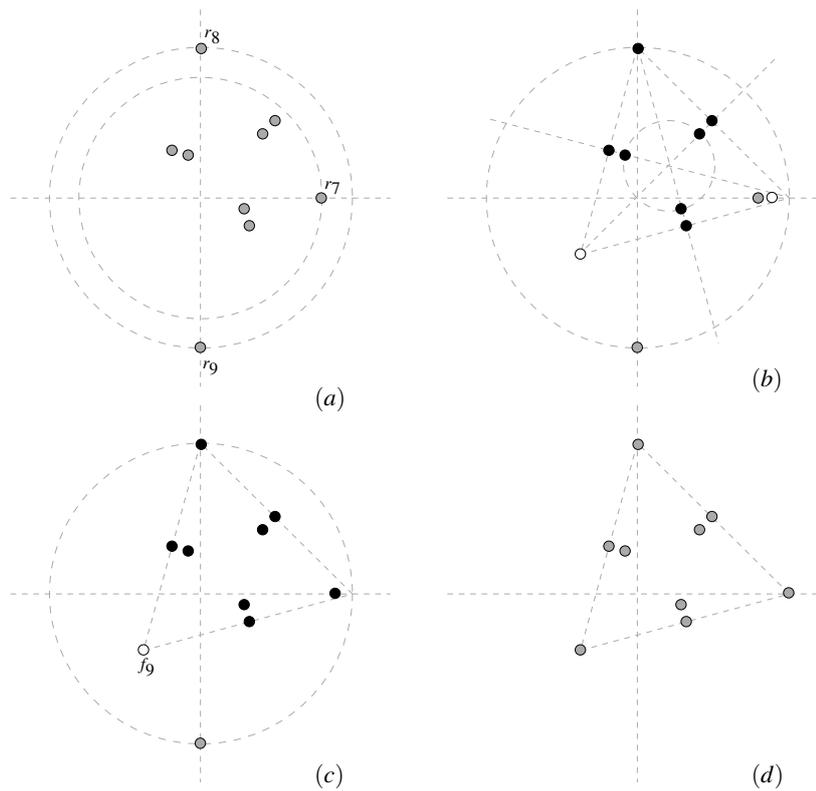}}
\caption{ A counter-example to the correctness of the algorithm $\FORM$ presented in~\cite{FYOKY15}. Grey circles represent robots, white circles represent points in the pattern $F$, black circles represent both robots and points in $F$. (a) An initial configuration $I$; (b) A visualization of a pattern $F$ to be formed along with points of $I$; (c) A configuration $R$ obtainable from $I$ during the movement of robot $r_7$ in the first phase (execution of the algorithm $A_1$ to form a $T$-stable configuration). From it, since invariant $\INV_{\!\!c}$ holds, also robot $r_9$ might start moving (toward $f_9$); (d) A symmetric configuration $R'$ obtainable if both $r_7$ and $r_9$ complete their scheduled moves.}
\label{fig:controesempio}
\end{center}
\end{figure*}

In this section, we provide a counter-example to the correctness of the algorithm $\FORM$ presented in~\cite{FYOKY15} to solve $\pf$ assuming chirality. $\FORM$ is designed to form a pattern $F$ (possibly with multiplicities) from an initial configuration $I$ (without multiplicity) each time $\rho(I)$ divides $\rho(F)$, where $\rho(\cdot)$ is the parameter that captures the symmetricity of a multiset of points in the plane (cf.~\cite{FYOKY15}). Since the algorithm assumes robots empowered with chirality, function $\rho(\cdot)$ only measures the number of possible rotations that the input set of points admits. In fact, dealing with chirality overcomes managing reflections.

An execution of algorithm $\FORM$ is partitioned into four phases called the pattern embedding ($\EMB$), the embedded pattern formation ($\FOR$), the finishing ($\FIN$), and the gathering ($\GAT$) phases. Phase $\EMB$ embeds a given pattern $F$, phase $\FOR$ forms a substantial part $\tilde{F}$ of $F$, and phase $\FIN$ forms the remaining $F\setminus\tilde{F}$ of $F$ to complete the formation. Phase $\GAT$ treats a pattern $F$ with multiplicities. These phases occur in this order, but some of them may be skipped. 

For each of the phases, the authors present an algorithm and an invariant (i.e., predicate) that every configuration in the phase satisfies. Each invariant $\INV$ is considered as a set too; a configuration $R$ satisfies $\INV$ if and only if $R\in \INV$. Authors show that the defined invariants are pairwise disjoint, so exactly one of the algorithms implementing the phases is executed.

We start by briefly recalling both the invariant $\INV_{\!\!\EMB}$ and the algorithm $A_{\EMB}$ for the first phase $\EMB$. For a formal understanding of the arguments below, the reader is invited to refer to~\cite{FYOKY15} for the definition of $\ell$-stable configurations, which in turn defines a set $\Lambda \subseteq \partial C(R)$: 
\begin{itemize}
\item $\INV_{\!\!\EMB} = \neg ( \INV_{\!\!\FOR} \vee \INV_{\!\!\FIN} ) $.     
    Informally, $\INV_{\!\!\FOR}$ is the set of the so called $\ell$-stable configurations $R
    $ such that not all robots in $R \setminus \Lambda$ are located at their 
    final positions in $F$, while $\INV_{\!\!\FIN}$ is the set of 
    configurations $R$ such that all robots in $R \setminus \Lambda$ are 
    located at their final positions in $F$.

    Figure~\ref{fig:controesempio}.(a) shows an initial configuration $I$ such 
    that $I\in \INV_{\!\!\EMB}$. Notice that the number of robots in $I$ is odd and $\rho(I)=1$ (since $I$ is asymmetric).
\item $A_{\EMB}$ consists of three algorithms $A_1$, $A_2$, and $A_3$, that are 
    devoted to three different cases. Such algorithms are responsible of 
    forming a $\ell$-stable configuration. In particular $A_1$ is responsible for forming a $T$-stable configuration, that is a configuration $R$ with 
    exactly three robots in $\partial C(R)$ such that two of them are antipodal 
    and the third one is a midpoint of them. Actually, $A_1$ 
    is invoked when $|\partial C(R)| = 2$ (and this is the case in $I$), and it moves an additional robot on 
    $C(R)$ to get a $T$-stable configuration.

    From $I$,
    algorithm $A_1$ moves robot $r_7$ straightly toward the point 
    $[c(I),r_7]  \cap C(I)$.
\end{itemize}
Both the invariant $\INV_{\!\!\FIN}$ and the algorithm $A_{\FIN}$ for the third phase $\FIN$ are also necessary to build our counter-example. The finishing phase consists of different invariants depending on the kind of $\ell$-stable configuration obtained in the first phase. In particular, when the $T$-stable configuration has been built according to algorithm $A_1$, $\INV_{\!\!\FIN}$ consists of three invariants $\INV_{\!\!a}$, $\INV_{\!\!b}$, and $\INV_{\!\!c}$, each associated to an algorithm that moves one of the three robots on $C(R)$. What we need to explore for the counter-example is  $\INV_{\!\!c}$.
\begin{itemize}
\item A configuration $R$ satisfies $\INV_{\!\!c}$  if and only if $R\setminus F = \{r\}$ and $r$ is on $\tau_c$, where $\tau_c$ is a route designed as follows:
\begin{itemize}
\item  Let $R = \{r_1, r_2, \ldots, r_n\}$, $F = \{f_1, f_2, \ldots, f_n\}$ be the set of robots and pattern points ordered according to their distance from $c(R)$ and $c(F)$, respectively. Assume that $c(R)=c(F)$  and that $R \setminus \{r_n\}$ is similar to $F\setminus \{f_n\}$. In such a 
    case, pattern $F$ can be formed from $R$ by moving just $r_n$ toward $f_n$ along a route $
    \tau_c$ defined as any route such that, for any point $p$ on $\tau_c
    $, still $r_n$ is the unique robot to be moved (toward $f_n$) to form $F$ from 
    $R'$, where $R'$ is constructed from $R$ by replacing $r_n$ with $p$. 
\end{itemize}
Concerning the algorithm executed when $\INV_{\!\!c}$ holds, it simply requires that robot $r_n$ traces the path $\tau_c$.
\end{itemize}
We have recalled all the details necessary to describe the counter-example. Assume now that algorithm $A_1$ is moving $r_7$ to form a $T$-stable configuration, and consider the pattern $F$ depicted in Figure~\ref{fig:controesempio}.(b) composed by the black and white circles. Notice that $\rho(F)=1$ as $F$ is asymmetric. Since $A_1$ moves robot $r_7$ straightly toward the point $[c(I),r_7]  \cap C(I)$, at a certain time it is possible that robot $r_9$ observes (during a Look phase) the configuration $R$ depicted in Figure~\ref{fig:controesempio}.(c). This means that during the movement of $r_7$, robot $r_9$ starts moving toward $f_9$ according to the algorithm associated to the invariant $\INV_{\!\!c}$. If this happens, the following properties hold:
\begin{enumerate}
\item there are two moving robots;
\item each move is due to a different phase;
\item if both moving robots reach their current targets -- see Figure~\ref{fig:controesempio}.(d) -- then the obtained configuration $R'$ admits $\rho(R')=3$ and from there it would be impossible to form $F$. In fact, as proved in~\cite{FYOKY15}, $F$ is formable from $R'$ if and only if $\rho(R')$ divides $\rho(F)$, but here $\rho(F)=1$.
\end{enumerate}

By personal communications, the authors of~\cite{FYOKY15} confirmed us that the provided counter-example and most importantly the rationale behind it represents a main issue that requires deep investigation. While it is possibly easy to find a patch to the counter-example by slightly modifying algorithm $A_1$, 
it is not straightforward to provide general arguments that can ensure the correctness of the whole algorithm. The main question left is: how can be guaranteed among all the phases that the membership of a configuration to a phase does not change while a robot is moving? 

As we are going to show in the correctness section, our algorithm does not suffer of such arguments as it prevents such scenarios. Actually, the authors of~\cite{FYOKY15} are working to devise an erratum to solve the posed problems. The current version of such an erratum is available from~\cite{FYOKY17}.

\subsection{A counter-example to the correctness of the algorithm presented in~\cite{BT15,BT16b}}\label{ssec:ce-2}
In this section, we show how missing arguments affect the correctness of~\cite{BT15,BT16b}. After our personal communications to the authors, a new version~\cite{BT16}\_v3 (i.e., version v3 of~\cite{BT16}) of the extended paper~\cite{BT16}\_v2 has been recently released which includes the case discussed here. Still we believe it is important to report the counter-example as it highlights how the lack of formalism may cause the missing of possible instances.

Similarly to our approach, the algorithm in~\cite{BT16}\_v2 selects a specific robot $r_1$ closest to $c(R)$ to serve as what we call guard. Such a robot is moved in a specific placement in order to be always recognized as such until the very last step that finalizes the formation of the input pattern $F$.

Another ingredient of the algorithm presented in~\cite{BT16}\_v2 we need to describe for our purpose is how the authors get rid of any form of multiplicity detection. When $F$ contains multiplicities, the robots first form a different pattern $\tilde{F}$ obtained by the robots from $F$ as follows:  for each point $p$ of multiplicity $k$, $k - 1$ further points $p_1$, $\ldots$,  $p_{k-1}$ are added such that the distance of each $p_i$ from $c(F)$ equals the distance of $p$ from $c(F)$, $\angolo(p, c(F), p_i) < p_i$, and  $|p_i-p|  = \frac d {4i}$, with $d =\min_{f, f'\in F} | f -f'|$. 

For all algorithm phases except \emph{Termination}, $\tilde{F}$ is used instead of $F$.
To execute the Termination phase, the configuration must be totally ordered (using the set of points, excluding multiplicity information), and at least one robot must be located at each point of $F$ (except maybe the smallest one). 
If there exists a robot $r\neq r_1$ not located at a point in $F$, then $r$ chooses the closest point in $F$ as its destination and rotates toward it while remaining in its circle. The global coordinate system remains unchanged because $r_1$ does not move. Eventually, $r_1$ becomes the only robot not located at its destination, then it moves toward it, and the pattern $F$ is formed.

As shown in the description of our strategy, and in particular in phase $\Fdue$, before creating our internal guard $g$, we ensure to move $k-1$ robots in $c(R)$ if $F$ admits a multiplicity of $k$ elements in $c(F)$. Such type of patterns (with a multiplicity in $c(F)$) are completely ignored in~\cite{BT16}\_v2. In such a case, the general description provided above does not apply.
First of all, $\tilde{F}$ is not well-defined. Secondly, once $r_1$ is correctly placed, $k$ robots should be moved close or perhaps in $c(R)$. This is not clear but in any case there will be robots getting closer to $c(R)$ with respect to $r_1$, hence affecting the global coordinate system.

The lack of details as well as of a rigorous approach led the authors to design a partial algorithm that cannot cope with all possible input. It is worth to remark that in our strategy, the case of a multiplicity in $c(F)$ required a separate phase due to the arisen difficulties. The design of such a phase might not be easy since, in general, several robots must be moved at or close to $c(R)$ \emph{before} the global coordinate system is established.

\begin{figure}[h]
\begin{center}
  \resizebox{0.22\textwidth}{!}{\input{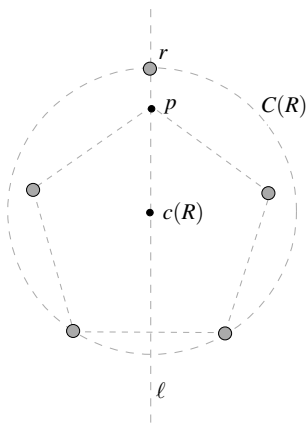}}
\caption{A counter-example to the correctness of the algorithm presented in~\cite{DPVarx09}.}
\label{fig:controesempio-fr}
\end{center}
\end{figure}

\subsection{A counter-example to the correctness of the algorithm presented in~\cite{DPVarx09}}\label{ssec:ce-3}
In~\cite{DPV10}, the authors show the equivalence of \apf and Leader Election for a number of robots greater than or equal to 4, endowed with chirality and for patterns without multiplicities. In~\cite{DPVarx09}, the same authors claim to extend the results to a number of robots greater than or equal to 5 but without chirality. The need of one robot more is justified by the way they want to build a common reference system for all robots. This is done by making use of two robots instead of one, the closest two robots to the center. They claim that such two robots can be always selected since they start from initial leader configurations. Hence a leader and a ‘second leader' can always be detected and moved close to the center so as they can be always recognized by all robots. Even though the description is rather sketchy, their strategy is clear. Unfortunately it is not correct. Just to provide a counter-example, we refer to Figure~\ref{fig:controesempio-fr}. The represented configuration is symmetric, admitting one axis of reflection $\ell$ with a robot $r$ on it. Such a robot can be elected as leader, that is, this is a leader configuration and the proposed algorithm should solve any pattern from it. Following the arguments in~\cite{DPVarx09}, since the configuration admits an axis of reflection with one robot on it, $r$ should move close to $c(R)$ in order to become the unique robot closest to $c(R)$. The authors wrongly argue that during its motion $r$ cannot lose its leadership because it is the unique robot on $\ell$, and by moving along $\ell$ it always remains the unique robot on $\ell$.

Even though during the movement it is true that $r$ remains the unique robot on $\ell$, it cannot move toward $c(R)$ without creating further axes of reflection, (five axes in this case) once position $p$ is reached. Once $r$ is in position $p$, all robots look the same as the configuration admits a rotation, that is the configuration is not a leader configuration anymore, hence preventing the resolution of \apf. As we are going to show in the correctness section, our algorithm carefully addresses the case by means of a sub-phase of phase $\Fcinque$.

\section{The algorithm}\label{sec:algorithm}
In this section, a robot always means an oblivious \async\ robot. Recall that initially a configuration is an initial leader configuration, hence it does not contain multiplicities by definition. Concerning the number of robots $n$, for $n=1$ the \apf problem is trivial. When $n=2$ the problem is unsolvable as the gathering of two robots has been shown to be unsolvable,  see~\cite{CFPS12}.

Concerning the pattern to form, it might contain multiplicities. The case of point formation (Gathering) is delegated to~\cite{CFPS12}, so we do not consider such a case as input for our algorithm.

Similarly to~\cite{FYOKY15}, for $n=3$, we design an ad-hoc algorithm.

\begin{figure*}[t]
\begin{center}
  \resizebox{0.55\textwidth}{!}{\input{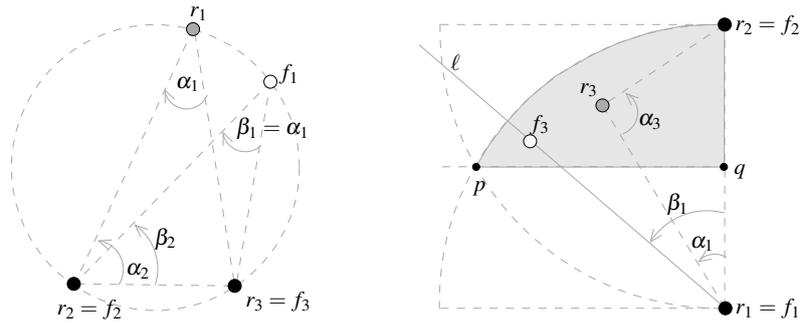}}
\caption{ Pictures showing arguments for the proof of 
          Theorem~\ref{teo:3robots}.
}
\label{fig:3robots}
\end{center}
\end{figure*}

\begin{theorem}\label{teo:3robots}
Let $R = \{r_1,r_2,r_3\}$ be an initial leader configuration with three robots, and let $F=\{f_1,f_2,f_3\}$ be any pattern. Then, there exists an algorithm able to form $F$ from $R$.
\end{theorem}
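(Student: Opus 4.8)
The plan is to give an explicit ad-hoc algorithm for three robots, organized into at most two elementary single-robot moves, and to prove its correctness by a finite case analysis on the symmetry type of $R$. The starting observation is a complete classification of three-robot leader configurations: among the configurations of three distinct points the only one that is \emph{not} a leader configuration is the equilateral triangle (its order-three rotation fixes no robot), so $R$ is necessarily a scalene triangle, a non-equilateral isosceles triangle, or a collinear configuration. In each case a canonical leader $r^*$ and a canonical clockwise direction are available exactly as in Section~\ref{sec:notation}: if $R$ is asymmetric then $r^*=\minview(R)$ and the orientation is the one realizing $V(r^*)$; if $R$ is reflexive (the isosceles triangle, and the evenly-spaced collinear configuration) then $r^*$ is the unique robot on the axis and the two readings $V^+$ and $V^-$ are equivalent.

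First I would fix, once and for all, an embedding of the target. Since similarities are allowed I impose $C(F)\equiv C(R)$ and use $c(R)$ together with the clockwise direction of $R$ as the common reference shared by every robot; together with the canonical leader this pins down the three target points $f'_1,f'_2,f'_3$ up to the single reflection across a canonical axis, an ambiguity I resolve from the position of the leader (and, when the leader lies on the axis, from the fact that the two readings are then interchangeable). Because all three robots compute the same $R$ and the same $F$, they all agree on this embedding and on which robot must move; this is the point that lets me keep a single mover at a time.

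The motion I would then perform in two stages, mirroring the two pictures of Figure~\ref{fig:3robots}. In the first stage one robot is moved so that two robots come to rest on two target points while the reference data $c(R)$, $C(R)$ and the clockwise direction are preserved; to keep the smallest enclosing circle and its centre invariant during an asynchronous move I route the mover along an arc of a circle concentric with $C(R)$ (this is why the construction uses the angles $\alpha_i,\beta_i$ rather than straight segments), so that the embedding does not change along the trajectory. In the second stage the two robots already on targets serve as anchors that pin down the embedding, and the last robot is moved straight to the remaining target $f'_3$; the side of $\Line(f'_1,f'_2)$ on which $f'_3$ lies is precisely the datum that the no-chirality reflection left open, and I choose it as the side currently occupied by the mover (any choice being valid when $F$ is symmetric across that line). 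Collision-freeness is immediate with three robots, since the only admissible coincidences are those prescribed when $F$ itself carries a multiplicity, and the arc/segment trajectories are chosen so as not to pass through an occupied point prematurely.

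The main obstacle, and the part that genuinely requires the transition-safety framework rather than a one-line argument, is the \emph{invariance of the reading during each asynchronous move}. I must guarantee that while the single mover travels, (i) the two non-moving robots keep recognising themselves as anchors located at their targets and hence stay put, and (ii) the mover's target does not change, i.e.\ the configuration never passes through a transient state whose canonical reading assigns a different embedding or a different leader. This is where the evenly-spaced and isosceles subcases, and the patterns carrying a multiplicity---in particular a multiplicity at $c(F)$, which drives a robot towards the centre and can momentarily destroy the distinguishability of the two anchors---must be handled separately: for each I would exhibit the explicit trajectory and verify, by the finite case analysis, that every intermediate configuration $R'$ observed during the move is read by all robots as ``the configuration $R$ with the designated robot displaced along its trajectory'', so that Definition~\ref{def:safeness} is satisfied and the stage ends at a configuration similar to $F$. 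Concatenating the (at most) two safe single-robot moves then forms $F$ from $R$, proving the theorem.
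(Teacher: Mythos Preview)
Your proposal has a genuine gap in the first stage. You claim to keep $C(R)$ and $c(R)$ invariant by routing the mover along an arc concentric with $C(R)$, and you anchor the whole embedding on these data. With only three robots this does not work: in an acute or right triangle all three vertices lie on $C(R)$, so any single move either takes the mover off $C(R)$ (whereupon the enclosing circle jumps to the circle with diameter the other two robots) or keeps it on $C(R)$ but lets the triangle become obtuse at a non-moving vertex (and again $C(R)$ collapses to that diameter). Either way the reference $c(R),C(R)$ you rely on is destroyed mid-move, your targets drift, and the transition-safety argument you sketch in~(ii) cannot be carried out.

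The paper avoids this by never anchoring the embedding on $C(R)$. Its case analysis is on the \emph{interior angles} $\alpha_1\le\alpha_2\le\alpha_3$ of $T_R$ versus $\beta_1\le\beta_2\le\beta_3$ of $T_F$, which are similarity invariants independent of any enclosing circle. In the main cases the mover either slides along the axis of symmetry (keeping $\alpha_2=\alpha_3$ while adjusting $\alpha_1$ toward $\beta_1$), or moves along the \emph{circumscribed circle} through all three robots---not $C(R)$, and not concentric with $C(R)$ when the triangle is obtuse---which by the inscribed-angle theorem keeps $\alpha_1$ exactly constant while $\alpha_2,\alpha_3$ vary monotonically toward $\beta_2,\beta_3$. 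This monotonicity of the angle ordering is precisely what guarantees the mover stays uniquely identifiable throughout and the other two robots remain stationary. In the remaining asymmetric case the embedding is anchored directly on the two non-moving robots (map $f_1,f_2$ onto $r_1,r_2$ by longest sides), so again no circle needs to be preserved; the mover goes straight and stays inside a region that keeps the side-length ordering intact. Degenerate collinear configurations and multiplicities in $F$ are then reduced to these cases. Your two-stage outline is the right shape, but the invariant you chose to maintain is the wrong one; replace ``$C(R)$ fixed'' by ``angle ordering fixed'' and the argument goes through.
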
 
\begin{proof}
We assume that $F$ does not contain a point of multiplicity 3, otherwise the algorithm in~\cite{CFPS12} for the Gathering problem can be used. Assume that the point in $F$ and in $R$ are not collinear and let $T_R$ and $T_F$ be the triangles formed by points in $R$ and $F$ respectively. Degenerate cases with collinear points or with $F$ admitting a multiplicity of cardinality two will be considered later. 

Let $\alpha_i$ ($\beta_i$, resp.) be the internal angle of $T_R$ ($T_F$, resp.) associated with the vertex $r_i$ ($f_i$, resp.), for each $i=1,2,3$. 
Since $R$ is a leader configuration, the case with $\alpha_1 = \alpha_2 = \alpha_3=60^\circ$ cannot occur.
Without loss of generality, we assume $\alpha_1 \le \alpha_2 \le \alpha_3$, and  $\beta_1 \le \beta_2 \le \beta_3$ (hence $\alpha_1 \leq 60^\circ$ and $\beta_1 \leq 60^\circ$). We prove the thesis by showing an algorithm that forms $F$ by modifying angles of $T_R$ so that eventually $\alpha_i=\beta_i$, for each $i=1,2,3$. 

According to such angles, the algorithm distinguishes among the following cases when $R$ is not similar to $F$:
\begin{enumerate}
\item[$(a)$] 
$\alpha_2 = \alpha_3, \alpha_1\neq \beta_1$. In this case, robot $r_1$ lies on an axis of symmetry $\ell$ and the algorithm performs the following step: $r_1$ moves along $\ell$ toward the point that makes true the condition $\alpha_1=\beta_1$. Notice that during the movement of $r_1$ no further symmetries can be created as $\alpha_1$ remains the smallest angle being $\beta_1\leq 60^\circ$. Hence $r_1$ is always recognized as the robot to move until its target is reached.  After this step, either $R$ and $F$ are similar (when $\alpha_i=\beta_i$ for each $i=2,3$) or the next case $(b)$ occurs.

\item[$(b)$]
$\alpha_2 = \alpha_3$, $\alpha_1=\beta_1$. Like in case (a), robot $r_1$ lies on an axis of symmetry. Then robots embed $f_2$ and $f_3$ on $r_2$ and $r_3$, respectively, and $r_1$ moves along the unique circle $C$ passing through all the three robots toward a point $p$ that makes the configuration similar to $F$ (since the configuration is symmetric there are two equivalent points toward which $r_1$ can move). For the analysis see case $(c)$, since immediately after $r_1$ starts moving we have $\alpha_2 < \alpha_3$ whereas $\alpha_1$ remains equal to $\beta_1$.

\item[$(c)$]
$\alpha_2 < \alpha_3$, $\alpha_1=\beta_1$.
We analyze two sub-cases, according to the comparison of $\alpha_2$ with $\beta_2$, and of $\alpha_3$ with $\beta_3$.

If $\alpha_2 \ge \beta_2$ and $\alpha_3 \le \beta_3$, the algorithm is based on the following steps: (1) robots embed $f_2$ and $f_3$ on $r_2$ and $r_3$, respectively, (2) robots elect $r_1$ as the only robot allowed to move, (3) like in case $(b)$, $r_1$ moves along the unique circle $C$ passing through all the three robots toward the unique point $p$ that makes the configuration similar to $F$ (here $p$ is unique as the configuration is asymmetric). This case is described by Figure~\ref{fig:3robots} (left side). During the movement: $\alpha_1$ does not change, $\alpha_2$ decreases (without exceeding $\beta_2$), and $\alpha_3$ increases (without exceeding $\beta_3$). As a consequence, the relationships between angles remain the same, the obtained configuration is recognized as belonging to case $(c)$, and $r_1$ is always recognized as the moving robot. Once $r_1$ reaches $p$,  $R$ and $F$ become similar and the pattern is formed. Note that during the movement the configuration is asymmetric as $\alpha_1$ remains strictly less than $\alpha_2$. Angle $\alpha_1$ can match $\alpha_2$ only if $F$ is symmetric and $r_1$ has reached its target $p$.

If $\alpha_2 \le \beta_2$ and $\alpha_3 \ge \beta_3$, the algorithm remains the same. The only difference is the analysis of the angles during the movement of $r_1$: $\alpha_1$ does not change, $\alpha_2$ increases, and $\alpha_3$ decreases. Anyway, as in the previous sub-case, $r_1$ is always recognized as the moving robot and the pattern is eventually formed.

\item[$(d)$]
$\alpha_2 < \alpha_3, \alpha_1\neq \beta_1$. This case concerns different sub-cases that can be described by referring to Figure~\ref{fig:3robots} (right side). Without loss of generality, in the figure we assume $dist(r_1,r_2)\ge dist(r_1, r_3)\ge  dist(r_2, r_3)$, and hence robot $r_3$ is inside  the shaded area depicted in Figure~\ref{fig:3robots}. Notice that $r_3$ cannot be on the arc of the circle delimited by points $p$ and $r_2$, otherwise $\alpha_3=\alpha_2$ against hypothesis, and cannot be in the segment $[q,r_2]$ as otherwise the three robots would be collinear. If $r_3$ is in the interior of the shaded area, then $\alpha_1 < \alpha_2 < \alpha_3$ and the initial configuration is asymmetric; if $r_3$ belongs to the open segment $(p,q)$, we get $\alpha_1 = \alpha_2 < \alpha_3$. 

Assuming $dist(f_1,f_2)\ge dist(f_1, f_3)\ge dist(f_2, f_3)$, the algorithm is based on the following steps: (1) robots embed $f_1$ and $f_2$ on $r_1$ and $r_2$, respectively, (2) robots elect $r_3$ as the only robot allowed to move, (3) $r_3$ embeds $f_3$ as closer as possible to itself, and (4) $r_3$ moves straightly toward $f_3$. 
According to step 3, it follows that $f_3$ is in the segment formed by intersecting the half-line $\ell=\halfline(r_1f_3)$ with the shaded area. In particular, it is in the interior of the shaded area when $F$ is asymmetric, whereas it is on the boundary of that area if $F$ is symmetric. To conclude the analysis of $f_3$, notice that this point cannot be on the half-line starting from $r_1$ and passing through $r_3$ since we assumed $\alpha_1\neq \beta_1$. 

It is clear that, during the movements, $r_3$ will be always elected as the only robot allowed to move and hence it eventually reaches the target to form $F$. 
\end{enumerate}
Summarizing, when the algorithm handles a leader configuration in case $(a)$, the generated configurations remain in the same case $(a)$ until a stationary leader configuration belonging to case $(b)$ is created or the pattern is formed. When the algorithm handles a leader configuration in cases $(b)$, as soon as the moving robot starts its movement the configuration belongs to case $(c)$ where the same move is continued until forming the pattern. If a configuration is in case $(c)$, or $(d)$ the generated configurations remain in the same case until the pattern is formed.

Now, let us consider the degenerated cases when $\alpha_3=180^\circ$ or $\beta_3=180^\circ$ and $F$ has no multiplicities. If $\alpha_3=180^\circ$, the embedding is like in case $(d)$ and, referring to Figure~\ref{fig:3robots}, robot $r_3$ is in the segment $[q, r_2)$ whereas $f_3$ is either in the same segment or in the interior of the shaded area, according to whether $\beta_3=180^\circ$ or not. The algorithm simply moves $r_3$ toward $f_3$.

The case $\alpha_3\not =180^\circ$ and $\beta_3=180^\circ$ requires a deeper analysis.
Obviously, $\alpha_1\not = \beta_1$ as $\beta_1=0$ whereas $\alpha_1\not =0$. If $\alpha_2 =\alpha_3$ the configuration is symmetric, but the algorithm cannot act as in cases $(a)$ since $r_1$ should go to infinity to make $\alpha_1=\beta_1=0$. Then, if $\alpha_1 < \alpha_2$, $r_1$ moves along the circle $C$ until $\alpha_1 = \alpha_2$. If the robot is stopped during the movement, $\alpha_1 <\alpha_2 < \alpha_3$ holds, and $r_1$ is recognized as the robot to be moved on $C$ until $\alpha_1 = \alpha_2$.  When $\alpha_1 = \alpha_2$, still referring to Figure~\ref{fig:3robots}, $T_R$ is symmetric and $r_3$ is in the segment $(p,q)$. Robot $r_3$ moves toward $q$ maintaining $\alpha_1 = \alpha_2$, and then it is always recognizable if stopped. When $r_3$ reaches $q$ then $\alpha_3=180^\circ$ and the algorithm acts as above by moving $r_3$ toward $f_3$.

Finally, if $F$ admits a multiplicity it is assumed $f_2\equiv f_3$, hence $f_1$ is not in the multiplicity. Then the algorithm simply acts as in the above degenerate cases when $\beta_3=180^\circ$.
\qed
\end{proof}

In the remainder, we will provide an algorithm able to form any pattern $F$ (which is not a single point) starting from any initial leader configuration with $n\ge 4$. A possible input for the algorithm is shown in Figure~\ref{fig:input}. We construct our algorithm in such a way that the execution consists of a sequence of phases $\Funo$, $\Fdue$, $\Ftre$, $\Fquattro$, and $\Fcinque$ not necessarily in this order (some of which might be skipped whereas some other might be repeated a finite number of times). To each phase, we assign an invariant such that every configuration satisfies exactly one of the invariants (hence robots can correctly recognize the phase in which they are). Since each algorithm associated to a phase is composed of different kinds of moves, each phase is divided into sub-phases. Each sub-phase is characterized by a single move. Moreover, apart for a few exceptions, each time robots switch to a different phase/sub-phase, the current configuration is stationary, that is the move performed in a sub-phase is initiated from a stationary configuration (this property is crucial to prove the correctness of our algorithm). 
Basically, whenever a robot becomes active, it can deduce from the acquired snapshot to which phase and sub-phase the observed configuration belongs to, and whether it is a robot designated to move. In case it is its turn to move, it applies the move associated to the sub-phase detected. As it will be shown in the proof of correctness, our algorithm always maintains the current configuration as a leader configuration until the pattern is formed. The algorithm also assures there will always be exactly one robot moving in phases $\Funo - \Fquattro$ and at most two robots moving concurrently in phase $\Fcinque$, however, all moves are safe. In turn we prove the algorithm is transition-safe.

Table~\ref{tab:basic-variables} contains all predicates required by our algorithm, whereas Table~\ref{tab:phases} describes in which phase a configuration is, according to the specified properties. Notice that for each arbitrary phase/sub-phase $\mathcal{X}$ we need three predicates $\mathcal{X}_{s}$, $\mathcal{X}_{d}$, and $\mathcal{X}_{e}$ to distinguish between the invariant that the configuration satisfies at the beginning of the phase (start), once robots start to move (during), and once the moving robots have terminated to apply the same move (end), respectively. 
In the most cases, we have $\mathcal{X}_{d}= \mathcal{X}_{s}$; in the remaining cases, as it will be clarified in the correctness section, when $\mathcal{X}_{d}\neq \mathcal{X}_{s}$, $\mathcal{X}_{d}$ and $\mathcal{X}_{e}$ always coincide. For this reason we omit $\mathcal{X}_{d}$ in the presented tables.

About moves, Table~\ref{tab:F1-phases}--\ref{tab:F5-phases} contain a description of all moves applied by our algorithm for each phase. As described in Section~\ref{sec:strategy}, sometimes the trajectories defined by the proposed moves are opportunely manipulated so as to guarantee stationarity (by means of Procedure \SMove) and to avoid collisions (by means of Procedure \DistMin). 

\begin{table*}
\caption{ The basic Boolean variables used to define all the phases' invariants. In the first column, the phase in which the corresponding variables are mainly used. Missing notations can be found in the corresponding sections. 
The table contains a formal definition for all variants except $\q$ and $\iuno, \ldots, \isei$. Such variables are formally defined in Section~\ref{ssec:f4}, where they are used to recognize the guards in different scenarios when guards have to be moved in the finalization phase.
}
\label{tab:basic-variables}
\bgroup
\def\arraystretch{1.4}%  1 is the default, change whatever you need
\setlength{\tabcolsep}{5pt}
\begin{center}
  \begin{tabular}{ | c | c | p{0.75 \textwidth} | }
    \hline
    \textit{phase}  &\textit{var}  &  \textit{definition} \\ \hline \hline
      
      *      & $\a$ & $R$ is a leader configuration \\ \hline\hline

    \multirow{9}{*}{$\Funo$} & $\sdue$ & $|\partial C(R)|=2$ \\ \cline{2-3}
     & $\stre$ & $|\partial C(R)|=3$ \\ \cline{2-3}     
     & $\spiu$ & $|\partial C(R)|> 3$ \\ \cline{2-3}     
     & $\c$ & $|\{c(R)\}\cap R|=1$ \\ \cline{2-3}    
     & $\tzero $ & $\stre\ \wedge$ points in $\partial C(R)$ form a 
                triangle of angles all different from $90^{\circ}$  \\ \cline{2-3} 
     & $\tuno $ &  $\stre\ \wedge$ points in $\partial C(R)$ form a 
                triangle of angles equal to $30^{\circ}$, 
                $60^{\circ}$, and $90^{\circ}$  \\ \cline{2-3} 

     & $\l$ & $|\partial C_{\downarrow}^{1}(R)|\ge 2 \vee \delta_1\le \delta_{0,2}$ \\ \cline{2-3}     

     & $\mzero$ & $|\{c(F)\}\cap F|=k$, $k>1$ \\ \cline{2-3}  
     & $\muno$ & $\mzero \wedge |\{c(R)\}\cap R|=k-1$  \\ \hline\hline  
    
    \multirow{3}{*}{$\Fdue$} 
     & $\gzero $ & $|\partial C^g(R)| = 1$   \\ \cline{2-3} 
     & $\guno $ & $\gzero \wedge \exists! g'\in \partial C(R): \angolo(g,c(R),g')=\alpha$  \\ \cline{2-3}
     & $\gdue $ & $\guno \wedge \exists g'' \in \partial C(R)$ antipodal to $g’$ \\ \cline{2-3} 
     & $\fdue$  & $(\mzero \Rightarrow \muno) \wedge \sdue \wedge \l$ \\ \hline\hline

    \multirow{3}{*}{$\Ftre$} & $\dzero$ & $\partial C^t(R)=\{r\}$ \\ \cline{2-3} 
     & $\duno$ & $\dzero \wedge \exists f^*\in F^*:~ r = [c(F),f^*]\cap C^t(R)$ \\ \cline{2-3}
     & $\ddue$ & $C^t(R)\cap (r, \mu(r)] \neq \emptyset$, where $r = \minview( R^{\neg m}_{\eta} )$  \\ \cline{2-3}           
     & $\ftre$  & $(\mzero \Rightarrow \muno) \wedge \gdue$ \\ \hline\hline

    \multirow{3}{*}{$\Fquattro$}
         & $\q$ & $\partial C(F) = F$, $F$ without multiplicities, and $|F|-1$ points of $F$ are on the same semi-circle \\ \cline{2-3} 
     & $\iuno, \ldots,\isei$ & guards $g$, $g'$ and $g''$ are detectable $\wedge\ R\setminus\{g,g''\}$  is similar to $F\setminus\{\mu(g),\mu(g'')\}$ \\ \cline{2-3}        
     & $\fquattro$ & $\neg \q \wedge (\iuno \vee \idue)~ \vee ~\q \wedge (\itre \vee \iquattro \vee \icinque \vee \isei) $ \\ \hline\hline

    \multirow{9}{*}{$\Fcinque$} 
    & $\bzero$ &  $R$ is symmetric \\ \cline{2-3}    
    & $\buno$ &  $\exists$ a reflection axis $\ell$ for $R$:                             
                                           $|R\cap \ell| = 1$ \\ \cline{2-3}
     & $\bdue$ & $\exists$ a reflection axis $\ell$ for $R$: $|R\cap \ell|\ge 2$ $\wedge$ 
                 at least two robots in $R\cap \ell$ are not critical \\ \cline{2-3}    
 
     & $\btre$ & $\exists$ a reflection axis $\ell$ for $R$: $|R\cap \ell|= 2$ $\wedge$
                 exactly one robot in $R\cap \ell$ is critical \\ \cline{2-3}      
     & $\bquattro$ & $\exists$ a reflection axis $\ell$ for $R$: $|R\cap \ell|=3$ $\wedge$
                     exactly two robots in $R\cap \ell$ are critical \\ \cline{2-3}      
     & $\bcinque$ & $\exists$ a reflection axis $\ell$ for $R$: $|R\cap \ell|=2$ $\wedge$
                     both robots in $R\cap \ell$ are critical \\ \cline{2-3}                  
     & $\e$ & $R$ does not contain multiplicities \\ \cline{2-3}      
     & $\zuno$ & let $\partial C(R) = \{r, r1, r2\}$ with $r$ faraway robot, and  
                 let $t$ be the point on $C(R)$ closest to $r$ s.t. 
                 points $t$, $r_1$, and $r_2$ form a right-angled triangle: 
                  $\exists$ a reflection axis $\ell$ for $R'=R\setminus\{r\}$
                 that reflects $r_1$ to $r_2$ $\wedge$ $r$ belongs to the smallest arc of $C(R)$ 
                 between $\ell$ and $t$ (excluded) $\wedge$ $R'\cap \ell=\emptyset$\\ \cline{2-3}      
     & $\zdue$ &  
                 $\exists r\in R$: 
                 $\partial C_{\uparrow}^{1}(R) = \{r\}$ 
                 $\wedge$ 
                 $R\setminus \{r\}$ is symmetric  
                 $\wedge$ 
                 [ either $\c$ or 
                   ($\neg\bzero$ $\wedge$ $\neg\idue$ $\wedge$ 
                    $d(c(R),r) < \delta(C^g(R))$ $\wedge$ 
                    $\forall r'\in \partial C(R), \angolo(r,c(R),r')\neq \alpha$)] \\ \cline{2-3}      
     & $\uuno$ & $\exists i>0$: $\partial C_{\uparrow}^{i}(R) = \{\robotdue\} $ $\wedge$ $d( \robotdue,c(R) ) = 1/2 [ \delta(C_{\uparrow}^{i-1}(R)) + \delta(C_{\uparrow}^{i+1}(R)) ]$  \\ \cline{2-3}        
     & $\udue$ & $\robotdue \not \in [t^{60},t^{55}]  \vee \robotdue=t^{x} \vee \robotdue=t^{55}$ \\ \cline{2-3}        
     & $\fcinque$ & $(\bzero \vee \zuno \vee \zdue)\wedge \e$ \\ \hline\hline 
  
          *      & $\w $ & $R$ is similar to $F$   \\ \hline 

  \end{tabular}
\end{center}
\egroup
\end{table*}
% ------------------------------ fine tabella variabili base

% ------------------------------ inizio tabella predicati per le 4 fasi
\begin{table*}
\caption{ Each label on the first column specifies a different phase. In column ‘start', for each phase it is specified the invariant that holds while a configuration belongs to the corresponding phase. In column ‘end', it is specified the possible phases outside the considered one that can be reached or whether $\w$ may hold.
}
\label{tab:phases}
\def\arraystretch{1.4}%  1 is the default, change whatever you need
\setlength{\tabcolsep}{5pt}
\begin{center}
  \begin{tabular}{ | c | r | r | }
    \hline
    \textit{phase}  & \textit{start}    &  \textit{end} \\ \hline\hline
    
    $\Funo$ & $\FunoS$ &  $\FunoE$ \\ \hline 
            
    $\Fdue$ & $\FdueS$ &  $\FdueE$ \\ \hline 
            
    $\Ftre$    & $\FtreS$ &  $\FtreE$ \\ \hline 
            
    $\Fquattro$ & $\FquattroS$ &  $\FquattroE$ \\ \hline 
    
    $\Fcinque$ & $\FcinqueS$ &  $\FcinqueE$ \\ \hline 
                         
  \end{tabular}
  \end{center}
\end{table*}

It is important to keep in mind that during the whole algorithm it is assumed a first embedding of $F$ such that $C(R)=C(F)$. Actually $C(R)$ never changes in phases $\Funo - \Fquattro$ whereas in phase $\Fcinque$ it might change, however within $\Fcinque$ robots do not exploit the embedding, hence the assumption considering $C(R)=C(F)$ does not affect any reasoning.
We are now ready to consider each phase separately to see how the desired behavior is obtained.

% ---------------------------------------------- FASE F1

\begin{figure*}[t]
\begin{center}
  \resizebox{0.53\textwidth}{!}{\input{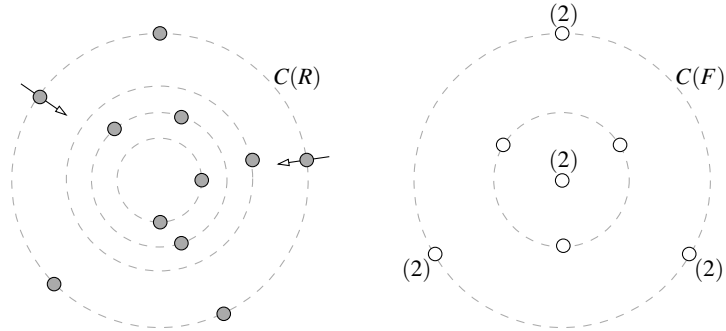}}
\caption{ An example of input for the \pf\ problem: robots $R$ (left) and the pattern $F$ (right). The number closed to a point denotes a multiplicity. Notice that $R$ belongs to the sub-phase $\Auno$ (i.e., predicates $\Funo$ and $\Auno_s=\AunoS$ hold), hence the algorithm applies move $m_1$ (cf Table~\ref{tab:F1-phases}).}
\label{fig:input}
\end{center}
\end{figure*}

\subsection{Phase $\Funo$}
As described in Section~\ref{sec:strategy}, this phase is responsible for setting the external guards $g'$ and $g''$ and, if required, for placing robots in $c(F)$. Due to its complexity, the algorithm for this phase is composed of many different moves, and each move is referred to a sub-phase. Table~\ref{tab:F1-phases} describes all such sub-phases, and also the corresponding invariants and moves.

%--------------------------------
\noindent
\begin{table*}
\caption{ (left) Invariants and moves for all the sub-phases of $\Funo$. 
Each label on the first column specifies a different sub-phase to which a configuration belongs to. Then, with respect to each sub-phase, in the upper (shaded) side it is specified the invariant that the configuration satisfies at the beginning of the phase (start), and which sub-phase within $\Funo$ can be reached (end). In the lower side, it is specified the corresponding move performed by the algorithm, and on the last column the possible phases outside $\Funo$ that can be reached or whether $\w$ may hold. (right) Description of the moves performed by the algorithm in $\Funo$.
}
\label{tab:F1-phases}
\begin{center}
\begin{minipage}[t]{.45\linewidth} % ^^^^^^^^^^^^^^^^^^^^^^^^^^^^^^^^^
\definecolor{grigio}{rgb}{0.9,0.9,0.9}
\bgroup
\rowcolors{1}{white}{grigio}
\def\arraystretch{1.4}%  1 is the default, change whatever you need
\setlength{\tabcolsep}{5pt}
\begin{center}
  \begin{tabular}{ | c | c | c | c |}
    \hline
    \textit{phase}  & \textit{start}  &  \textit{end} \\ \hline\hline
    $\Auno$ & $\AunoS$ & $\AunoE$ \\ \hline 
            & $m_1$ & $\Fquattro_s \vee \w$ \\ \hline
            
    $\Adue$ & $\AdueS$ &  $\AdueE$ \\ \hline 
            & $m_2$ & $\Fquattro_s \vee \w$ \\ \hline
            
    $\B$  & $\BS$ &  $\BE$ \\ \hline 
            & $m_3$ & \\ \hline         
            
    $\Cuno$ & $\CunoS$ & $\CunoE$ \\ \hline 
            & $m_4$ & $\Fquattro_s \vee \Fcinque_s \vee \w$ \\ \hline         
            
    $\Cdue$ & $\CdueS$ & $\CdueE$ \\ \hline 
            & $m_5$ & \\ \hline         
            
    $\D$    & $\DS$ & $\DE$ \\ \hline 
            & $m_6$ & $\Ftre_s$ \\ \hline         
            
    $\Euno$ & $\EunoS$ & $\Edue$ \\ \hline 
            & $m_1$ & $\Fdue_s \vee \Fquattro_s \vee \w$ \\ \hline 

    $\Edue$ & $\EdueS$ & \\ \hline 
            & $m_2$ & $\Fdue_s \vee \Fquattro_s \vee \w$\\ \hline         
                    
  \end{tabular}
\end{center}
\egroup
\end{minipage}% ^^^^^^^^^^^^^^^^^^^^^^^^^^^^^^^^^
\hspace{3mm}
\begin{minipage}[b]{.50\linewidth}% ^^^^^^^^^^^^^^^^^^^^^^^^^^^^^^^^^
\bgroup
\def\arraystretch{1.3}%  1 is the default
\setlength{\tabcolsep}{5pt}
\begin{center}
  \begin{tabular}{ | c | p{0.85 \columnwidth} | }
    \hline
    \textit{name}  &  \textit{description} \\ \hline\hline
    $m_1$ & 
Let $r$ be the not critical robot on $C(R)$ with minimum view. $r$ moves according to \SMove toward $t=[r,c(R)] \cap  C^{0,1}(R)$.
    \\ \hline 

    $m_2$ & 
Let $r$ be the robot on $C_{\downarrow}^{1}(R)$. $r$ moves according to \SMove toward $t=[r,c(R)] \cap C^{0,2}(R)$.
    \\ \hline     

    $m_3$ & 
Let $\partial C(R)= \{r_1,r_2\}$, $r\in C_\downarrow^1(R)$ of minimum view and $p_1$, $p_2$ be the intersections of $C(R)$ with the line perpendicular to $[r_1,r_2]$ passing through $c(R)$. $r$ moves radially toward $t$ on $C(R)$ if $t\not \in \{r_1,r_2,p_1,p_2\}$, else toward $C(R)$ along the line tangent to $C_\downarrow^1(R)$ in $r$. \\ \hline 

    $m_4$ & 
The three robots on $C(R)$ form a triangle with angles $\alpha_1\geq \alpha_2\geq \alpha_3$ and let $r_1$, $r_2$ and $r_3$ be the three corresponding robots. For equal angles, the role of the robot is selected according to the view, i. e. if $\alpha_1=\alpha_2$ then the view of $r_1$ is smaller than that of $r_2$. $r_2$ rotates  according to \SMove toward the closest point $t$ such that $\alpha_1$ equals $90^\circ$.
    \\ \hline 

    $m_5$ & 
Let $r$ be the robot on $C(R)$ such that its antipodal point is not in $R$. $r$ rotates toward the closest point $t$ such that the triangle formed by $t$ and the two antipodal robots admits angles of $30^\circ$, $60^\circ$ and $90^\circ$.
    \\ \hline 
    
    $m_6$ & 
The robot closest to $c(R)$ but not in $c(R)$, of minimum view, moves toward $c(R)$. 
    \\ \hline                        
   \end{tabular}
\end{center}
\egroup
\end{minipage}% ^^^^^^^^^^^^^^^^^^^^^^^^^^^^^^^^^
\end{center}
\end{table*}% ^^^^^^^^^^^^^^^^^^^^^^^^^^^^^^^^^
%--------------------------------

The first sub-phase $\A$ (logically divided into $\Auno$ and $\Adue$) leaves exactly three robots on $C(R)$ when more than three robots are there. This is realized by selecting any not \emph{critical} robots for $C(R)$ but three (see Property~\ref{prop2}). We remind that a robot is critical for $C(R)$ if its removal makes $C(R)$ changing. The selected robots are moved one by one inside $C(R)$ so as to maintain the configuration being asymmetric. % leader configuration. 
This is realized by moving each of such robots toward a specific new circle $C_{\downarrow}^1(R)$ (hence concentric to $C(R)$ and inside it). 

As an example, the configuration $R$ in Figure~\ref{fig:input} belongs to the sub-phase $\Auno$ (i.e., predicate $\Auno_s=\AunoS$ holds), and hence move $m_1$ (cf Table~\ref{tab:F1-phases}) is applied. As soon as a robot $r$ leaves $\partial C(R)$, the obtained configuration switches to $\Adue$ and hence move $m_2$ is applied until $r$ reaches the desired circle. In fact, if $r$ ends its movement before reaching its target, it is selected again by the algorithm and again  move $m_2$ is applied. Once $r$ reaches its target, predicate $\Auno_s=\AunoS$ holds. Such moves are repeated so that the configuration with $|\partial C(R)|=3$ in Figure~\ref{fig:fase_A-C1} left is obtained: this configuration belongs to $\Cuno$ since $\Cuno_s=\CunoS$ holds.

%--------------------------------------- Algo StationaryMove()

\begin{algorithm}[ht]{{%\small 
\SetKwComment{Comment}{/*}{*/}
\SetKwInput{Proc}{Procedure}
\Proc{\SMove}
\SetKwInOut{Input}{Input}
\Input{target $t$.}
\BlankLine
Let $m$ be the current move,\\ 
$tmp=t$\;
    \If{$m\in \{m_1, m_2, m_7\}$}{\If{$\exists p\in (r,t)$ being the closest point to $r$ intersecting a circle $C_{\downarrow}^j(F)$\label{algo:mFstart}}         
      {$tmp=p$\label{algo:mFend}}
        }   
    \If{$m=m_4$}{
    \If{the number of distinct points in $\partial C(F)$ is 3 and $\exists p$ along the way toward $tmp$ s. t. $(\partial C(R)\setminus \{r\})\cup \{p\}$ form a triangle similar to that formed by $\partial C(F)$\label{algo:mFquattros}}{$tmp=p$\label{algo:mFquattroe}}
    \If{$\exists r'\in \partial C^g(R) \wedge \exists p$ along the way toward $tmp$ s. t. $\angolo(p,c,r')=\alpha$\label{algo:mAquattros}}{$tmp=p$\label{algo:mAquattroe}}
    \If{$\exists p$ along the way toward $tmp$ s.t. in $R\setminus \{r\}\cup \{p\}$ predicate $\zuno$ holds\label{algo:mWquattros}}{$tmp=p$\label{algo:mWAquattroe}}

}
    move toward $tmp$\; 
}}
\caption{ Procedure \SMove\ performed by any robot $r$ when moves $m_1$, $m_2$, $m_4$ or $m_7$ are executed.}
\label{alg:stationary-move}
\end{algorithm}

%%%%%%%%%%%%%%%%%%%%%%%%

Actually, both $m_1$ and $m_2$ are performed by invoking Procedure \SMove (see Algorithm~\ref{alg:stationary-move}). This is done because while moving, a robot $r$ may incur along the way in a point $p$ that makes the current configuration belonging to the finalization phase $\Fquattro$ or even final, that is predicate $\w$ holds. Such situations may happen when according to some embedding of $F$ on $R$, difficult to detect at this stage, $p$ coincides with a point in $F$. If other robots perform their Look phase while $r$ is on $p$, they may start moving according to a different rule specified by our strategy, hence violating the desired property to maintain stationarity among phases. In order to avoid such a behavior, we simply force $r$ to stop on points that potentially may cause the described situations. For $m_1$ and $m_2$, such points belong to some $C_{\downarrow}^i(F)$, see Lines~\ref{algo:mFstart}-\ref{algo:mFend} of \SMove. If after a stop, still the configuration belongs to $\Funo$, then the same robot $r$ will be selected again to keep on moving.
%%%%%%%%%%%%%%%%

The sub-phase $\B$ is concerned with the case of just two robots in $\partial C(R)$ and $F$ admits a multiplicity in $c(F)$. In such a case, a third robot is moved from $\Int(C(R))$ to $C(R)$ (see move $m_3$ of Table~\ref{tab:F1-phases}).

Sub-class $\C$ (logically divided into $\Cuno$ and $\Cdue$) processes configurations with exactly three robots in $\partial C(R)$, and moves them so that they form a triangle with angles of $30^\circ$, $60^\circ$ and $90^\circ$. Now, assume that the three robots form a triangle with angles $\alpha_1\geq \alpha_2\geq \alpha_3$ and let $r_1$, $r_2$ and $r_3$ be the three corresponding robots. $\Cuno$ takes care of the case in which all the angles are different from $90^\circ$ (see Figure~\ref{fig:fase_A-C1}, left side): by move $m_4$, robot $r_2$ rotates on $C(R)$ in such a way that $\alpha_1$ becomes of $90^\circ$ (see Figure~\ref{fig:fase_A-C1}, right side). 
%%%%%%%%%%%%%%%%%%%%%%%%%

\begin{figure*}[t]
\begin{center}
  \resizebox{0.53\textwidth}{!}{\input{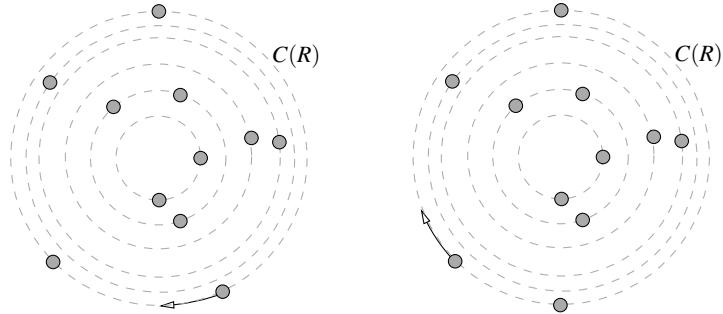}}
\caption{ Configurations obtained from $R$ as described in Figure~\ref{fig:input} after sub-phases $\A$ (left) and $\Cuno$ (right), respectively. The configuration on the right side is in phase $\Cdue$.}
\label{fig:fase_A-C1}
\end{center}
\end{figure*}

\begin{figure*}[t]
\begin{center}
  \resizebox{0.53\textwidth}{!}{\input{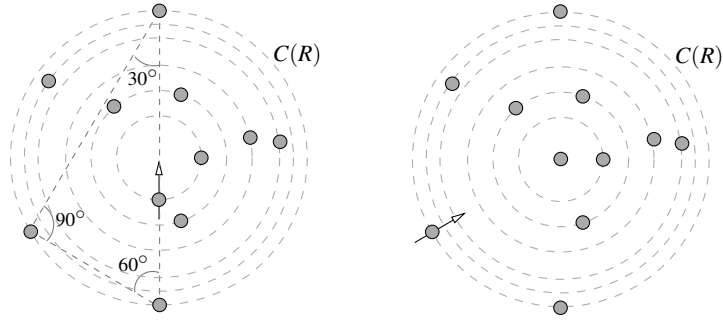}}
\caption{ Configurations obtained from $R$ as described in (the right side of) Figure~\ref{fig:fase_A-C1} after sub-phases $\Cdue$ (left) and $\D$ (right), respectively. }
\label{fig:fase_C2-D}
\end{center}
\end{figure*}

\begin{figure*}[t]
\begin{center}
  \resizebox{0.53\textwidth}{!}{\input{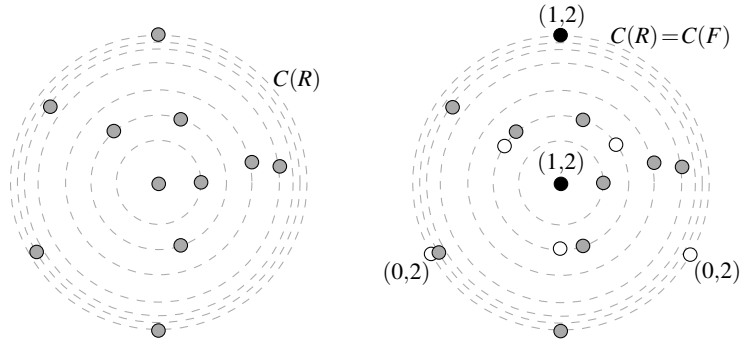}}
\caption{ Configurations obtained from $R$ as described in (the right side of) Figure.~\ref{fig:fase_C2-D} after sub-phases $\E$ (left); such a configuration will be processed by phase $\Fdue$ (in particular, by the sub-phase $\Guno$). On the right, a possible embedding of $F$ onto $R$: black points denote robots on targets (points of $F$), a pair of numbers denotes multiplicities of robots and targets, respectively. The real embedding will be fixed only when the internal guard $g$ will be correctly placed (cf Figures~\ref{fig:fase_G} and~\ref{fig:fase_F3}, left side).}
\label{fig:fase_E}
\end{center}
\end{figure*}

Similarly to $m_1$ and $m_2$, move $m_4$ is performed by invoking Procedure \SMove. The points in $F$ on which the moving robot $r_2$ must stop are now a bit different than before, as $m_4$ rotates $r_2$ along $C(R)$ rather than moving straightly. This may cause different situations: still incurring in points in $F$ (Lines~\ref{algo:mFquattros}-\ref{algo:mFquattroe}),
or creating an unexpected reference angle of $\alpha$ degrees (Lines~\ref{algo:mAquattros}-\ref{algo:mAquattroe}), or creating an asymmetric configuration belonging to $\Fcinque$ (Lines~\ref{algo:mWquattros}-\ref{algo:mWAquattroe}). 

If $r_2$ completes move $m_4$ by making $\alpha_1=90^\circ$ like in Figure~\ref{fig:fase_A-C1}, then there are two antipodal robots on $C(R)$ that will be detected as $g'$ and $g''$ in the subsequent phases. The third robot $r$ on $C(R)$ is now moved either along $C(R)$ or toward a position inside $C(R)$, depending on whether $F$ admits a multiplicity in $c(F)$ of $k>1$ elements or not. In the former case (see Figure~\ref{fig:fase_A-C1}, right side), the algorithm is in sub-phase $\Cdue$ and by move $m_5$ it rotates $r$ on $C(R)$ along the shortest path in such a way the composed triangle admits the required angles of $30^\circ$, $60^\circ$ and $90^\circ$ (see Figure~\ref{fig:fase_C2-D}, left side). In the latter case $r$ is moved inside $C(R)$ by means of sub-phase $\E$.

Once the robots in $\partial C(R)$ have formed the required triangle (case in which $F$ admits a multiplicity of $k$ elements in $c(F)$), sub-phase $\D$ can move  the $k-1$ robots closest to $c(R)$ toward it (cf move $m_6$ of Table~\ref{tab:F1-phases} and Figure~\ref{fig:fase_C2-D}). The specific triangle formed by the robots in $\partial C(R)$ assures that during sub-phase $\D$ the configuration remains always a leader configuration. 

The last sub-phase of $\Funo$ (sub-phase $\E$) is applied after creating a multiplicity of $k-1$ robots in $c(R)$ if $F$ requires $k$ elements in $c(F)$, and when exactly three robots are in $\partial C(R)$. Among such robots, there are two antipodal ones $g'$ and $g''$ plus a third one $r$. Robot $r$ is moved inside $C(R)$. Sub-phase $\E$ is logically divided into $\Euno$ and $\Edue$; this is because the same moves of $\Auno$ and $\Adue$ are used to perform the required task (cf Figure~\ref{fig:fase_C2-D} right side with Figure~\ref{fig:fase_E} left side).

% ---------------------------------------------- FASE F2

\begin{figure*}[t]
\begin{center}
  \resizebox{0.53\textwidth}{!}{\input{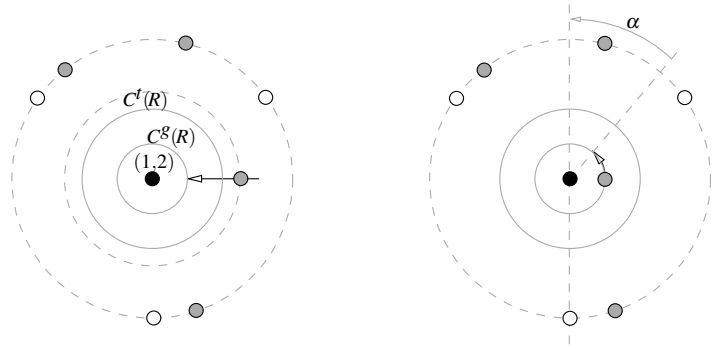}}
\caption{ Zooming on the circles $C_{\uparrow}^{i}(R)$, $0\le i\le 2$, of the configuration shown in Figure~\ref{fig:fase_E} (right side). On the left side, the guard circle $C^g(R)$ and the teleporter circle $C^t(R)$ are shown. On the right side, the configuration obtained at the end of sub-phase $\Guno$ after move $m_7$ lead a robot on the guard circle. The obtained configuration will be processed by the sub-phase $\Gdue$ to place the internal guard at the reference angle $\alpha$.}
\label{fig:fase_G}
\end{center}
\end{figure*}

\subsection{Phase $\Fdue$}
This phase is responsible for setting the internal guard $g$. In particular, $g$ is initially identified as the closest robot to $c(R)$ (excluding possible robots in $c(R)$ required to compose a multiplicity), and hence moved on the guard circle $C^g(R)$ (cf Definition~\ref{def:guard-disk}). Such a move is performed either by sub-phase $\G$ (logically divided into $\Guno$ and $\Gdue$) or by sub-phase $\H$. Table~\ref{tab:F2-phases} describes all such sub-phases, and also the corresponding invariants and moves. 

%--------------------------------
\noindent
\begin{table*}
\caption{ Invariants and moves for all the sub-phases of $\Fdue$.}
\label{tab:F2-phases}
\begin{center}
\begin{minipage}{.4\linewidth} % ^^^^^^^^^^^^^^^^^^^^^^^^^^^^^^^^^
\definecolor{grigio}{rgb}{0.9,0.9,0.9}
\bgroup
\rowcolors{1}{white}{grigio}
\def\arraystretch{1.4}%  1 is the default, change whatever you need
\setlength{\tabcolsep}{5pt}
\begin{center}
  \begin{tabular}{ | c | c | c | c |}
    \hline
    \textit{phase}  & \textit{start}  &  \textit{end} \\ \hline\hline
    $\Guno$ & $\GunoS$ & $\GunoE$ \\ \hline 
            & $m_7$ & $\Ftre_s \vee \Fquattro_s \vee \w$ \\ \hline
                        
    $\Gdue$ & $\GdueS$ &  \\ \hline 
            & $m_8$ & $\Ftre_s \vee \Fquattro_s$ \\ \hline         
                             
    $\H$    & $\HS$ & $\HE$ \\ \hline 
            & $m_9$ &  \\ \hline                            
                                        
  \end{tabular}
\end{center}
\egroup
\end{minipage}% ^^^^^^^^^^^^^^^^^^^^^^^^^^^^^^^^^
%\hspace{2mm}
\begin{minipage}{.50\linewidth}% ^^^^^^^^^^^^^^^^^^^^^^^^^^^^^^^^^
\bgroup
\def\arraystretch{1.3}%  1 is the default
\setlength{\tabcolsep}{5pt}
\begin{center}
  \begin{tabular}{ | c | p{0.9 \linewidth} | }
    \hline
    \textit{name}  &  \textit{description} \\ \hline\hline
    
    $m_7$ & 
Let $r$ be the robot on $C_{\uparrow}^{1}(R)$ of minimum view. $r$ moves  according to \SMove toward  $t= (r,c(R)]\cap  C^g(R)$  
     \\ \hline 
     
    $m_8$ & 
Let $r$ be the robot on $C^g(R)$. $r$ rotates along $C^g(R)$ toward the closest point $t$ such that $\angolo(g',c,t)$ (taken in any direction) is equal to $\alpha$
     \\ \hline 
     
    $m_9$ & 
Let $g’$ be the robot on $C(R)$ of minimum view. The robot in $c(R)$ moves toward a point $t$ on $C^g(R)$ such that $\angolo(g',c,t) = \alpha/2$.
     \\ \hline 
                       
   \end{tabular}
\end{center}
\egroup
\end{minipage}% ^^^^^^^^^^^^^^^^^^^^^^^^^^^^^^^^^
\end{center}
\end{table*}% ^^^^^^^^^^^^^^^^^^^^^^^^^^^^^^^^^
%--------------------------------

Sub-phase $\G$ starts when the invariant $\Guno_s = \GunoS$ holds, that is when $\gzero$ is false (i.e., the guard $g$ is not yet on $C^g(R)$) and $\c \Rightarrow \muno$ is true (i.e., if there is one robot on the center $\c(R) = c(F)$ then this is due to the presence of a multiplicity in $c(F)$). An example of such a case is described in Figure~\ref{fig:fase_E} (right side). Sub-phase $\Guno$ then repeatedly applies move $m_7$ (cf Table~\ref{tab:F2-phases}) to move the closest robot to $c(R)$ on $C^g(R)$ (cf Figure~\ref{fig:fase_G}). Note that, similarly to moves $m_1$ and $m_2$, move $m_7$ is performed by invoking Procedure \SMove.

When $\Guno$ is terminated, the sub-phase $\Gdue$ starts with the aim of rotating the robot $g$ placed on $C^g(R)$ so that $g$, $c(R)$, and one of the antipodal robots on $C(R)$, now detected as $g'$, form an angle equal to the reference angle $\alpha$ (cf Definition~\ref{def:alpha}). Move $m_8$ performs this task.
At the end of $\Gdue$, the three guards $g$, $g'$ and $g''$ compose the required common reference system for all robots. 

Sub-phase $\H$ handles the specific configurations in which the invariant $\H_s = \HS$ holds, that is the cases where there is one robot in $c(R)=c(F)$ and there is no multiplicity in $c(F)$. In such a cases, $m_9$ moves the robot away from the center $c(R)$, so that the obtained configuration is subsequently  managed by $\G$.

% ---------------------------------------------- FASE F3

\begin{figure*}[t]
\begin{center}
  \resizebox{0.53\textwidth}{!}{\input{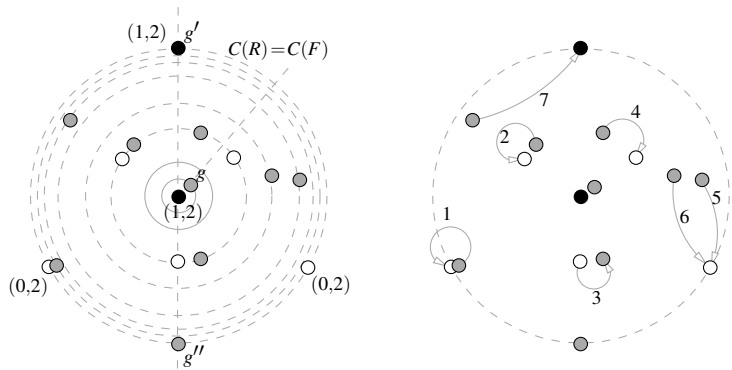}}
\caption{ On the left side, the whole configuration of Figure~\ref{fig:fase_G} (right side). On the right side, the mapping between non-guard robots and targets produced during phase $\Ftre$ (the numbers show the order in which the mapping is produced according to distances).}
\label{fig:fase_F3}
\end{center}
\end{figure*}

\subsection{Phase $\Ftre$}
This phase is responsible for moving all the non-guards robots (i.e., $n-3$ robots) toward the targets. It is composed of three sub-phases, as described in Table~\ref{tab:F3-phases}.

We remark that at the end of $\Fdue$, the required common reference system for all robots has been established (based on the three guards $g$, $g'$ and $g''$). This implies that all robots can now embed $F$ on $C(R)$. This embedding is obtained as follows: 
\begin{itemize}
\item as already observed, $C(F)$ is superimposed on $C(R)$;
\item the counter-clockwise direction for $R$ is assumed to be the one such that 
      $g$ becomes collinear with $g'$ and $c(R)$ by rotating of $\alpha$ degrees;
\item the counter-clockwise direction for $F$ is that defined in Section~\ref{sec:notation} for any multiset of points;
\item let $f'$ be a point in $\partial C(F)$ such that $f'$ is the second point 
      appearing in $V(f)$, being $f$ any point in $\minview(\partial C(F))$;
      $f'$ is superimposed on $g'$, and any other point in $F$ is superimposed
      such that the clockwise direction of $F$ coincides with that of $R$.
\end{itemize}
This embedding is shown in Figure~\ref{fig:fase_F3} (left side). According to this embedding, each robot uses the following mapping $\mu:\{g',g'',g\}\to F$ for determining the final target of each guard:
\begin{itemize}
\item $\mu(g')=f'$;
\item $\mu(g'')=f''$, where $f''$ is the point in $\partial C(F)$ closest to $g''$ 
      (in case of tie, that reachable from $g''$ in the counter-clockwise direction); 
\item $\mu(g)=f$, where $f\in F\setminus \{f',f''\}$ and if $c(F)\in F$ then  
      $f=c(F)$, else $f$ is the point in $\partial C_{\uparrow}^{1}(F)$ closest to $g$ 
      (in case of tie, the first of such points reached by $\halfline(c(R),g)$ when 
      this half-line is turned counter-clockwise around the center).
\end{itemize}

This embedding is maintained along all phase $\Ftre$; we remark that $g$, $g'$ and $g''$ are not moved during this phase. Any other robot, one by one, is moved toward its closest point of $F \setminus \{\mu(g), \mu(g'), \mu(g'')\}$ (see Figure~\ref{fig:fase_F3}, right side). At any time, each robot must determine (1) whether it is already on its target or not (i.e., whether it is \emph{matched} or not), (2) if it is not matched, which is its target, and (3) whether it is its turn to move or not. To this aim, each robot computes the following data:
\begin{itemize}
\item  the sets of matched robots and matched targets, that is 
       $R^m=R\cap(F\setminus \{\mu(g), \mu(g'), \mu(g'')\})$ and 
       $F^m= F\cap R^m$;
\item  the sets of unmatched robots and unmatched targets, that is 
       $R^{\neg m}= R\setminus (R^m \cup \{g, g',g''\})$ and 
       $F^{\neg m}=F\setminus (F^m\cup \{\mu(g), \mu(g'), \mu(g'')\})$;
\item  the minimum distance between unmatched robots and unmatched targets, 
       that is $\eta= \min \{ d(r,f):~ r\in R^{\neg m},~ f\in F^{\neg m} \}$;
\item  the set of unmatched robots at minimum distance from unmatched targets, 
       that is $R^{\neg m}_{\eta} = \{r\in R^{\neg m}:~ d(r,F^{\neg m}) = \eta \}$;
\item  in a given turn, which is the robot $r$ that has to move toward its target, 
       that is $r = \minview( R^{\neg m}_{\eta} )$, and which is the corresponding 
       target $\mu(r)$, that is any point in $\{ f\in F^{\neg m}:\; d(r,f)=\eta\}$.
\end{itemize}  

According to the strategy described in Section~\ref{sec:strategy}, the robot that moves toward its target has two constraints: (1) avoiding undesired collisions (in case the trajectory meets an already matched robot), and (2) avoiding entering into the guard circle (to preserve the common reference system). 

For the former constraint, a Procedure \DistMin\ is designed. The procedure is given in Algorithm~\ref{alg:dist_min} while its description can be found in the corresponding correctness proof provided in Lemma~\ref{lem:corr-DistMin}.

 For the latter, in case the segment $[r,\mu(r)]$ meets the teleporter circle $C^t(R)$, then an alternative trajectory is computed. For this computation, robots also need the following data: the set $F^*=\{f\in F^{\neg m}~:~ d(f,c(F)) ~\text{is minimum} \}$.
\begin{figure}[h]
\begin{center}
  \resizebox{0.23\textwidth}{!}{\input{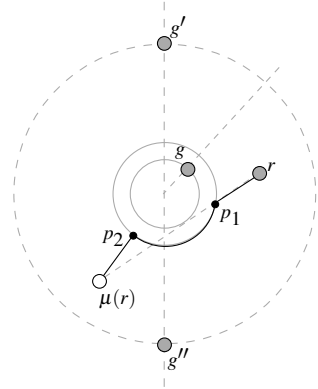}}
\caption{ Visualization of a robot trajectory through the teleporter circle. The robot traces the path represented by the black polygonal curve consisting of two segments and one arc.}
\label{fig:teleporter}
\end{center}
\end{figure}
The new trajectory is divided into three parts (see Figure~\ref{fig:teleporter}): 
\begin{enumerate}
\item[(a)] a collision free trajectory toward the closest point $p_1$ on the teleporter circle, 	
		   that is $p_1 = (r,\mu(r)] \cap C^t(R)$;
\item[(b)] a rotation along $C^t(R)$ toward the closest point 
           $p_2 =[c(F),f^*]\cap C^t(R)$, where $f^* \in F^*$;
\item[(c)] a collision free trajectory toward $f^*$. %$\mu(r)$
\end{enumerate}
Note that in case (c), the destination $f^*$ may differ from the original destination $\mu(r)$ computed in case (a), but this is not a problem since, as shown in the correctness section, no other robot is moved until $r$ reaches its final destination.

%--------------------------------
\noindent
\begin{table*}
\caption{ Invariants and moves for all the sub-phases of $\Ftre$.}
\label{tab:F3-phases}
\begin{center}
\begin{minipage}{.35\linewidth} % ^^^^^^^^^^^^^^^^^^^^^^^^^^^^^^^^^
\definecolor{grigio}{rgb}{0.9,0.9,0.9}
\bgroup
\rowcolors{1}{white}{grigio}
\def\arraystretch{1.4}%  1 is the default, change whatever you need
\setlength{\tabcolsep}{5pt}
\begin{center}
  \begin{tabular}{ | c | c | c | c |}
    \hline
    \textit{phase}  & \textit{start} &  \textit{end} \\ \hline\hline
    $\M$ & $\MS$ & $\ME$ \\ \hline 
            & $m_{10}$ &  \\ \hline
            
    $\N$ & $\NS$ & $\NE$ \\ \hline 
            & $m_{11}$ &  \\ \hline
            
    $\O$ & $\OS$ & $\OE$ \\ \hline 
            & $m_{12}$ &  $\Fquattro_s$ \\ \hline                                                                
                                        
  \end{tabular}
\end{center}
\egroup
\end{minipage}% ^^^^^^^^^^^^^^^^^^^^^^^^^^^^^^^^^
%\hspace{5mm}
\begin{minipage}{.55\linewidth}% ^^^^^^^^^^^^^^^^^^^^^^^^^^^^^^^^^
\bgroup
\def\arraystretch{1.3}%  1 is the default
\setlength{\tabcolsep}{5pt}
\begin{center}
  \begin{tabular}{ | c | p{0.90 \linewidth} | }
    \hline
    \textit{name}  &  \textit{description} \\ \hline\hline
     
    $m_{10}$ & 
    The unique robot $r$ on $C^t(R)$ rotates toward the closest point $p_2 =[c(F),f^*]\cap C^t(R)$, where $f^*\in F^*$.
     \\ \hline 
     
    $m_{11}$ & 
    Robot $r = \minview( R^{\neg m}_{\eta} )$ moves according to \DistMin\ toward the closest point $p_1$ on the teleporter circle, that is $p_1 = (r,\mu(r)] \cap C^t(R)$;.
     \\ \hline 
     
    $m_{12}$ & 
    Robot $r = \minview( R^{\neg m}_{\eta} )$ moves according to \DistMin\  toward $\mu(r)$.
      \\ \hline 
                    
   \end{tabular}
\end{center}
\egroup
\end{minipage}% ^^^^^^^^^^^^^^^^^^^^^^^^^^^^^^^^^
\end{center}
\end{table*}% ^^^^^^^^^^^^^^^^^^^^^^^^^^^^^^^^^
%--------------------------------

The algorithm (see Table~\ref{tab:F3-phases}) performs such a new trajectory by moving robots along the teleporter circle as follows. Sub-phase $\M$ concerns case (b) above, that is configurations fulfilling invariant $\M_s = \MS$ (i.e., configurations where there is a unique robot $r$ on the $C^t(R)$ but $r$ does not coincide with a point $p =[c(F),f^*]\cap C^t(R)$, where $f^* \in F^*$). Then, move $m_{10}$ rotates $r$ toward such a point $p$. Sub-phase $\N$ concerns case (a) above. Via move $m_{11}$ (that invokes \DistMin), $\N$ is in charge of leading $r$ on a point on $C^t(R)$ if needed (cf $\ddue$ in $\N_s = \NS$). Finally, sub-phase $\O$ is concerned with case (c) above, that is when the trajectory from $r$ to its target $\mu(r)$ does not meet the teleporter circle. Move $m_{12}$ (that invokes \DistMin) performs this final task.

%--------------------------------------- Algo DistMin()
\begin{algorithm}[ht]{{%\small 
\SetKwComment{Comment}{/*}{*/}
\SetKwInput{Proc}{Procedure}
\Proc{\DistMin}
\SetKwInOut{Input}{Input}
\Input{A target $f$.}
\BlankLine
    \If{there are no robots between $r$ and $f$}{move toward $f$\label{dm:move1}}
    \Else{
            $\bar{r}=\argmin_{x } \{ d(r,x):~ x \in R\cap [r,f] \}$ \label{dm:f-bar}\;
let $\ell$ be one of the half-lines starting from $\bar{r}$, perpendicular to $[r,f]$, and on an open half-plane that does not include $c(R)$\label{dm:ell2}\;	
            $P =\{ p = \ell\cap \halfline(r,x):~ p\neq \bar{r} ~\mbox{and}~ x\in R\setminus\{r\}\}$
             \label{dm:P}\;
            $p'=\ell \cap C(R)$ \label{dm:P1}\;
            let $p''$ be the intersection between $\ell$ and the circle centered in $f$ 
            of radius $[f,r]$\label{dm:P2}\;
            let $p'''$ be the intersection, if it exists, between $\ell$ and a side of 
            the cell of the Voronoi diagram induced by $F^{\neg m}$ where $f$ 
            lies\label{dm:P3}\;
            $\bar{p} = \argmin_{x} \{ d(\bar{r},x):~ x\in P\cup \{p',p'',p'''\} \}$ 
            \label{dm:p-bar}\;
            let $p$ be the median point in $[\bar{r},\bar{p}]$ \label{dm:p}\;
            move toward $p$ \label{dm:move2}\;
    }
}}
\caption{ Procedure \DistMin\ performed by any robot $r$ when moves $m_{11}$ or $m_{12}$ must be executed.}
\label{alg:dist_min}
\end{algorithm}
%---------------------------------------fine Algo DistMin()

\begin{lemma}\label{lem:corr-DistMin}
Procedure $\DistMin$ performed by a robot $r$ with input a target $f$ always moves $r$ avoiding collisions with other robots either toward $f$, if there are no robots between $r$ and $f$, or toward a point $p$ fulfilling the following conditions:
\begin{enumerate}
 \item $p$ is inside both $C(R)$ and the cell $D_{f}$ of the Voronoi diagram induced by $F^{\neg m}$ where $f$ lies;
 \item there is no robot between $p$ and $f$;
 \item $d(f,p)<d(f,r)$.
\end{enumerate}
 Moreover, all the points $x$ reached by $r$ during its movement share the same properties of $p$.
\end{lemma}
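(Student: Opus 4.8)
The plan is to analyze the two branches of Procedure $\DistMin$ separately. The first branch is immediate: if no robot lies on $(r,f)$, the robot moves along the clear segment $[r,f]$ toward $f$, which is collision-free by the guard of line~\ref{dm:move1}, so the ``either'' alternative of the statement holds. All the real work concerns the second branch, where the robot instead moves to the midpoint $p$ of $[\bar r,\bar p]$. First I would fix convenient coordinates: place $\bar r$ at the origin with $\Line(r,f)$ as the $x$-axis, $r=(-a,0)$ and $f=(b,0)$ with $a,b>0$ (note $\bar r\in(r,f)$ strictly, since $\bar r\in R\cap[r,f]$ while $f\in F^{\neg m}$ is unoccupied), and let $\ell$ be the non-negative $y$-axis, i.e.\ the outward perpendicular. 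Every candidate in $P\cup\{p',p'',p'''\}$ then lies at some height $(0,y)$ with $y>0$, so that $\bar p=(0,H)$ with $H>0$ is the one of minimum height and $p=(0,H/2)$. The key elementary observation is that any robot $x=(x_1,x_2)$ with $x_2>0$ and $x_1>-a$ determines the point $\ell\cap\halfline(r,x)=(0,\,a x_2/(x_1+a))\in P$, whose height is therefore at least $H$ by the minimality defining $\bar p$.

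The core of the argument is then to prove that the whole triangle $T=\triangle(r,p,f)$ contains no robot other than at the vertex $r$ and, possibly, on its open base $(r,f)$; this single fact yields both collision-freeness of $[r,p]$ and condition~2 at once. Indeed, a robot $x$ lying in $T$ with $x_2>0$ satisfies $x_1>-a$ and $0<x_2\le H/2$, since from $r$ the triangle is swept by the rays comprised between $[r,f]$ and $[r,p]$; hence its associated height $a x_2/(x_1+a)$ is at most $H/2<H$, contradicting the lower bound $H$ established above. Thus $[r,p]$ carries no robot (collision-freeness), and for any $q\in(r,p]$ the segment $[q,f]$ carries no robot either: such a robot would belong to $T$, the on-base robots (with $x_2=0$) are excluded because $[q,f]$ meets the $x$-axis only at $f$ once $q\neq r$, and the off-base robots are excluded by the height contradiction. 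This is the step I expect to be the main obstacle, precisely because it is the factor $1/2$ in the definition of $p$ that forces the projection of any interior robot to fall strictly below $\bar p$; getting this geometric bookkeeping exactly right, rather than merely ``before $\bar p$'', is what makes condition~2 work.

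Finally I would settle conditions~1 and~3 together with the ``moreover'' clause. Since $\bar p$ is no farther from $\bar r$ than each of $p'$, $p''$, $p'''$, the midpoint $p$ lies strictly before each of them along $\ell$: strictly inside $C(R)$ (before $p'=\ell\cap C(R)$, using that $\bar r$ is interior and the disk is convex), strictly inside the Voronoi cell $D_f$ (before $p'''$, using $\bar r\in[r,f]\subseteq D_f$ by convexity of the cell), and with $d(f,p)<d(f,r)$ (before $p''$, the point where $d(f,\cdot)=d(f,r)$; here $d(f,q)^2=d(f,\bar r)^2+d(\bar r,q)^2$ increases along $\ell$, and $d(f,\bar r)<d(f,r)$ because $\bar r\in(r,f)$). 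For the trajectory points I would invoke convexity twice: $r\in D_f$ because $f=\mu(r)$ is a nearest unmatched target of $r$, and $r$ is inside $C(R)$, so convexity of $C(R)$ and of $D_f$ makes every point of $[r,p]$ inherit condition~1; and convexity of $q\mapsto d(f,q)$ on $[r,p]$ with $d(f,p)<d(f,r)$ gives $d(f,q)<d(f,r)$ for all $q\in(r,p]$, which is condition~3.

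The remaining care concerns the degenerate situations in which $\bar r$ happens to lie on $\partial C(R)$ or on $\partial D_f$, so that $p'$ (resp.\ $p'''$) collapses onto $\bar r$ and the construction of $p$ becomes trivial. I would argue that for the obstacles $\bar r$ actually arising in phase $\Ftre$ this does not occur (the moving robot and its obstacles are interior to $C(R)$, and $[r,f]$ is not contained in a Voronoi boundary), so that $\bar p\neq\bar r$ and $p$ is a genuine, well-defined point strictly separated from all constraints; this is the only place where a case distinction beyond pure planar geometry is needed.
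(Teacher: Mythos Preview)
Your proposal is correct and follows essentially the same approach as the paper: the roles of the set $P$ (for collision avoidance and condition~2) and of the three bounding points $p',p'',p'''$ (for conditions~1 and~3) are identical, and your contradiction argument for condition~2 is the precise coordinate version of the paper's one-line claim that a robot between $x$ and $f$ would force $\halfline(r,\cdot)$ to hit $\ell$ strictly before $\bar p$. The only difference is presentational: you unify collision-freeness and condition~2 via the single triangle $T=\triangle(r,p,f)$ and make the factor $1/2$ explicit, whereas the paper treats them in two separate sentences at the same level of rigor.
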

\begin{proof}
 At Line~\ref{dm:move1}, Procedure $\DistMin$ moves $r$ toward $f$ when there are no robots between $r$ and $f$. As the movement is straightforward and since both  $D_{f}$ and the disk enclosed by $C(R)$ are convex, all the points $x$ reached by $r$ during its movement are inside $C(R)$ and $D_{f}$.
 
 If there are robots between $r$ and $f$, among such robots the procedure, at Line~\ref{dm:f-bar}, identifies as $\bar{r}$ the closest to $r$. The point $p$ is calculated on one of the two half-lines perpendicular to $[r, f]$ in $\bar{r}$, in accordance to the position of $c(R)$, see Line~\ref{dm:ell2}.
 On $\ell$, a set $P =\{ p = \ell\cap \halfline(r,x):~ p\neq \bar{r} ~\mbox{and}~ x\in R\setminus\{R\} \}$ is calculated at Line~\ref{dm:P}. The target $p$ is different by any point in $P$: being these points on the lines between $r$ and any another robot, this will assure that the movement will be free by further collisions.
 To set the exact position of $p$ on $\ell$, three other points are calculated at Lines~\ref{dm:P1},~\ref{dm:P2}, and~\ref{dm:P3}. The first one is $p'$, that is the intersection of $\ell$ and $C(R)$. The target $p$ is such that $d(\bar{r},p)< d(\bar{r},p')$, then all the points $x$ reached by $r$ during its movement are inside $C(R)$. The second one is $p''$, that is the intersection between $\ell$ and the circle centered in $f$ 
 of radius $[f,r]$. The target $p$ is such that $d(\bar{r},p)< d(\bar{r},p'')$, then all the points $x$ reached by $r$ during its movement are closer and closer to $f$.  The third one, if exists, is $p'''$, that is the intersection of $\ell$ and a side of cell $D_{f}$. The target $p$ is such that $d(\bar{r},p)< d(\bar{r},p''')$, then all the points $x$ reached by $r$ during its movement are inside this cell. This assures that the points $x$ reached by $r$ are closer to $f$ than any other target. Moreover, there is no robot between $x$ and $f$. In fact, if such robot $r'$ exists, the line $\halfline(r,x)$ would intersect $\ell$ in a point closest to $\bar{r}$ than each point in $P$, a contradiction as the target $p$ is the closer one. 
 
Finally, at Line~\ref{dm:p}, the point $p$ is set at a position fulfilling all the above constraints. In addition, notice that the choice of half-line $\ell$ ensures that $p$ cannot be on or inside $C^t(R)$; this implies that the procedure can be safely applied. In turn, all such properties prove that the claim holds.
\qed 
\end{proof}

% -------------------------------------------------------- FASE F4
\subsection{Phase $\Fquattro$}\label{ssec:f4}

This phase concerns the finalization steps, where the last three robots (the guards) must be moved to their targets to complete the formation of pattern $F$. In particular, since in this phase we use the embedding defined in phase $\Ftre$, $g'$ is already matched and hence at most $g$ and $g''$ remain to be moved. Moving the guards leads to the loss of the common reference system, and hence ad-hoc moves must be designed to complete the pattern. 

The algorithm for this phase is composed of two sub-phases denoted as $\P$ and $\Q$  (see Table~\ref{tab:F4-phases}).  The former handles the majority of configurations by moving first $g''$ and then $g$ toward the respective targets. The latter manages some special cases where $g''$ is critical for $C(R)$. In particular, predicate $\q$ (informally introduced in Table~\ref{tab:basic-variables}) is used to characterize such special cases. Formally:

%--------------------------------
\noindent
\begin{table*}
\caption{ Invariants and moves for all the sub-phases of $\Fquattro$.}
\label{tab:F4-phases}
\begin{center}
\begin{minipage}{.30\linewidth} % ^^^^^^^^^^^^^^^^^^^^^^^^^^^^^^^^^
\definecolor{grigio}{rgb}{0.9,0.9,0.9}
\bgroup
\rowcolors{1}{white}{grigio}
\def\arraystretch{1.4}%  1 is the default, change whatever you need
\setlength{\tabcolsep}{5pt}
\begin{center}
  \begin{tabular}{ | c | c | c | c |}
    \hline
    \textit{phase}  & \textit{start}  &  \textit{end} \\ \hline\hline
    $\Puno$ & $\PunoS$ & $\PunoE$ \\ \hline 
            & $m_{13}$ &  \\ \hline
            
    $\Pdue$ & $\PdueS$ & \\ \hline 
            & $m_{14}$ &  $\w$ \\ \hline
            
    $\Quno$ & $\QunoS$ & $\QunoE$ \\ \hline 
            & $m_{15}$ &  \\ \hline
            
    $\Qdue$ & $\QdueS$ & $\QdueE$ \\ \hline 
            & $m_{16}$ &  \\ \hline
            
    $\Qtre$ & $\QtreS$ & $\QtreE$ \\ \hline 
            & $m_{17}$ &  $\w$ \\ \hline                                                              

    $\Qquattro$ & $\QquattroS$ & \\ \hline 
            & $m_{13}$ & $\w$ \\ \hline                                                              
                                               
  \end{tabular}
\end{center}
\egroup
\end{minipage}% ^^^^^^^^^^^^^^^^^^^^^^^^^^^^^^^^^
%\hspace{5mm}%
\begin{minipage}{.60\linewidth}% ^^^^^^^^^^^^^^^^^^^^^^^^^^^^^^^^^
\bgroup
\def\arraystretch{1.3}%  1 is the default
\setlength{\tabcolsep}{5pt}
\begin{center}
  \begin{tabular}{ | c | p{0.90 \linewidth} | }
    \hline
    \textit{name}  &  \textit{description} \\ \hline\hline

    $m_{13}$ & 
$g''$ rotates toward $\mu(g'')$. 
     \\ \hline 
     
    $m_{14}$ & 
$g$ moves toward $\mu(g)$. 
     \\ \hline 
     
    $m_{15}$ & 
$g$ moves toward $C(R)\cap \halfline(c(R),g)$.
     \\ \hline 

    $m_{16}$ & 
$g''$ rotates along $C(R)$ toward the closest point among $\mu(g'')$ and the antipodal point to $g$. 
     \\ \hline 
            
     $m_{17}$ & 
$g$ rotates toward $\mu(g)$. 
     \\ \hline  
                              
   \end{tabular}
\end{center}
\egroup
\end{minipage}% ^^^^^^^^^^^^^^^^^^^^^^^^^^^^^^^^^
\end{center}
\end{table*}% ^^^^^^^^^^^^^^^^^^^^^^^^^^^^^^^^^
%--------------------------------

\begin{definition}\label{def:q}
Let $F=\{f_1,f_2,\ldots,f_n\}$ be a pattern to be formed. We say that the predicate $\q$ holds if $F$ fulfills the following conditions:
\begin{enumerate}
\item $\partial C(F) = F$;
\item $F$ does not contain multiplicities;
\item assuming $f_1 = \minview(F)$ and $(f_1,f_2,\ldots,f_n)$ as the counter-clockwise sequence of points on $C(F)$, then 
$\angolo(f_n,c(F),f_2) > 180^{\circ}$, where such an angle is obtained by rotating  $\halfline(c(F),f_n)$ counter-clockwise.
\end{enumerate}
\end{definition}
Figure~\ref{fig:fase_Q}.$(a)$ shows an embedding of a pattern $F$ in which $\q$ holds: note that $r_1$ is $g$, $r_2$  is $g''$ and $f_2$ is matched with $g'$. 

Sub-phase $\P$ is divided into $\Puno$ and $\Pdue$ to manage the moves of $g''$ and $g$, respectively. We now describe each sub-phase: its invariant, the task it performs, and the corresponding move. 

In $\Puno$, $g''$ rotates along $C(R)$ toward $\mu(g'')$. Each robot can recognize this phase by performing, 
in order, the following steps:
\begin{enumerate}
\item test whether $\guno$ holds; 
\item if the previous test is passed, it uses the same embedding defined in phase 
      $\Ftre$: this embedding allows to recognize the guard $g'$, and, in turn, 
      to determine $g''$ as the robot on $C(R)$ closest to the antipodal point of $g'$;
\item\label{p1-test}  finally, it tests whether such an embedding makes 
      $R\setminus \{g,g''\}$ similar to $F\setminus \{\mu(g),\mu(g'')\}$ and $g''\not = \mu(g'')$, 
      where also the targets $\mu(g)$ and $\mu(g'')$ are those defined in phase $\Ftre$.
\end{enumerate}
The result of test at Item~\ref{p1-test} above can be seen as the value of an invariant $\iuno$ (cf phase $\Puno$ at Table~\ref{tab:F4-phases}). When a robot checks that $\iuno$ holds and recognizes itself as $g''$, it simply applies move $m_{13}$ to complete the rotation along $C(R)$ to reach its target $\mu(g'')$.
For instance, referring to Figure~\ref{fig:fase_F3} right side, once all non-guard robots are correctly placed during $\Ftre$, the configuration belongs to $\Puno$, and $g''$ rotates in the clockwise direction along $C(R)$ to compose the multiplicity on the left. This is in fact the closest point on $C(R)$ to $g''$ (in the clockwise direction as there is a tie to break) with respect to the defined embedding of $F$.
Once also $g''$ is correctly positioned on its target, as we are going to see, $\idue$ holds and only $g$ remains to move toward $\mu(g)$ to finalize $F$. In the specific example of  Figure~\ref{fig:fase_F3}, $\mu(g)=c(R)$.
 
The movement of $g$ is realized in $\Pdue$ via a straight move of $g$ toward $\mu(g)$. 
Each robot can recognize this phase by performing, in order, the following steps:
\begin{enumerate}
\item compute the set $E = \{ (r,f):~(\{r\}=c(R) \vee \{r\}=\partial C^1_{\uparrow}(R)) \wedge 
      R\setminus \{r\} \mbox{ is similar to } F\setminus \{f\} \}$; 
\item\label{p2-test} test whether there exists a pair $(r,f)\in E$ such that 
      $f=c(F) \vee (c(F)\not \in F) \wedge f \in C^1_{\uparrow}(F) \wedge  d(c(R),r) < d(c(F),f)$;
\item if there are many pairs $(r,f)$ that fulfill the previous test, it selects 
      one for which $d(r,f)$ is minimum;
\item the pair $(r,f)$ selected in the previous step allows robots to recognizes 
      the internal guard $g$ (i.e., $g$ coincides with $r$) and the target 
      of $g$ (i.e., the point $f$). 
\end{enumerate}
The result of test at Item~\ref{p2-test} above can be seen as the value of an invariant $\idue$. When a robot checks that $\idue$ holds and recognizes itself as $g$, it simply applies move $m_{14}$ to complete the movement toward the remaining unmatched target $f$. 

\smallskip
To ensure a correct finalization even when $\q$ holds and hence when $g''$ might be critical for $C(R)$, the sub-phase $\Q$ is divided into four sub-phases. They are responsible for: 
\begin{itemize}
\item $\Quno$: moving $g$ radially toward $C(R)$; 
\item $\Qdue$: rotating $g''$ of at most $\alpha$ along $C(R)$; 
\item $\Qtre$: rotating $g$ from the position acquired at $\Quno$ along $C(R)$ toward its target; 
\item $\Qquattro$: rotate again $g''$ if necessary  along $C(R)$ toward its target.
\end{itemize}
The movement of $g$ toward its target performed in two steps guarantees to avoid possible symmetries.
We now describe each sub-phase: its invariant, the task it performs, and the corresponding move. To recognize each phase among $\Quno,\ldots,\Qquattro$, robots perform the following test:
\begin{itemize}
\item[1)] test whether there exists an embedding of $F$ such that $r_2,r_3,\ldots,r_{n-1}$ are matched with $f_2,f_3,\ldots,f_{n-1}$, respectively. We recall that here predicate $\q$ holds, hence points in $F$ fulfill all the conditions in Definition~\ref{def:q} (see Figure~\ref{fig:fase_Q}.$(a)$).
\end{itemize}

\begin{figure*}[h]
\begin{center}
  \resizebox{0.99\textwidth}{!}{\input{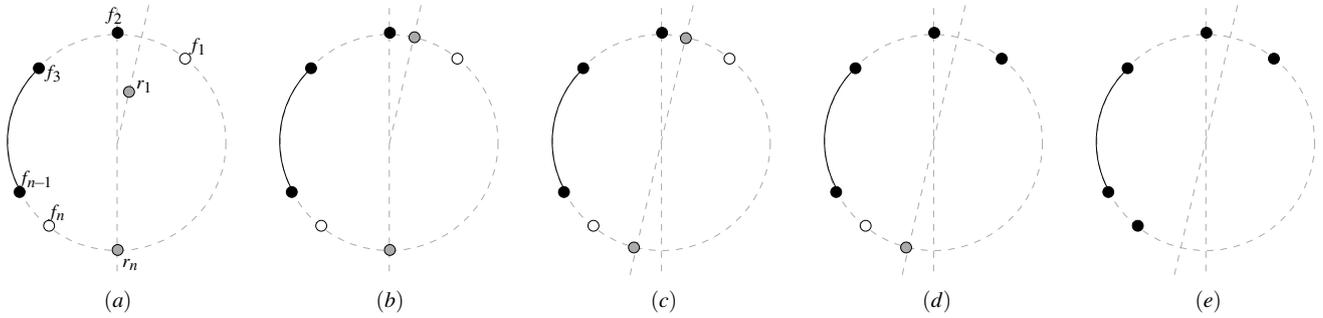}}
\caption{ Visualization of some configurations belonging to $\Q$ (points $f_4, \ldots, f_{n-1}$, if any, all lie on the black arc). In (a), a configuration where $\Quno_s$ holds, in (b) $\Qdue_s$ holds, in (c) $\Qtre_s$ holds, in (d) $\Qquattro_s$ holds, and finally in (e) $\w$ holds. }
\label{fig:fase_Q}
\end{center}
\end{figure*}

\noindent
As we are going to show in the correctness section, in order to fulfill the required conditions $R$ must is asymmetric. Hence the above test always determines a unique ordering for the robots. After, robots perform two additional tests, according to the specific sub-phases to be recognized. 
%
% -----
Concerning $\Quno$, the following additional tests are needed:
\begin{itemize}
\item[2)] test whether both $\partial C^1_{\uparrow}(R) = \{r_1\}$ and
          $\angolo(r_1,c(R),f_2)=\alpha$ hold; 
\item[3)] test whether $r_n$ is antipodal to $r_2$ (which coincides with $f_2$).
\end{itemize}
If all the previous tests are passed, robots recognize $g$ as $r_1$, $g'$ as $r_2$, and $g''$ as the antipodal robot to $g'$. The result of such a process can be seen as the value of an invariant $\itre$. When a robot recognizes itself as $g$ and checks that $\q \wedge \itre$ holds, it simply applies move $m_{15}$ to move radially toward $C(R)$ (cf cases $(a)$ and $(b)$ of Figure~\ref{fig:fase_Q}). 

% -----
Concerning $\Qdue$, the following additional tests are needed:
\begin{itemize}
\item[2)] test whether $r_1$ is on $C(R)$ between $f_1$ and $f_2$ such that 
          $\angolo(r_1,c(R),f_2)=\alpha$;
\item[3)] test whether $r_n$ is on $C(R)$ such that $r_n\neq f_n$ and 
          $\angolo(r_n,c(R),p) < \alpha$, where $p$ is the antipodal point 
          to $f_2$.
\end{itemize}
If all the previous tests are passed, robots recognize $g$ as $r_1$, $g'$ as $r_2$, and $g''$ as the closest robot to $p$. The result of such a process can be seen as the value of an invariant $\iquattro$. When a robot checks that $\q \wedge \iquattro$ holds and recognizes itself as $g''$, it applies move $m_{16}$ to rotate of at most an angle $\alpha$ from $p$ or to reach its target $f_n$ if residing before $p$. This is done to maintain $C(R)$, being $g''$ critical (cf cases $(b)$ and $(c)$ of Figure~\ref{fig:fase_Q}).

% ------
Now, notice that according to the definition of view of $F$ given in Section~\ref{ssec:view} and according to Definition~\ref{def:q}, $\angolo(f_1,c(F),f_2)=3\alpha$. Hence, for sub-phase $\Qtre$ the following additional tests are needed:
\begin{itemize}
\item[2)] test whether $r_1$ is on $C(R)$ between $f_1$ and $f_2$ such that 
          $\alpha \le \angolo(r_1,c(R),f_2) < 3\alpha$;
\item[3)] test whether $r_n$ is on $C(R)$ such that $r_n = f_n$ or
          $\angolo(r_n,c(R),p)=\alpha$, where $p$ is the antipodal point to $f_2$.
\end{itemize}
If all the previous tests are passed, robots recognize $g$ as $r_1$, $g'$ as $r_2$, and $g''$ as the closest robot to $p$. The result of such a process can be seen as the value of an invariant $\icinque$. When a robot checks that $\q \wedge \icinque$ holds and recognizes itself as $g$, it applies move $m_{17}$ to complete the rotation along $C(R)$ to reach its target $f_1$  (cf cases $(c)$ and $(d)$ of Figure~\ref{fig:fase_Q}).

% ------
Finally, for $\Qquattro$ the following additional tests are needed:
\begin{itemize}
\item[2)] test whether $r_1$ is matched with $f_1$, according to the embedding 
          defined at test 1;
\item[3)] test whether $r_n$ is on $C(R)$ such that $r_n\neq f_n$ and 
          $\alpha \le \angolo(r_n,c(R),p) < \angolo(f_n,c(R),p)$, where $p$ 
          is the antipodal point to $f_2$.
\end{itemize}
If all the previous tests are passed, robots recognize $g$ as $r_1$, $g'$ as $r_2$, and $g''$ as the closest robot to $p$. The result of such a process can be seen as the value of an invariant $\isei$. When a robot checks that $\q \wedge \isei$ holds and recognizes itself as $g''$, it applies move $m_{13}$ to reach $f_n$  (cf cases $(d)$ and $(e)$ of Figure~\ref{fig:fase_Q}).

% ---------------------------------------------- FASE F5

\subsection{Phase $\Fcinque$}
As remarked in Section~\ref{sec:strategy}, to solve \apf it is necessary to face the  $\SB$ sub-problem, that is to break possible symmetries in the initial leader configuration provided to the algorithm. We recall that the initial symmetric configurations handled by the algorithm consist of any configuration $R$ without multiplicities fulfilling one of the following conditions:
\begin{itemize}
\item[(1)] there exists a unique reflection axis $\ell$ for $R$ such that $|R\cap \ell| \ge 1$;
\item[(2)] there exists a rotational symmetry in $R$ and $c(R)\in R$.
\end{itemize}
The strategy used to address this sub-problem is to carefully move one robot away from the axis (if case (1) occurs) or the robot away from the center (if case (2) occurs), in order to obtain a stationary asymmetric configuration. In fact, during this phase multiplicities are not created. In particular, configurations with a reflection axis are always preliminary transformed by moving one robot along the axis $\ell$ toward $c(R)$, if possible; otherwise (cf Figure~\ref{fig:controesempio-fr}), a robot on $\ell$ is moved `sufficiently faraway' along $\ell$ and then rotated on $C(R)$.
Whereas, if $c(R)$ is reached or case (2) occurs, the robot in $c(R)$ is moved radially away from the center. 
In all the cases, in order to maintain stationarity, we aim to obtain a configuration different from those addressed in the other phases $\Funo, \ldots,\Fquattro$ until the moving robot does not reach a specific target.  

The strategy for breaking the symmetries presents some complexities that lead the algorithm for this phase to be composed of many different moves. As in the previous phases, each of such moves is referred to a sub-phase; Table~\ref{tab:F5-phases} describes all such sub-phases, and also the corresponding invariants and moves. We now provide a description of each sub-phase.
%
%--------------------------------
\noindent
\begin{table*}
\caption{ Invariants and moves for all the sub-phases of $\Fcinque$. 
}
\label{tab:F5-phases}
\begin{center}
\begin{minipage}[t]{.40\linewidth} % ^^^^^^^^^^^^^^^^^^^^^^^^^^^^^^^^^
\definecolor{grigio}{rgb}{0.9,0.9,0.9}
\bgroup
\rowcolors{1}{white}{grigio}
\def\arraystretch{1.4}%  1 is the default, change whatever you need
\setlength{\tabcolsep}{5pt}
\begin{center}
  \begin{tabular}{ | c | c | c | c |}
    \hline
    \textit{phase}  & \textit{start}  &  \textit{end} \\ \hline\hline
    $\T$ & $\TS$ & $\TE$ \\ \hline 
            & $m_{18}$ & $\w$ \\ \hline
            
    $\U$ & $\US$ &  $\UE$ \\ \hline 
            & $m_{19}$ & $\Funo_s\vee \w$ \\ \hline
                        
    $\Vuno$ & $\VunoS$ &  $\VunoE$ \\ \hline 
            & $m_{20}$ & $\w$ \\ \hline
                    
    $\Vdue$ & $\VdueS$ & $\VdueE$ \\ \hline 
            & $m_{21}$ & $\w$ \\ \hline
            
    $\Vtre$ & $\VtreS$ & $\VtreE$ \\ \hline 
            & $m_{22}$ & $\w$ \\ \hline    
            
    $\Vquattro$ & $\VquattroS$ & $\VquattroE$ \\ \hline 
            & $m_{23}$ & $\w$ \\ \hline  

    $\W$ & $\WS$ & $\WE$ \\ \hline 
            & $m_{24}$ & $\Funo_s \vee \Fdue_s \vee \Fquattro_s$ \\ \hline                                                            
                                            
  \end{tabular}
\end{center}
\egroup
\end{minipage}% ^^^^^^^^^^^^^^^^^^^^^^^^^^^^^^^^^
\hspace{2mm}
\begin{minipage}[c]{.50\linewidth}% ^^^^^^^^^^^^^^^^^^^^^^^^^^^^^^^^^
\bgroup
\def\arraystretch{1.3}%  1 is the default
\setlength{\tabcolsep}{5pt}
\begin{center}
  \begin{tabular}{ | c | p{0.9 \columnwidth} | }
    \hline
    \textit{name}  &  \textit{description} \\ \hline\hline
    $m_{18}$ & 
    Let $r$ be the only robot on the axis of symmetry $\ell$. 
    If $\exists$ a rotational-free path for $r$ then 
    $r$ moves toward $c(R\setminus \{r\})$, 
    else $r$ moves toward the closest point $t\in \ell$ such that 
    $R\setminus\{r\}\cup\{t\}$ is a faraway configuration. \\ \hline 

    $m_{19}$ & 
    the faraway robot $r$ rotates along $C(R)$ toward the point $t$ defined in predicate $\zuno$.
    \\ \hline     

    $m_{20}$ & 
    If $\exists$ a rotational-free path for $\robotuno$ then $\robotuno$ moves toward $c(R)$; concurrently, $\robotdue$ moves toward a 
    point that makes predicate $\uuno$ true, without swapping its role with $\robotuno$ nor crossing $C_\downarrow^j(R)$, for any $j$. \\ \hline 

    $m_{21}$ & 
    If $\exists$ a rotational-free path for $\robotuno$ then $\robotuno$ moves toward $c(R)$; Else let $C_\downarrow^i(R)$ be the circle where $\robotuno$ resides, then $\robotuno$ moves toward a point $t$ that halves its distance from $C_\downarrow^{i+1}(R)$.
If $\robotdue \in [t^{60},t^{55})$, then it moves toward $t^x$ if such a point exists else toward $t^{55}$.
    \\ \hline 
    
    $m_{22}$ & 
    $\robotuno$ moves toward $c(R)$. 
    \\ \hline 

    $m_{23}$ & 
    Let $r\in\{\robotuno,\robotdue\}$ be the robot closest to $C(R\setminus \{\robotuno,\robotdue\})$, $\robotuno$ in case of ties. Let $t$ be the closest point to $r$ on $C(R\setminus \{r\} \cup t ) \cap \ell$ with $t$ not critical for $C(R\setminus \{r\} \cup t ) \cap \ell$. $r$ moves toward $t$.
\\ \hline 

    $m_{24}$ & Let $r$ be the robot closest to $c(R)$, and $t$ be a point on $C^g(R)$ s.t. $t$ is not on an axis of symmetry and $\angolo(t,c(R),r')\neq \alpha$ for any $r'\in \partial C(R)$.
        $r$ moves radially toward $t$.
    \\ \hline                           
   \end{tabular}
\end{center}
\egroup
\end{minipage}% ^^^^^^^^^^^^^^^^^^^^^^^^^^^^^^^^^
\end{center}
\end{table*}% ^^^^^^^^^^^^^^^^^^^^^^^^^^^^^^^^^
%--------------------------------

Sub-phase $\T$ concerns any configuration $R$ where predicate $\TS$ holds. Informally, this means that $R$ is symmetric (cf $\bzero$), there is no robots in $c(R)$ (cf $\neg\c$) and hence $R$ must admit one single axis $\ell$ of reflection with robots on it. Actually, there exists a unique robot $r$ on $\ell$ (cf $\buno$),  and no robot has started to move away from $\ell$ (cf $\zuno$). Figure~\ref{fig:fase_F5}.(a) shows an example for such a configuration $R$. Move $m_{18}$ moves $r$ toward $c(R\setminus \{r\})$ if possible. This movement can be performed only when $r$ admits a \emph{rotational-free path} toward $c(R\setminus \{r\})$. Such a path is defined as follows:
\begin{itemize}
\item
given a robot $r\in \ell$, if in the segment $(r,c(R\setminus \{r\}))$ there are no robots and there is no point $t$ such that $R\setminus\{r\}\cup\{t\}$ has a rotational symmetry, then we say that there exists a rotational-free path for $r$ toward $c(R\setminus \{r\})$.
\end{itemize}
If it is not possible for $r$ to reach $c(R\setminus \{r\})$, that is $r$ does not admit a rotational-free path, then move $m_{18}$ moves $r$ along $\ell$ so that a \emph{faraway configuration} $R'$ is created. 

A configuration $R$ is said to be a faraway configuration if in $R$ the following conditions hold:

\begin{itemize}
	\item $\stre$, that is $|\partial C(R)|=3$;
	\item there exists a robot $r$ among the three on $C(R)$ such that $R\setminus \{r\}$ admits one axis of reflection $\ell$ that reflects to each other the other two robots on $C(R)$, referred to as $r_1$ and $r_2$;
	\item $r_1$ and $r_2$ are the furthest robots from $r$;
	\item $\angolo(r_1,r,r_2)\leq 59^\circ$.
\end{itemize}

In such a case, robot $r$ is said to be a \emph{faraway robot}.

Concerning the configuration shown in Figure~\ref{fig:fase_F5}.(a), since $r$ does not admit a rotational-free path, then move $m_{18}$ moves $r$ to create a faraway configuration $R'$. In Theorem~\ref{lem:corr-F5}, it is shown that $r$ becomes a faraway robot and that $\partial C(R')=\{r, r_1,r_2\}$. 

Sub-phase $\U$ handles any faraway configuration $R$. These configurations are characterized by predicate $\zuno$. In particular, since $\partial C(R)=\{r, r_1,r_2\}$, with $r$ being a faraway robot, then move $m_{19}$ moves robot $r$ along $C(R)$ so that eventually $r$ is antipodal to either $r_1$ or $r_2$. It is worth to note that as soon as $r$ leaves the axis $\ell$, the obtained configuration is no longer symmetric. Anyway, predicate $\zuno$ characterizes not only faraway configurations that are symmetric, but also configurations in which robot $r$ is stopped before reaching its target.
\begin{figure*}[h]
\begin{center}
  \resizebox{0.80\textwidth}{!}{\input{fase_F5-2}}
\caption{ Visualization of some configurations belonging to $\Fcinque$. In (a), a configuration where $\T_s$ holds (notice that it corresponds to the counter-example provided in Section~\ref{ssec:ce-3}), in (b) and (c) $\Vuno_s$ holds, in (d) and (e) $\Vdue_s$ holds, in (f) $\Vtre_s$ holds, and in (g) $\Vquattro_s$ holds.
}
\label{fig:fase_F5}
\end{center}
\end{figure*}

Sub-phases $\Vuno, \ldots,\Vquattro$ are responsible for transforming configurations with one axis of reflection and at least two robots on it into configurations having one robot in $c(R)$. In these sub-phases we use the following additional notation. Let $R$ be a configuration with one axis of reflection $\ell$ such that $|R\cap \ell|\ge 2$. Then: 
\begin{itemize}
\item
	It is possible to total order the elements in $R\cap \ell$ by exploiting their distance from $c(R)$ and, in case of ties, by giving priority to robots being not critical, to those admitting rotational-free paths, and then of minimum view;
\item	
	According to the above ordering, we denote by $\robotuno$ the first robot, and by $\robotdue$ the second robot. For instance, robot $\robotuno$ is the only robot moved by $m_{22}$ in sub-phase $\Vtre$ (cf Figure~\ref{fig:fase_F5}.(f)).  Robots $\robotuno$ and $\robotdue$ will be possibly moved concurrently by move $m_{20}$ in sub-phase $\Vuno$; 
	
\item
When $|R\cap \ell|= 2$ and exactly one robot in $R\cap \ell$ is critical (i.e., predicate $\btre$ holds - cf Figure~\ref{fig:fase_F5}.(e)), we need some additional notation. Note that the unique critical robot on $\ell$ is by definition $\robotdue$. 
Let $r'$, $r''$ be the closest robots to $\robotdue$ that belong to $\partial C(R)$. We denote by $t^{60}$ and $t^{55}$ the points on $\ell$ such that $\angolo(r',t^{60},r'')=60^\circ$ and $\angolo(r',t^{55},r'')=55^\circ$, respectively. Moreover, if it exists, let $t^x$ on $\ell$ be the point between $t^{60}$ and $t^{55}$, closest to $t^{60}$ such that $| C(R\setminus \{r_1\}\cup \{t^x\}) \cap R| > 3$. Move $m_{21}$ uses points $t^{60}$, $t^{55}$, and $t^x$ to define the movement of robot $\robotdue$.
\end{itemize}

Sub-phase $\Vuno$ concerns any configuration $R$ where predicate $\VunoS$ holds. Informally, this means that $R$ is symmetric (cf $\bzero$), there exist at least two robots on the axis of reflection $\ell$ and at least two of them are not critical for $C(R)$ (cf $\bdue$), there is no robot in $c(R)$ (cf $\neg\c$), or $\robotdue$ has not yet reached its target (cf $\neg\uuno$). Figures~\ref{fig:fase_F5}.(b)-(c) show examples for such a configuration $R$. The move planned for this phase is $m_{20}$. If $\robotuno$ admits a rotational-free path then $m_{20}$ moves $\robotuno$ toward the center of the current configuration, like in Figure~\ref{fig:fase_F5}.(b). Concurrently, move $m_{20}$ makes $\robotdue$ moving along the axis toward a point such that $\robotdue$ remains the unique robot on $C^i_\uparrow(R)$ at a distance form $c(R)$ which is in the middle between $C^{i-1}_\uparrow(R)$ and $C^{i+1}_\uparrow(R)$. Concerning Figure~\ref{fig:fase_F5}.(c), notice that as soon as $\robotdue$ starts moving, a rotational-free path for $\robotuno$ is created; hence the two robots move concurrently. During the concurrent movements, $\robotdue$ must take care to not swap its role with $\robotuno$, hence maintaining bigger its distance from $c(R)$ with respect to $\robotuno$. 

Sub-phase $\Vdue$ concerns any configuration $R$ where predicate $\VdueS$ holds. Informally, this means that $R$ is symmetric (cf $\bzero$), there are two robots on the reflection axis $\ell$ and exactly one of them is critical for $C(R)$ (cf $\btre$), there is no robots in $c(R)$ (cf $\neg\c$), or $\robotdue$ has not yet reached its target (cf $\neg\udue$). Figures~\ref{fig:fase_F5}.(d)-(e) show examples for such a configuration $R$. The move planned for this phase is $m_{21}$. If $\robotuno$ admits a rotational-free path then $m_{21}$ moves $\robotuno$ toward $c(R)$ (cf Figure~\ref{fig:fase_F5}.(d)), else $\robotuno$ moves still in the direction of $c(R)$ but without reaching any circle $C^i_\downarrow$ where other robots resides. Concurrently, if $\robotdue$ lies in $[t^{60},t^{55})$ then move $m_{21}$ makes $\robotdue$ move along the axis toward $t^x$ if it exists, or $t^{55}$ like in Figure~\ref{fig:fase_F5}.(e). Notice that, like for $\Vuno$, there might be two robots that move concurrently even though now they do not risk to swap their roles.

Sub-phase $\Vtre$ concerns any configuration $R$ where predicate $\VtreS$ holds, that is symmetric configurations (cf $\bzero$) with three robots on the reflection axis $\ell$, and with exactly two of such robots critical for $C(R)$ (cf $\bquattro$). Moreover, in $R$ there is no robot in $c(R)$ (cf $\neg\c$). An example of such configurations is provided in Figure~\ref{fig:fase_F5}.(f). In such a case it is easy to see that $\robotuno$ admits a rotational-free path. Consequently, move $m_{22}$ moves $\robotuno$ in $c(R)$.

Sub-phase $\Vquattro$ concerns any configuration $R$ where predicate $\VquattroS$ holds, that is symmetric configurations (cf $\bzero$) with two robots only on the reflection axis $\ell$ that are both critical for $C(R)$ (cf $\bcinque$), and with no robots in $c(R)$ (cf $\neg\c$). Assuming $R\cap \ell =\{r_1,r_2\}$, move $m_{23}$ moves the robot $r\in \{r_1,r_2\}$ closest to $C(R\setminus \{r_1,r_2\})$. The target is the point $t\in \ell$ closest to $r$ that is not critical for $C(R\setminus \{r\} \cup t )$. An example of such a case is provided in Figure~\ref{fig:fase_F5}.(g). Notice that when $r$ reaches the target $t$, a configuration belonging to $\Vdue$ is obtained.

Finally, sub-phase $\W$ concerns any configuration $R$ where predicate $\WS$ holds. This means that there are two possibilities for $R$: either $R$ is symmetric with a robot $r$ in $c(R)$, or $R$ is asymmetric, potentially obtained from a symmetric configuration by moving a robot $r$ away from $c(R)$. The movement of $r$ toward $C^g(R)$ is dictated by move $m_{24}$ in a direction that does not form the reference angle $\alpha$ (used in Phase $\Fdue$) nor leaves symmetries. It follows that as soon as $r$ moves from $c(R)$, the configuration is asymmetric. It is possible that the obtained configuration belongs to phase $\Fquattro$ and in particular sub-phase $\Pdue$ in case $r$ is the unique robot left to be correctly moved toward the final target in order to form $F$, that is predicate $\idue$ holds. Since in phase $\W$ predicate $\idue$ must be false when the configuration is asymmetric, robots can always recognize which phase the configuration belongs to, and that $r$ is the only moving robot.

\section{Correctness}\label{sec:correctness}

In this section, we provide all the results necessary to assess the correctness of our algorithm.
To this end, we have to show that for each non-final configuration (that is configurations not satisfying $\w$) exactly one  of the start predicates (predicates for stationary configurations) in Table~\ref{tab:phases} is true. We will show that a stationary configuration satisfying a starting predicate in Table~\ref{tab:phases} will be transformed by robots' moves into a stationary configuration of another phase or in a stationary configuration satisfying $\w$. 
To this end, for each phase (that is for each configuration that satisfies a starting predicate of Table~\ref{tab:phases}), we have to show that exactly one of the starting predicates in the corresponding table (one among Table~\ref{tab:F1-phases},~\ref{tab:F2-phases},~\ref{tab:F3-phases},~\ref{tab:F5-phases},~\ref{tab:F4-phases},~and~\ref{tab:F5-phases}) is true. 
For each applied move, we have to show that during its implementation no undesired symmetry (and hence multiplicity) is created, and all robots but those involved by the move remain stationary. This assures that at the end of the move the configuration is necessarily stationary. Finally, we will show that during a move the starting predicate of Table~\ref{tab:phases} indicating the phase remains unchanged, with the exception of a few remarked situations which do not affect the correctness of the algorithm. 

\begin{lemma}\label{lem:phases-disjoint}
Given an initial leader configuration $R$ and a pattern $F$, if $\w$ does not hold then exactly one of the predicates defining a starting phase of Table~\ref{tab:phases} is true.   
\end{lemma}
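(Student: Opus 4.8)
The plan is to exploit the priority-cascade structure of the five starting predicates listed in Table~\ref{tab:phases}. First I would note that every one of them contains the conjunct $\neg\w$; since the hypothesis is precisely that $\w$ is false, this conjunct is automatically satisfied and may be discarded, so that each starting predicate collapses to a Boolean combination of the four phase-selectors $\fdue,\ftre,\fquattro,\fcinque$:
\[
\Funo_s\equiv\neg\fdue\wedge\neg\ftre\wedge\neg\fquattro\wedge\neg\fcinque,\qquad
\Fdue_s\equiv\fdue\wedge\neg\ftre\wedge\neg\fquattro\wedge\neg\fcinque,
\]
\[
\Ftre_s\equiv\ftre\wedge\neg\fquattro\wedge\neg\fcinque,\qquad
\Fquattro_s\equiv\fquattro\wedge\neg\fcinque,\qquad
\Fcinque_s\equiv\fcinque.
\]

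Next I would argue that each selector is a \emph{total} Boolean function of the pair $(R,F)$. Inspecting Table~\ref{tab:basic-variables} together with Definitions~\ref{def:guard-disk} and~\ref{def:alpha}, every basic variable occurring in $\fdue,\ftre,\fquattro,\fcinque$ returns a definite truth value on the given configuration $R$, under the convention that a variable predicated on a geometric object that may fail to exist for $R$ (for instance $\gdue$, used by $\ftre$, refers to the guard circle $C^g(R)$, which is not defined for every configuration) is read as false. Establishing this well-definedness is, in my view, the only delicate point of the argument.

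Granting it, the conclusion is a purely propositional fact that I would close by a short case analysis along the priority order in which $\fcinque$ ranks highest and $\fdue$ lowest, with $\Funo$ as the default. For exhaustiveness, letting the highest-priority selector that is true single out the relevant case (or the empty case if all four are false), exactly the matching reduced predicate above evaluates to true, so at least one phase holds. For exclusiveness, any lower-priority phase explicitly negates the selector that a higher-priority phase asserts (e.g.\ $\Fcinque_s$ and every other phase disagree on $\fcinque$, and $\Fquattro_s$ and $\Ftre_s$ disagree on $\fquattro$), so no two can hold together; hence exactly one does.

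The hard part, then, is not the Boolean bookkeeping but the well-definedness step: the selectors $\ftre$ and $\fquattro$ depend, through $\gdue$ and through the invariants $\iuno,\dots,\isei$ of Section~\ref{ssec:f4}, on the guard and teleporter circles of Definition~\ref{def:guard-disk} and on the chosen embedding of $F$, all of which are partial. I would dispatch this by fixing the standing convention above---an invariant referring to a nonexistent object is false---so that each selector becomes total; after that the cascade makes the statement immediate. I would also stress that no reasoning about the \emph{moves} enters here, since that content belongs to the subsequent transition lemmas.
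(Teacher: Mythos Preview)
Your proposal is correct and follows essentially the same approach as the paper: both exploit the priority-cascade structure of the predicates over the selectors $\fdue,\ftre,\fquattro,\fcinque$, showing pairwise disjointness by exhibiting a disagreeing selector and exhaustiveness by observing the cascade covers all truth assignments (the paper phrases the latter as an algebraic simplification of the disjunction down to $\neg\w$). Your additional well-definedness discussion is a legitimate caveat that the paper leaves implicit, but it is not needed for the propositional argument itself, which is all this lemma requires.
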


\begin{proof}
First, we show that each phase manages a different set of configurations, that is the logical conjunction of any two predicates among those defining the five starting phases in Table~\ref{tab:phases} is false. Then, we show that the logical disjunction of all the predicates defining the five phases of Table~\ref{tab:phases} along with predicate $\w$ is a tautology.
% ---------------------

The conjunction of $\Funo_s$, with $\Fdue_s$, $\Ftre_s$, $\Fquattro_s$, and $\Fcinque_s$ is false because of variables $\fdue$, $\ftre$, $\fquattro$, and $\fcinque$, respectively.
 Similarly, the conjunction of $\Fdue_s$ with $\Ftre_s$, $\Fquattro_s$, and $\Fcinque_s$ is false because of variables $\ftre$, $\fquattro$, and $\fcinque$, respectively. The conjunction of $\Ftre_s$ with $\Fquattro_s$ and $\Fcinque_s$ is false because of $\fquattro$ and $\fcinque$, respectively. Finally, the conjunction of $\Fquattro_s$ with $\Fcinque_s$ is false because of $\fcinque$.

For the second part of the proof, let us consider 
$\Funo_s \vee \Fdue_s \vee \Ftre_s \vee \Fquattro_s \vee \Fcinque_s$. This expression is equivalent to 
$\neg \w \wedge [
  (\neg \fdue \wedge \neg \ftre \wedge \neg\fquattro \wedge \neg\fcinque )  \vee 
  ( \fdue \wedge \neg \ftre \wedge \neg\fquattro  \wedge \neg\fcinque)   \vee 
  ( \ftre \wedge \neg\fquattro  \wedge \neg\fcinque)   \vee 
  ( \fquattro  \wedge \neg\fcinque )   \vee
                                           \fcinque    ]$. 
Since the sub-expression in square brackets is true, the whole expression is equivalent to $\neg \w$ which is clearly true in disjunction with $\w$.
\qed  
\end{proof}

Lemma~\ref{lem:phases-disjoint} basically shows that our algorithm can take as input any initial configuration for which it has been designed for, that is leader configurations. 

\begin{lemma}\label{lem:F1-disjoint}
Given a configuration $R$ and a pattern $F$, if $\Funo_s$ is true then exactly one of the predicates for the starting phases in Table~\ref{tab:F1-phases} is true.      
\end{lemma}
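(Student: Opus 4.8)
The plan is to prove the statement by partitioning the configuration space according to $|\partial C(R)|$, which by Property~\ref{prop1} is always at least $2$, so that the predicates $\sdue$, $\stre$, $\spiu$ are mutually exclusive and jointly exhaustive; I would then show that inside each of these three cases exactly one start predicate of Table~\ref{tab:F1-phases} holds. Before the case analysis I would isolate one algebraic observation that is used repeatedly: since $\muno$ has $\mzero$ among its conjuncts, one has $\muno \Rightarrow \mzero$, and therefore the multiplicity predicate $\mzero \wedge \neg\muno$ and the implication $(\mzero \Rightarrow \muno)$ are negations of one another, because $\neg(\mzero \wedge \neg\muno) \equiv \neg\mzero \vee \muno \equiv (\mzero \Rightarrow \muno)$. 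This turns the multiplicity-related conjuncts of $\B_s$, $\Cdue_s$, $\D_s$ on one side and of $\Euno_s$, $\Edue_s$ on the other into a clean dichotomy.

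First I would dispatch $\spiu$: every start predicate except $\Auno_s = \AunoS$ and $\Adue_s = \AdueS$ requires $\sdue$ or $\stre$ and so is false, and under $\spiu$ these two reduce to $\l$ and $\neg\l$, so exactly one holds. Next, for $\stre$ I would split on $\l$: if $\neg\l$ only $\Adue_s$ survives, whereas if $\l$ the candidates are $\Cuno_s$, $\Cdue_s$, $\D_s$, $\Euno_s$. Observing that $\tzero$ (no $90^\circ$ angle) and $\tuno$ (the $30$-$60$-$90$ triangle) are mutually exclusive but not exhaustive, the three triangle types $\tzero$, $\tuno$, and $\neg\tzero \wedge \neg\tuno$ select respectively $\Cuno_s$ alone, the pair $\{\D_s,\Euno_s\}$, and the pair $\{\Cdue_s,\Euno_s\}$; in each of the latter two pairs the dichotomy above (between $\mzero\wedge\neg\muno$ and $\mzero\Rightarrow\muno$) picks exactly one. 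Note that $\Euno_s$ carries $\l$ but no triangle-exactness constraint, which is precisely what makes the $\l$-branch of $\stre$ leave no gap.

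The hard part will be the case $\sdue$, where only $\B_s = \BS$ and $\Edue_s = \EdueS$ are candidates. On the $\mzero \wedge \neg\muno$ side $\B_s$ holds (with no constraint on $\l$) and $\Edue_s$ fails; on the complementary side $\mzero \Rightarrow \muno$, $\B_s$ fails and $\Edue_s$ additionally demands $\neg\l$. This exposes an apparent gap, the configurations satisfying $\sdue \wedge \l \wedge (\mzero \Rightarrow \muno)$, covered by neither predicate. I expect this to be the main obstacle, and it is resolved by the hypothesis itself: that formula is literally $\fdue$, and $\Funo_s = \FunoS$ contains the conjunct $\neg\fdue$. Hence under $\Funo_s$ the gap is empty, and in the case $\sdue$ exactly one of $\B_s$, $\Edue_s$ is true.

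Combining the three cases gives the claim. The crux — and the reason the hypothesis $\Funo_s$ is indispensable rather than merely assuming $|\partial C(R)|$ fits one of the sub-phases — is exactly the exclusion of the $\Fdue$-configurations via $\neg\fdue$; everything else reduces to the elementary dichotomy on the multiplicity predicates together with the trichotomy on the triangle type.
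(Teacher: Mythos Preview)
Your proof is correct. The organization differs from the paper's: the paper first simplifies the disjunction $\Auno_s \vee \cdots \vee \Edue_s$ algebraically to $\spiu \vee \stre \vee (\sdue \wedge \neg((\mzero\Rightarrow\muno)\wedge\l))$ to establish ``at least one'' (invoking $\neg\fdue$ exactly where you do, in the $\sdue$ case), and then checks all pairs of predicates for mutual exclusion to get ``at most one''. You instead partition the configuration space structurally by $|\partial C(R)|$, then by $\l$ and the triangle type, and verify ``exactly one'' directly in each cell. Both routes hinge on the same two ingredients --- the dichotomy $\mzero\wedge\neg\muno$ versus $\mzero\Rightarrow\muno$ and the use of $\neg\fdue$ to eliminate the $\sdue\wedge\l\wedge(\mzero\Rightarrow\muno)$ gap --- so the logical content is the same; your decomposition is arguably more transparent about \emph{which} sub-phase fires in each situation, while the paper's pairwise check is more mechanical and less prone to overlooking a cell.
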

\begin{proof}
We first show that at least one predicate among $\Auno_s$, $\Adue_s$, $\B_s$, $\Cuno_s$, $\Cdue_s$, $\D_s$, $\Euno_s$, and $\Edue_s$ is true. 
By simple algebraic transformations, we obtain  $\Auno_s \vee \Adue_s \vee \B_s \vee \Cuno_s \vee \Cdue_s \vee \D_s \vee \Euno_s \vee \Edue_s = \spiu \vee \stre \vee (\sdue \wedge \neg ((\mzero \Rightarrow \muno ) \wedge \l))$.
Note that  $\spiu \vee \stre \vee \sdue$ is true for each configuration, since that expression is referred to all the possibilities about the number of robots on $C(R)$ as we assumed $|R|\geq 4$.  So it is sufficient to show that $\Funo_s \Rightarrow \neg ((\mzero \Rightarrow \muno ) \wedge \l)$ when $\sdue$ is true.
As $\Funo_s$ implies $\neg \fdue = \neg (( \mzero \Rightarrow \muno ) \wedge \sdue \wedge \l) $ (see Table~\ref{tab:phases}), when $\sdue$ holds we trivially get that $\neg (\l \wedge (\mzero \Rightarrow \muno ))$ holds.

We now show that at most one of the predicates for starting phases in Table~\ref{tab:F1-phases} is true. To this end, it is sufficient to show that the logical conjunction of any two predicates is false. In most cases, this is obtained by showing that both the predicates imply the same variable, but with opposite logical values.

\begin{itemize}
\item 
Concerning $\Auno_s = \AunoS$, it is disjoint with $\Adue_s$ because of $\l$. Since $\spiu$ implies $\neg (\sdue \vee \stre)$, then $\Auno_s$ is disjoint with any of the remaining predicates, as for them either $\sdue$ or $\stre$ is true.
\item
Concerning $\Adue_s = \AdueS$,  either it implies $\spiu$ (and then, as above, it differs from all the remaining predicates) or it implies $\stre \wedge \neg \l$. However, $\l$ is positive in all the remaining predicates where $\stre$ holds; predicates $\B_s$ and $\Edue_s$ are both disjoint with $\Adue_s$ because of $\sdue$.
\item
Predicate $\B_s$ is disjoint with all the remaining predicate but $\Edue_s$ because of $\sdue$ (the others require $\stre$). $\B_s$ and $\Edue_s$ are disjoint because of $\mzero \Rightarrow \muno$.
\item
Concerning $\Cuno_s = \CunoS$, it is disjoint with both $\Cdue_s$ and $\Euno_s$ because of $\tzero$, with $\D_s$ because of $\tuno$ (since $\tzero$ implies $\neg \tuno$), and with $\Edue_s$ because $\stre$ implies $\neg \sdue$.
\item
Predicate $\Cdue_s$ is disjoint with $\D_s$ because of $\tuno$, with $\Euno_s$ because of $\mzero \Rightarrow \muno$, and with $\Edue_s$ because $\stre$ implies $\neg \sdue$.
\item 
Predicate $\D_s$ is disjoint with $\Euno_s$ because of $\mzero \Rightarrow \muno$, and with $\Edue_s$ because $\stre$ implies $\neg \sdue$.
\item
Predicate $\Euno_s$ is disjoint with $\Edue_s$ because $\stre$ implies $\neg \sdue$.
\end{itemize}
Summarizing, we get that exactly one of the predicates for the starting phases in Table~\ref{tab:F1-phases} is true when $\Funo_s$ holds.
\qed
\end{proof}

\begin{lemma}\label{lem:F2-disjoint}
Given a configuration $R$ and a pattern $F$, if $\Fdue_s$ is true then exactly one of the predicates for the starting phases in Table~\ref{tab:F2-phases} is true.   
\end{lemma}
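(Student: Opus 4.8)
The plan is to follow the two-part template of Lemma~\ref{lem:F1-disjoint}: first establish that at least one among $\Guno_s$, $\Gdue_s$, $\H_s$ holds under the hypothesis $\Fdue_s$, and then that no two of them can hold simultaneously. Throughout I would exploit the syntactic inclusions coming from the definitions in Table~\ref{tab:basic-variables}, namely $\guno \Rightarrow \gzero$ and $\muno \Rightarrow \mzero$, together with the fact that $\Fdue_s$ forces $\fdue = (\mzero \Rightarrow \muno) \wedge \sdue \wedge \l$ to hold; in particular the conjuncts $(\mzero \Rightarrow \muno)$ and $\sdue$ are available, as is $\neg\ftre$.

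The key preliminary step --- and the one I expect to be the crux --- is to prove that $\guno$ must be false whenever $\Fdue_s$ holds. Here I would invoke Property~\ref{prop1}: since $\sdue$ gives $|\partial C(R)| = 2$, the circle $C(R)$ cannot pass through at least three points, hence the two robots on $C(R)$ are antipodal to each other. Now if $\guno$ were true, its witness $g' \in \partial C(R)$ would be one of these two robots, so its antipodal point on $C(R)$ is occupied by the other boundary robot, which then serves as a $g''$ antipodal to $g'$; therefore $\gdue$ would hold. Combined with $(\mzero \Rightarrow \muno)$ from $\fdue$, this yields $\ftre = (\mzero \Rightarrow \muno) \wedge \gdue$, contradicting the conjunct $\neg\ftre$ of $\Fdue_s$. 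Thus $\neg\guno$ holds.

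For the \emph{at least one} part I would split on the truth of $\c \Rightarrow \muno$. If $\c \Rightarrow \muno$ is true, then when $\neg\gzero$ holds we get $\Guno_s = \neg\gzero \wedge (\c \Rightarrow \muno)$ directly, while when $\gzero$ holds we combine it with the already established $\neg\guno$ to obtain $\Gdue_s = \gzero \wedge \neg\guno \wedge (\c \Rightarrow \muno)$. If instead $\c \Rightarrow \muno$ is false, then $\c \wedge \neg\muno$ holds; using $(\mzero \Rightarrow \muno)$ with $\neg\muno$ by modus tollens we obtain $\neg\mzero$, whence $\H_s = \c \wedge \neg\mzero$ holds. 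So one of the three predicates is always satisfied.

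For the \emph{at most one} part it suffices to show the three pairwise conjunctions are unsatisfiable. Predicates $\Guno_s$ and $\Gdue_s$ are disjoint because of the opposite value of $\gzero$. Each of $\Guno_s$ and $\Gdue_s$ is disjoint from $\H_s$ by the same argument: both $\Guno_s$ and $\Gdue_s$ carry the conjunct $\c \Rightarrow \muno$, so together with the conjunct $\c$ of $\H_s$ they force $\muno$, hence $\mzero$ (since $\muno \Rightarrow \mzero$), contradicting the conjunct $\neg\mzero$ of $\H_s$. Combining the two parts gives that exactly one starting predicate of Table~\ref{tab:F2-phases} is true whenever $\Fdue_s$ holds, as required. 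The only genuinely geometric ingredient is the antipodality argument used to rule out $\guno$; the remainder is pure Boolean manipulation of the defining predicates.
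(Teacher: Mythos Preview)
Your proposal is correct and follows essentially the same approach as the paper: both hinge on the observation that under $\sdue$ the two robots on $C(R)$ are antipodal, so $\guno$ would force $\gdue$ and hence $\ftre$, contradicting $\neg\ftre$; the remaining steps are the same Boolean manipulations using $\muno\Rightarrow\mzero$ and the conjunct $\mzero\Rightarrow\muno$ from $\fdue$. The only differences are cosmetic: the paper treats ``at most one'' before ``at least one'' and simplifies the disjunction $\Guno_s\vee\Gdue_s$ algebraically rather than via your case split on $\c\Rightarrow\muno$, but the content is identical.
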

\begin{proof}
We first show that at most one of the predicates for starting phases in Table~\ref{tab:F2-phases} is true. To this end, it is sufficient to show that the logical conjunction of any two predicates is false. This is obtained by showing that both the predicates imply the same variable, but with opposite logical values. In particular, $\Guno_s$ and $\Gdue_s$ are disjoint because of $\gzero$. Since $\muno \Rightarrow \mzero$, we can assume that both $\Guno_s$ and $\Gdue_s$ imply $\c \Rightarrow \mzero = \neg \c \vee \mzero = \neg( \c \wedge \neg \mzero)$. Then, both $\Guno_s$ and $\Gdue_s$ are disjoint with $\H_s$ because of $( \c \wedge \neg \mzero)$.

We now show that exactly one of the predicates for starting phases in Table~\ref{tab:F2-phases} is true. To this end, we show that $\Guno_s \vee \Gdue_s \vee \H_s$ is true when $\Fdue_s$ holds. We first analyze $\Guno_s \vee \Gdue_s$; it corresponds to $[\GunoS] \vee [\GdueS] = [\neg \gzero \vee (\gzero \wedge \neg \guno)]\wedge(\c \Rightarrow \muno) = [\neg \gzero \vee \neg \guno]\wedge(\c \Rightarrow \muno)$. Since $\neg \gzero \Rightarrow \neg \guno$, the last expression can be simplified into  
\begin{equation}\label{eq:1}
\Guno_s \vee \Gdue_s = \neg \guno \wedge(\c \Rightarrow \muno). 
\end{equation}

Now, observe that $\Fdue_s$ implies both $\fdue = (\mzero \Rightarrow \muno) \wedge \sdue \wedge \l$ and $\neg \ftre = \neg [ (\mzero \Rightarrow \muno) \wedge \gdue] = \neg (\mzero \Rightarrow \muno) \vee \neg \gdue $. In turn, it follows that $\Fdue_s$ implies $(\mzero \Rightarrow \muno)$, $\sdue$, and $\neg \gdue$. According to the definition of $\gdue$, it  follows that either $\guno$ is false or $\not\exists g'' \in \partial C(R)$ antipodal to $g’$. The latter condition cannot hold since $\Fdue_s$ implies $\sdue$, and hence $\guno$ is false. Concluding, Eq.~\ref{eq:1} is equivalent to $\Guno_s \vee \Gdue_s = \c \Rightarrow \muno$. 

Finally, we get $\Guno_s \vee \Gdue_s \vee\H_s = (\c \Rightarrow \muno)\vee (\c \wedge \neg \mzero) =  \neg \c \vee (\mzero \Rightarrow \muno)$.
 Since in $\Fdue_s$ predicate $\mzero \Rightarrow  \muno$ holds, then the claim follows.

Summarizing, we get that exactly one of the predicates for the starting phases in Table~\ref{tab:F2-phases} is true when $\Fdue_s$ holds.
\qed
\end{proof}

\begin{lemma}\label{lem:F3-disjoint}
Given a configuration $R$ and a pattern $F$, if $\Ftre_s$ is true then exactly one of the predicates for the starting phases in Table~\ref{tab:F3-phases} is true.  
\end{lemma}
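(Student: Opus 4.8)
The plan is to reproduce, for phase $\Ftre$, the same two–step Boolean argument used in Lemmas~\ref{lem:F1-disjoint} and~\ref{lem:F2-disjoint}: first establish that the three starting predicates $\M_s$, $\N_s$, $\O_s$ of Table~\ref{tab:F3-phases} are pairwise mutually exclusive, and then show that their disjunction is a propositional tautology. Combined, these give that exactly one of them holds. The hypothesis $\Ftre_s$ enters only to guarantee that the predicates involved are meaningful: since $\Ftre_s$ implies $\ftre = (\mzero \Rightarrow \muno)\wedge\gdue$, the three guards $g$, $g'$, $g''$ are in place, hence the common reference system, the embedding of $F$, the mapping $\mu$, and the teleporter circle $C^t(R)$ are all well defined, so that the variables $\dzero$, $\duno$, $\ddue$ receive genuine truth values.

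For the mutual exclusion I would argue as follows. Predicate $\M_s = \MS$ forces $\dzero$ true and $\duno$ false, whereas both $\N_s = \NS$ and $\O_s = \OS$ carry the factor $(\dzero \Rightarrow \duno)$. Thus in any configuration satisfying $\M_s$ the truth of $\dzero$ would force $\duno$, contradicting $\neg\duno$; this rules out both $\M_s \wedge \N_s$ and $\M_s \wedge \O_s$. The last pair $\N_s \wedge \O_s$ is immediately false because $\N_s$ contains $\ddue$ while $\O_s$ contains $\neg\ddue$.

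For the covering part I would simplify the disjunction directly. The predicates $\N_s$ and $\O_s$ share the factor $(\dzero \Rightarrow \duno)$ and differ only in $\ddue$ versus $\neg\ddue$, so $\N_s \vee \O_s = (\dzero \Rightarrow \duno)$. Combining with $\M_s$ and rewriting $(\dzero \Rightarrow \duno) = \neg\dzero \vee \duno$ gives
\[
\M_s \vee \N_s \vee \O_s = (\dzero \wedge \neg\duno) \vee \neg\dzero \vee \duno,
\]
which is a tautology (split on the truth value of $\dzero$: if $\dzero$ is false the middle disjunct holds, and if $\dzero$ is true then either $\duno$ holds or $\dzero \wedge \neg\duno$ does). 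Together with the pairwise disjointness this yields exactly one true predicate.

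I expect the Boolean bookkeeping to be entirely routine, so there is no real geometric obstacle. The only point I would be careful to state explicitly is the well-definedness remark above: because $\duno$ is by construction of the form $\dzero \wedge (\cdots)$, one must keep in mind the built-in implication $\duno \Rightarrow \dzero$, which is harmless here since the tautology holds regardless, but it is worth noting so that the reader does not worry about the ``impossible'' combination $\duno \wedge \neg\dzero$. Everything else is a direct propositional computation.
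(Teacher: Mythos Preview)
Your proposal is correct and follows essentially the same approach as the paper: both argue via the equivalence $\dzero \wedge \neg\duno \equiv \neg(\dzero \Rightarrow \duno)$, from which pairwise disjointness and the tautology $\M_s \vee \N_s \vee \O_s$ are immediate Boolean facts. Your version is more explicit (and your well-definedness remark about $\Ftre_s$ guaranteeing the guards and hence the variables $\dzero,\duno,\ddue$ is a nice addition the paper leaves implicit), but the underlying argument is the same.
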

\begin{proof}
By observing that $\dzero \wedge \neg \duno$ is equivalent to $\neg(\dzero \Rightarrow  \duno)$, it is easy to see that the logical conjunction of any two predicates among $\M_s$, $\N_s$,  and $\O_s$ is false. Then at most one of these predicates is true for $R$. On the other hand the logical disjunction of predicates $\M_s$, $\N_s$, and $\O_s$ is also trivially true. Then, exactly one of the predicates for the starting phases in Table~\ref{tab:F3-phases} is true.
\qed
\end{proof}

\begin{lemma}\label{lem:F4-disjoint}
Given a configuration $R$ and a pattern $F$, if $\Fquattro_s$ is true then exactly one of the predicates for the starting phases in Table~\ref{tab:F4-phases} is true.   
\end{lemma}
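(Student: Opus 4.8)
The plan is to mirror the structure of Lemmas~\ref{lem:F1-disjoint}--\ref{lem:F3-disjoint}: first show that at least one starting predicate of Table~\ref{tab:F4-phases} holds whenever $\Fquattro_s$ is true, then show that at most one does. For existence I would compute the disjunction of the six starting predicates straight from their definitions. Since $\Puno_s = \neg\q \wedge \iuno$, $\Pdue_s = \neg\q\wedge\idue$, and $\Quno_s,\Qdue_s,\Qtre_s,\Qquattro_s$ equal $\q\wedge\itre$, $\q\wedge\iquattro$, $\q\wedge\icinque$, $\q\wedge\isei$ respectively, their disjunction factors as $\neg\q\wedge(\iuno\vee\idue)\;\vee\;\q\wedge(\itre\vee\iquattro\vee\icinque\vee\isei)$, which is exactly $\fquattro$. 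As $\Fquattro_s = \FquattroS$ implies $\fquattro$, at least one starting predicate holds.

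For uniqueness I would split the six predicates into the groups $\{\Puno_s,\Pdue_s\}$ and $\{\Quno_s,\ldots,\Qquattro_s\}$. Every predicate of the first group carries the literal $\neg\q$ and every predicate of the second carries $\q$, so any cross-group conjunction is immediately false. It then remains to prove disjointness inside each group. Inside the first group $\Puno_s$ and $\Pdue_s$ differ only in $\iuno$ versus $\idue$, so I must show these cannot hold together: by the operational definitions in Section~\ref{ssec:f4}, $\iuno$ requires $g''\neq\mu(g'')$ (the outer guard is not yet on its target), whereas $\idue$ asserts that $R\setminus\{g\}$ is similar to $F\setminus\{\mu(g)\}$, i.e.\ every robot except $g$ --- in particular $g''$ --- is already matched. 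These contradict each other on the status of $g''$.

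The main obstacle, and the part that departs from the purely Boolean reasoning of the earlier lemmas, is the internal disjointness of $\itre,\iquattro,\icinque,\isei$, since these invariants are specified geometrically (through the ordered tests on $r_1=g$, $r_2=g'$, $r_n=g''$) rather than as Boolean formulas. Here I would first invoke the fact, argued in Section~\ref{ssec:f4}, that when $\q$ holds the conditions force $R$ to be asymmetric, so the counter-clockwise ordering $r_1,\dots,r_n$ and hence the roles of $g$ and $g''$ are uniquely determined and the tests are well defined. I would then separate the four cases by the position of $g=r_1$: $\itre$ is the only invariant placing $r_1$ on the interior circle $C_{\uparrow}^{1}(R)$, whereas $\iquattro,\icinque,\isei$ all place $r_1$ on $C(R)$, which already isolates $\Quno_s$. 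To separate the remaining three I would use the angular parameter $\angolo(r_1,c(R),f_2)$, which equals $\alpha$ in $\iquattro$, lies in $[\alpha,3\alpha)$ in $\icinque$, and equals $3\alpha$ (namely $r_1=f_1$, using $\angolo(f_1,c(F),f_2)=3\alpha$) in $\isei$.

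The only overlap, the boundary value $\alpha$ shared by the test intervals of $\iquattro$ and $\icinque$, is resolved by the complementary tests on $g''=r_n$: $\angolo(r_n,c(R),p)<\alpha$ for $\iquattro$ against $\angolo(r_n,c(R),p)=\alpha$ (or $r_n=f_n$) for $\icinque$, where $p$ is the antipodal point of $f_2$. Verifying that these angle intervals genuinely partition --- consistently with the sequential stages depicted in Figure~\ref{fig:fase_Q}(a)--(e) --- is the delicate step, and it is where Definition~\ref{def:q} (in particular $\angolo(f_n,c(F),f_2)>180^\circ$) is needed to ensure the relevant arcs do not wrap around and create spurious coincidences. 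Combining the existence and the uniqueness arguments then yields that exactly one starting predicate of Table~\ref{tab:F4-phases} is true.
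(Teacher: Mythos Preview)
Your proof is correct and follows essentially the same route as the paper: existence via the observation that the disjunction of the six starting predicates collapses to $\fquattro$, and uniqueness by first splitting on $\q$ versus $\neg\q$, then separating $\iuno$ from $\idue$ by the number of unmatched robots, and finally handling the $\Q$ sub-phases by a geometric case analysis on $r_1$ and $r_n$. The only cosmetic difference is in how you isolate $\isei$: you use the angle $\angolo(r_1,c(R),f_2)=3\alpha$ to distinguish it from $\iquattro$ and $\icinque$, whereas the paper observes that $\isei$ leaves exactly one robot unmatched while $\iquattro$ and $\icinque$ leave two; both arguments are valid and the remaining separation of $\iquattro$ from $\icinque$ via the third test on $r_n$ is identical in the two proofs.
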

\begin{proof}
We first show that at least one predicate among $\Puno_s$, $\Pdue_s$, $\Quno_s$, $\Qdue_s$, $\Qtre_s$, and $\Qquattro_s$ is true. 
Since $\Fquattro_s = \FquattroS$ holds, then $\fquattro$ holds as well. Since $\fquattro = 
\neg \q \wedge (\iuno \vee \idue)~ \vee ~\q \wedge (\itre \vee \iquattro \vee \icinque \vee \isei) $, then the logical disjunction $\Puno_s \vee \Pdue_s \vee \Quno_s \vee \Qdue_s \vee \Qtre_s \vee \Qquattro_s$ is true. 

We now show that $R$ is processed by exactly one sub-phase of $\Fquattro$. 
$\Puno_s$ is disjoint with $\Pdue_s$ since $\iuno$ implies that exactly two robots are unmatched, while $\idue$ implies that exactly one robot is unmatched. $\Puno_s$ and $\Pdue_s$ are disjoint with (any sub-phase of) $\Q$ because of $\q$. According to the formal definitions of predicates $\itre,\ldots,\isei$, it follows that $\Quno_s$ is disjoint with any other sub-phase of $\Q$, since $\itre$ implies $r_1$ inside $C(R)$ while $\iquattro,\ldots,\isei$ all imply $R=\partial C(R)$. $\Qquattro_s$ is disjoint with both $\Qdue_s$ and $\Qtre_s$ since $\isei$ implies that exactly one robot is unmatched, while both $\iquattro$ and $\icinque$ imply that exactly two robots are unmatched. Finally, $\Qdue_s$ and $\Qtre_s$ are disjoint because of the third items in the definitions of $\iquattro$ and $\icinque$.
\qed
\end{proof}

\begin{lemma}\label{lem:F5-disjoint}
Given a leader configuration $R$ and a pattern $F$, if $\Fcinque_s$ is true then exactly one of the predicates for the starting phases in Table~\ref{tab:F5-phases} is true.      
\end{lemma}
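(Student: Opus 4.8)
The plan is to mirror the structure of Lemmas~\ref{lem:F1-disjoint}--\ref{lem:F4-disjoint}: first prove that at most one of the seven starting predicates $\T_s,\U_s,\Vuno_s,\Vdue_s,\Vtre_s,\Vquattro_s,\W_s$ can hold (pairwise disjointness), and then prove that under $\Fcinque_s$ at least one of them holds (coverage). For disjointness I would isolate the variables that separate the phases. The predicate $\zuno$ separates $\U$ from everything else: $\U_s=\US$ while $\T_s$ carries $\neg\zuno$ explicitly, and, crucially, a robot witnessing $\zuno$ has already left the reflection axis so that $R$ is asymmetric, i.e. $\zuno\Rightarrow\neg\bzero$; since each of $\T_s,\Vuno_s,\Vdue_s,\Vtre_s,\Vquattro_s$ carries $\bzero$, all of them are disjoint from $\U_s$. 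The five phases $\T,\Vuno,\Vdue,\Vtre,\Vquattro$ are then separated by the mutually exclusive variables $\buno,\bdue,\btre,\bquattro,\bcinque$: these are indexed by the pair $(|R\cap\ell|,\#\{\text{critical robots on }\ell\})$, and a short count — using that a reflection axis meets $C(R)$ in at most two points, hence carries at most two critical robots, together with Property~\ref{prop2} — shows that the five cases cannot overlap.

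It remains to place $\W$. Here I would use that $\W_s=\WS$ carries the explicit conjuncts $\neg\Vuno_s\wedge\neg\Vdue_s$, which makes $\W$ trivially disjoint from $\Vuno$ and $\Vdue$. For the disjointness of $\W$ from $\T,\Vtre,\Vquattro$ I would exploit the shape of $\zdue$: each of $\T_s,\Vtre_s,\Vquattro_s$ contains $\bzero\wedge\neg\c$, whereas the bracketed part of $\zdue$ forces $\c\vee\neg\bzero$; hence $\zdue\wedge\bzero\wedge\neg\c$ is false and $\W$ is disjoint from these three. Finally $\W$ is disjoint from $\U$ because $\zuno$ forces a faraway triple on $C(R)$ with no central robot, while $\zdue$ places its distinguished robot at the center or strictly inside $C^g(R)$, so $\zuno\wedge\zdue$ is false.

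For coverage, from $\Fcinque_s$ I obtain $(\bzero\vee\zuno\vee\zdue)\wedge\e\wedge\neg\w$, and I would split on the disjunction. If $\zuno$ holds then $\U_s$ holds. If $\zdue$ holds then, by the very definition $\W_s=\zdue\wedge\neg\Vuno_s\wedge\neg\Vdue_s$, at least one of $\Vuno_s,\Vdue_s,\W_s$ holds. The remaining case is $\bzero\wedge\neg\zuno\wedge\neg\zdue$: here I would invoke the characterisation of symmetric leader configurations without multiplicities as those admitting either a unique reflection axis $\ell$ with $|R\cap\ell|\ge 1$ or a rotation with $c(R)\in R$. When $\neg\c$, the axis case applies and the exhaustiveness of $\buno,\dots,\bcinque$ over all values $|R\cap\ell|\ge 1$ yields exactly one of $\T_s$ (using the available $\neg\zuno$), $\Vuno_s$, $\Vdue_s$, $\Vtre_s$, $\Vquattro_s$.

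The main obstacle is exactly the coverage of the symmetric configurations with an occupied center — in particular the purely rotational ones, which admit no reflection axis and hence escape all of $\buno,\dots,\bcinque$. These must be routed to $\W$, so I must show that every symmetric leader configuration with $c(R)\in R$ is captured by $\zdue$ (that is, that removing the distinguished central robot leaves a symmetric configuration recognised by the $\c$-branch of $\zdue$), and then check that such a configuration falls under $\W$ precisely when it is not already under $\Vuno$ or $\Vdue$, the latter being governed by the auxiliary variables $\uuno,\udue$. Getting this interface right, together with the critical-point count certifying that $\buno,\dots,\bcinque$ genuinely partition the reflection-axis configurations, is the delicate heart of the argument; the remainder is Boolean bookkeeping entirely analogous to the preceding lemmas.
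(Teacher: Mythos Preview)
Your overall plan matches the paper's proof: the same three-way case split on $\zuno$, $\zdue$, $\bzero$ for coverage, and the same variable-by-variable separation for disjointness. However, one step of your disjointness argument is incorrect. You claim that ``a robot witnessing $\zuno$ has already left the reflection axis so that $R$ is asymmetric, i.e.\ $\zuno\Rightarrow\neg\bzero$''. This is false: the definition of $\zuno$ allows the faraway robot $r$ to lie on the axis $\ell$ (the arc ``between $\ell$ and $t$ (excluded)'' includes the $\ell$-endpoint), and indeed this must be so, since sub-phase $\T$ terminates precisely by moving $r$ to a faraway point \emph{on} $\ell$, at which moment the configuration is still symmetric yet must already satisfy $\U_s=\zuno$ (cf.\ $\T_e=\U_s\vee\W_s$ in Table~\ref{tab:F5-phases}). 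Consequently your separation of $\U_s$ from $\Vuno_s,\Vdue_s,\Vtre_s,\Vquattro_s$ via $\bzero$ collapses.

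The correct argument, which the paper uses, is read directly off the clause $R'\cap\ell=\emptyset$ in $\zuno$ (with $R'=R\setminus\{r\}$): it forces $|R\cap\ell|\le 1$, and this is incompatible with each of $\bdue,\btre,\bquattro,\bcinque$, all of which require $|R\cap\ell|\ge 2$. With this fix your plan coincides with the paper's proof. Your justification of $\zuno\wedge\zdue=\text{false}$ is also looser than the paper's: your observation that $\zuno\Rightarrow\neg\c$ handles only the symmetric branch of $\zdue$. For the asymmetric branch the paper argues that the asymmetry of $R$ is caused solely by the faraway robot, so removing the robot on $C_\uparrow^1(R)$ (which is a \emph{different} robot) cannot restore symmetry, contradicting the clause ``$R\setminus\{r\}$ is symmetric'' in $\zdue$.
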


\begin{proof}
We first show that at least one predicate among $\T_s$, $\U_s$, $\Vuno_s$, $\Vdue_s$, $\Vtre_s$, $\Vquattro_s$ and $\W_s$ is true. 
$\Fcinque_s$ implies $(\bzero  \vee \zuno \vee \zdue) \wedge \e $. Hence, $R$ does not contain multiplicities. If $\zuno$ is true, then $\U_s$ is true. If $\zdue$ is true, then one predicate among $\W_s$, $\Vuno_s$, or $\Vdue_s$ is true. If both $\zuno$ and $\zdue$ are false, then necessarily $\bzero$ holds and $\c$ must be false as otherwise $\zdue$ is true. It follows that $R$ admits exactly one axis of reflection $\ell$ with robots on it. In such a case $\T \vee \Vuno_s \vee \Vdue_s \vee \Vtre_s \vee \Vquattro_s=\buno \vee \bdue \vee \btre \vee \bquattro \vee \bcinque$. Actually, such an expression is true as $\buno$, $\ldots$, $\bcinque$ cover all the possible cases concerning robots lying on $\ell$.

We now show that at most one of the predicates for starting phases in Table~\ref{tab:F5-phases} is true. To this end, it is sufficient to show that the logical conjunction of any two predicates is false. 

\begin{itemize}
\item 
Concerning $\T_s = \TS$, it is disjoint with $\U_s$ because of $\zuno$. It is disjoint with $\Vuno_s$, $\Vdue_s$, $\Vtre_s$ and $\Vquattro_s$ because of $\buno$ ($\Vuno_s$, $\Vdue_s$, $\Vtre_s$ and $\Vquattro_s$ require that that at least one predicate among $\bdue,\ldots,\bcinque$ holds, while predicates $\buno,\ldots,\bcinque$ are pairwise disjoint). It is disjoint with $\W_s$ because $\T_s$ implies that $\bzero \wedge \neg \c$ holds, whereas $\W_s$ implies that $\neg \bzero \vee \c$ holds.

\item
Concerning $\U_s = \US$, it is disjoint with $\Vuno_s$, $\Vdue_s$, $\Vtre_s$ and $\Vquattro_s$ because such predicates require that that at least one among $\bdue,\ldots,\bcinque$ holds, and this implies that $R$ must have a reflection axis with at least two robots on the axis. Conversely $\US$ implies that $R$ is either asymmetric or reflexive with at most one robot on the axis. To prove that $\U_s$ is disjoint with $\W_s$ we show that $\zuno\Rightarrow \neg\zdue$. Consider a configuration $R$ that fulfills $\zuno$: 
	\begin{itemize}
	\item
	if $R$ is symmetric (i.e., $\bzero$ holds) then $c(R)$ is not occupied (i.e., $\neg\c$ holds) -- since $\zdue$ implies $\c \vee \neg\bzero$ holds, then $\zdue$ is false;
	\item
	if $R$ is asymmetric (i.e., $\neg \bzero$ holds), let $r$ be the faraway robot detected by predicate $\zuno$. Since  $R$ is asymmetric then $r$ is not on the axis $\ell$ given in the definition of predicate $\zuno$. If $r'=\partial C_{\uparrow}^{1}(R)$, then $R\setminus \{r'\}$ cannot be symmetric as the asymmetry of $R$ is due to $r$. This means that $\zdue$ cannot hold. 
    \end{itemize}

\item
Concerning $\Vuno_s = \VunoS$, it is disjoint with $\Vdue_s$, $\Vtre_s$ and $\Vquattro_s$ because of $\bdue$. Moreover it is trivially disjoint with $\W_s$ ($\W_s$ requires $\neg\Vuno_s$).

\item
Concerning $\Vdue_s = \VdueS$, it is disjoint with $\Vtre_s$ and $\Vquattro_s$ because of $\btre$. Moreover it is trivially disjoint with $\W_s$ ($\W_s$ requires $\neg\Vdue_s$).

\item
Concerning $\Vtre_s = \VtreS$, it is disjoint with $\Vquattro_s$ because of $\bquattro$. It is disjoint with $\W_s$ because if $R$ belongs to $\W$ then it fulfills $\bzero \wedge \neg \c$ holds, while we have already observed that $\zdue$ implies $\neg \bzero \vee \c$.

\item
Concerning $\Vquattro_s = \VquattroS$, it is disjoint with $\W_s$ because we are in the same situation of $\Vtre_s$ vs $\W_s$.
\end{itemize}
\qed
\end{proof}

We are now ready to provide the correctness proof of our algorithm for each phase, and then we combine all phases by means of the final theorem that provides the correctness of the whole algorithm.
For each phase we consider all possible sub-phases. For each sub-phase we show all the possible scenarios where the corresponding moves lead. In particular, for each phase among $\Funo$, $\Fdue$, $\Ftre$, $\Fquattro$, $\Fcinque$, and for each move $m$ defined in the algorithm, we need to show several properties that guarantee to our algorithm to safely evolve until pattern $F$ is formed. For the first four phases $\Funo$, $\Fdue$, $\Ftre$, and $\Fquattro$ all moves involve only one robot and we are going to prove the following properties:

\begin{itemize}
	\item[$\h_0$:] at the beginning, $m$ involves only one robot;
	\item[$\h_1$:] while a robot is moving according to $m$, the configuration is a leader configuration;
	\item[$\h_2$:] $m$ is safe, and in particular that while a robot is moving according to $m$, all other robots are stationary;
	\item[$\h_3$:] while a robot is moving according to $m$, no collisions are created;
	\item[$\h_4$:]  if $m$ is associated to phase $\mathcal{X}$, then the predicate $\mathcal{X}_e$ holds once a robot has terminated to apply $m$;  
	\item[$\h_5$:] $m$ preserves stationarity.
\end{itemize}

Basically property $\h_2$ is needed to correctly address the $\Member$ problem described in Section~\ref{sec:strategy}.

About Property $\h_4$, the stop of a robot $r$ is due to three events. First, the adversary may stop $r$ before reaching its target. Second, the move might be subject to Procedures $\SMove$ or $\DistMin$, hence $r$ reaches an intermediate target. Third, $r$ reaches the real target imposed by the current move. From the proofs, we omit the analysis of the first condition because the situation obtained once the adversary stops the moving robot $r$ always equals what can happen while $r$ is moving, that is the analysis of property $\h_2$ holds.

About Property $\h_5$, we always omit this property form our proofs because it comes for free from the other properties once we have shown that there is always only one robot $r$ moving. So whenever $r$ stops moving, the configuration is stationary.

\begin{lemma}\label{lem:corr-F1}
Let $R$ be a stationary configuration in $\Funo$. From $R$ the algorithm eventually leads to a stationary 
configuration belonging to $\Fdue$, $\Ftre$, $\Fquattro$, $\Fcinque$ or where $\w$ holds.
\end{lemma}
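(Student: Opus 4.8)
The plan is to proceed sub-phase by sub-phase. By Lemma~\ref{lem:F1-disjoint}, the stationary configuration $R$ satisfies exactly one among $\Auno_s$, $\Adue_s$, $\B_s$, $\Cuno_s$, $\Cdue_s$, $\D_s$, $\Euno_s$, $\Edue_s$, so a single move among $m_1,\ldots,m_6$ is dictated. For each such move $m$ I would establish the six properties $\h_0$--$\h_5$ introduced before the statement; granting these, the lemma reduces to a combinatorial argument showing that the transitions recorded in the ``end'' columns of Table~\ref{tab:F1-phases} cannot cycle forever inside $\Funo$ and must eventually land on $\Fdue_s$, $\Ftre_s$, $\Fquattro_s$, $\Fcinque_s$, or $\w$.

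For the inward moves $m_1,m_2$ (used by $\A$ and, identically, by $\E$) I would argue $\h_0$ by noting that the selected robot is the unique non-critical robot of minimum view on $C(R)$ (resp.\ the unique robot on $C_{\downarrow}^{1}(R)$), whose existence when $|\partial C(R)|\ge 4$ is guaranteed by Property~\ref{prop2}; property $\h_1$ follows because moving a non-critical robot toward $C^{0,1}(R)$ (resp.\ $C^{0,2}(R)$) leaves $C(R)$ unchanged and keeps the single robot of minimum view distinguishable, so the leader is preserved; $\h_3$ is immediate since the radial target is strictly inside $C(R)$ and no other robot lies on the traversed radius. The rotational moves $m_4,m_5$ of $\C$ keep the moving robot on $C(R)$, so $C(R)$ and the antipodal pair are preserved, $m_3$ of $\B$ merely raises a third robot onto $C(R)$ and feeds $\C$, while $m_6$ of $\D$ drives the $k-1$ innermost robots to $c(R)$ under the protection of the fixed $30^\circ,60^\circ,90^\circ$ triangle, which by design forbids the creation of any rotation. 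In every case the delicate property is $\h_2$: I would show that the ``during'' predicate equals the ``start'' predicate (or, in the exceptional rows, coincides with the ``end'' predicate), so that any robot taking a snapshot while $m$ is in progress recognises the very same sub-phase and therefore only the already-moving robot is enabled; $\h_5$ is then free, as there is always a single moving robot.

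Property $\h_4$ requires care precisely because $m_1,m_2,m_4$ are executed through Procedure \SMove. The point is that \SMove forces the moving robot to halt on any circle $C_{\downarrow}^{j}(F)$ (and, for $m_4$, on configurations that would create the reference angle $\alpha$ or trigger predicate $\zuno$). At such an intermediate stop the configuration is stationary and still satisfies the start predicate of the current sub-phase, so the same robot is re-selected and the move resumes; only when the genuine geometric target is reached does the end predicate of Table~\ref{tab:F1-phases} become true. This is exactly the mechanism that lets a robot stop on a point that an as-yet-undetermined embedding of $F$ would declare final, handing control to $\Fquattro$, to $\Fcinque$, or to $\w$ in a stationary manner, rather than letting two robots from two different phases move at once.

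Finally, for the global progress I would use the lexicographic measure $(|\partial C(R)|,\sigma)$, where $\sigma$ orders the sub-phases along the flow $\A\to\C\to\D\to\E$ (with $\B$ feeding $\C$). Each completed inward move strictly decreases $|\partial C(R)|$ until it reaches $3$, after which the finitely many rotations of $\C$, the $k-1$ insertions of $\D$, and the single insertion of $\E$ strictly advance $\sigma$; since non-rigid moves cover at least $\nu$ toward their targets, each sub-phase is completed after finitely many activations. Hence $\Funo$ is abandoned after finitely many stationary configurations, and by the ``end'' columns of Table~\ref{tab:F1-phases} the exit is into $\Fdue_s$ (through $\E$), $\Ftre_s$ (through $\D$), or — via a \SMove halt on a point consistent with a later embedding — into $\Fquattro_s$, $\Fcinque_s$, or $\w$. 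I expect the genuine difficulty to lie in verifying $\h_1$ and $\h_2$ simultaneously: one must exclude that moving a single robot ever creates a spurious rotation or reflection (which would destroy the leader and enable a symmetric partner to move), and this has to be checked against every geometric special case, in particular the interplay between the non-criticality guarantee of Property~\ref{prop2}, the \SMove halting points, and the still-unfixed embedding of $F$ throughout $\Funo$.
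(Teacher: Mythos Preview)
Your overall skeleton matches the paper's proof: invoke Lemma~\ref{lem:F1-disjoint}, verify $\h_0$--$\h_4$ move by move, and argue termination. However, your treatment of $\h_2$ and of Procedure \SMove contains a genuine gap that is precisely the failure mode the paper was written to expose.

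You write that during a move ``any robot taking a snapshot \ldots recognises the very same sub-phase'', and later that at an intermediate \SMove halt ``the configuration \ldots still satisfies the start predicate of the current sub-phase''. Neither is true, and the paper's proof does not claim it. For $m_1$ in $\Auno$, as soon as $r$ leaves $C(R)$ the configuration is in $\Adue$, not $\Auno$; more importantly, while $r$ is in transit the observed configuration may legitimately satisfy $\Pdue_s$ or $\Quno_s$ (i.e.\ belong to $\Fquattro$), and the paper must argue separately that in those sub-phases the robot selected to move is again $r$ itself. Likewise, at a \SMove halt on some $C_{\downarrow}^{j}(F)$ the configuration may satisfy $\w$, $\Puno_s$, or $\Pdue_s$ --- that is the very reason \SMove halts there, to make the phase change happen from a stationary configuration rather than mid-flight. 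So safety is not ``the during predicate equals the start predicate''; it is a case analysis over \emph{all} phases $\Fdue,\Ftre,\Fquattro,\Fcinque$ showing that each is either unreachable during the move (e.g.\ $\Fdue$ because $\sdue$ fails, $\Fcinque$ because $\bzero,\zuno,\zdue$ all fail) or, if reachable, still selects only $r$. Your proposal never performs this cross-phase check, and without it the argument is exactly as incomplete as the ones in~\cite{FYOKY15,DPVarx09} that Section~\ref{sec:counter-ex} refutes.

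A smaller point: your $\h_1$ sketch for $m_4$ (``$C(R)$ and the antipodal pair are preserved'') is off, since in $\Cuno$ there is no antipodal pair yet; the correct invariant is that the three angles satisfy $\alpha_1>\alpha_2>\alpha_3$ throughout the rotation, keeping the boundary triangle asymmetric.
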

\begin{proof}
By Lemma~\ref{lem:F1-disjoint}, exactly one of the predicates for the starting phases in Table~\ref{tab:F1-phases} is true. In turn, this implies that exactly one of the moves associated to the sub-phases of $\Funo$ is applied to $R$. We show that the properties $\h_0,\ldots,\h_4$ hold for each possible move applied to $R$.

\noindent
Let us consider sub-phase $\Auno$ where move $m_1$ is performed. 

\begin{description}
    \item[$\h_0$:] Move $m_1$ only concerns the not critical robot $r$ on $C(R)$ of minimum view. 

    \item[$\h_1$:] During the movement of $r$, the configuration remains a leader configuration. Actually it is asymmetric as there are no multiplicities yet and $r$ cannot participate to neither a rotation, being the only robot on $C_{\downarrow}^1(R)$, or a reflection as the axis of symmetry should pass through $r$, but then the starting configuration $R$ was symmetric, a contradiction.

    \item[$\h_2$:] We show that $m_1$ is safe. 
    As soon as the robot moves, predicate $\Adue=\AdueS$ holds, since from $\spiu$ by moving one robot either $\spiu$ or $\stre$ holds and $\neg \l$ holds as the robot has not yet reached the target.
    
Any configurations $R'$ observed during the move of $r$ cannot belong to $\Fdue$ as $\sdue$ does not hold. It cannot belong to $\Ftre$ as well because $\ftre$ does not hold. In fact, $\ftre$ does not hold in $\Auno_s$ and move $m_1$ cannot change this status. Possibly, the $R'$ falls in $\Fquattro$, in particular sub-phases $\Pdue$ and $\Quno$. In fact, in $\Puno$ predicate $\guno$ should hold, but $r$ is certainly not on $C^g(R)$; in $\Qdue$, $\Qtre$, and $\Qquattro$ all robots should belong to $\partial C(R)$, but $r$ does not. If $\Pdue_s$ holds, then only $r$ can be the remaining unmatched robot that moves in $\Pdue$ since $r$ is guaranteed to not meet a point in $F$ according to the use of Procedure $\SMove$. It follows that during the movement, in case $r$ is stopped by the adversary, it will be selected again by the algorithm as the unique robot that performs move $m_{14}$. Similar arguments can be applied if $\Quno$ holds, where there is only one robot inside $C(R)$. 

Finally, we prove that $R'$ cannot belong to $\Fcinque$ by showing that $\neg \fcinque= \neg \e \vee (\neg \bzero \wedge \neg \zuno \wedge \neg \zdue)$ holds in $R'$. As the moving robot is alone on $C_\downarrow^1(R')$, $R'$ is asymmetric, that is $\bzero$ does not hold. If predicate $\spiu$ holds then $\zuno$ is false. If $\stre$ holds, then there should be a faraway robot $r'$ on $C(R')$ such that $R'\setminus \{r'\}$ admits an axis where $r$ resides (being alone on $C_\downarrow^1(R')$), that is $\zuno$ is false. Now we show that $\zdue$ is false in $R'$. If $r$ is the closest robot to $c(R')$ then its distance from $c(R')$ is not less than $\delta(C^g(R'))$, that is $\zdue$ is false; else let $r'$ be the robot closest to $c(R')$. It follows that if $R''=R'\setminus \{r'\}$ is symmetric, that is $r$ is on an axis of symmetry of $R''$, then also $R\setminus \{r'\}$ is symmetric. This contradicts the fact that $\zdue$ was false in $R$ as $\neg \fcinque$ and $\e$ hold.

The above arguments also ensure that no other robot than $r$ can move from the reached configurations.

    \item[$\h_3$:] Move $m_1$ guarantees there are no robots between $r$ and its target.
    
    \item[$\h_4$:] Assume that $r$ stops moving because it reaches an intermediate target dictated by Procedure $\SMove$. In this case, predicates $\w$, $\Puno_s$, or $\Pdue_s$ might hold because $\iuno$ or $\idue$ become true (clearly, $\itre,\ldots ,\isei$ cannot become true as $F$ should equal $\partial F$). If this is not the case, $r$ is unmatched and the same considerations given for $\h_2$ hold, that is the configuration belongs to phases $\Adue$ or $\Pdue$ or $\Quno$.

Assume that $r$ reaches the target $t=[r,c(R)] \cap C^{0,1}(R)$. If the configuration remains in $\Funo$, then $\l$ holds and the configuration is either in $\Auno$ if $\spiu$ holds, or in $\Adue$, $\C$, $\D$, or $\Euno$ if $\stre$ holds. If the configuration is not in $\Funo$, then by the above analysis it belongs to $\Fquattro$ or $\w$ holds.
\end{description}

Sub-phase $\Adue$, where move $m_2$ is performed, is the continuation of sub-phase $\Auno$ in case the moving robot $r$ stops before reaching its target on $C^{0,1}(R)$ to make predicate $\l$ newly true. Then the same analysis of move $m_1$ applies. Hence moves $m_1$ and $m_2$ are repeatedly applied in order to remove not critical robots from $\partial C(R)$ until there remain exactly three robots on $C(R)$ (unless the configuration reaches phase $\Fquattro$ or satisfies $\w$ before). This can be done according to Properties~\ref{prop1} and~\ref{prop2}. Once the resulting configuration satisfies $\stre \wedge \l$, it does not belong to sub-phase $\A$ anymore. This situation occurs in a finite number of steps. In fact, as shown above, move $m_1$ along with move $m_2$ bring not critical robots one by one to their designed targets. Since by assumption a robot is guaranteed to traverse at least distance $\nu$ each time it moves, then a finite number of steps suffices to reach the desired configuration.
 
Three robots on $C(R)$ are necessary to maintain the configuration a leader one in case a multiplicity should be formed in $c(R)$, that is when predicate $\mzero \wedge \neg \muno$ is true. If there are only two robots in $\partial C(R)$, a third one from $\Int(C(R))$ is moved on $C(R)$. This is done in sub-phase $\B$ by means of move $m_3$. 	

 \begin{description}
	\item[$\h_0$:] The move only concerns robot $r$ on $C_\downarrow^1(R)$ with minimum view. 
	\item[$\h_1$:] During the movement of $r$ the configuration remains   a leader configuration. Actually it is asymmetric as there are no multiplicities yet, $r$ cannot participate to neither a rotation, being the only robot on $C_{\downarrow}^1(R)$, or a reflection as the axis of symmetry should pass through $r$, but then the starting configuration $R$ was symmetric too, a contradiction.
	\item[$\h_2$:] We show that $m_3$ is safe. Assuming that the configuration observed during the move of $r$ still belongs to $\Funo$, then $r$ is always detected as the unique robot on $C_\downarrow^1(R)$ and hence $\B_s$ still holds. 
	The configuration observed while $r$ moves cannot belong neither to $\Fdue$ nor to $\Ftre$ as $\mzero\Rightarrow \muno$ does not hold (we have $\mzero \wedge \neg \muno$ by hypothesis). Similarly, it cannot belong to $\Fquattro$ % and $\w$ cannot hold 
	as a multiplicity in $c(R)$ must be created. 
Finally, it cannot belong to $\Fcinque$ as the configuration is asymmetric, that is $\bzero$ is false; $\sdue$ holds hence $\zuno$ is false; $\zdue$ remains false because, by removing the closest robot to $c(R)$, the configuration cannot be symmetric due to the movement of robot $r$, as observed in $\h_1$.

The above arguments ensure that during the movement all robots but $r$ are stationary as the configuration remains in $\B$.
	\item[$\h_3$:] No collision is possible as, by definition, there are no robots between $C_\downarrow^1(R)$ and $C(R)$, and the target $t$ on $C(R)$ cannot coincide with one of the positions of the two robots on $C(R)$.
	\item[$\h_4$:] Once $r$ reaches its target, the configuration cannot belong to $\Fdue$, $\Ftre$, $\Fquattro$ for the same reasons as in $\h_2$. The configuration cannot belong to $\Fcinque$ as it is asymmetric, that is $\bzero$ is false; $\stre$ holds, but the robots on $C(R)$ form a right-angle triangle and then $\zuno$ is false; $\zdue$ remains false for the same reasons as in $\h_2$.
	It follows that the configuration remains in $\Funo$. In particular, the configuration is in $\C_1$, $\C_2$ or $\D$, depending on the kind of triangle formed by the robots on $C(R)$, that is the status of predicates $\tzero$ and $\tuno$. 
\end{description}

When $|\partial C(R)|=3$, we have to guarantee that two robots on $C(R)$ are antipodal before removing the third one, otherwise $C(R)$ could change its radius. This is done in sub-phase $\Cuno$ by means of move $m_4$. The move involving one of the three robots on $C(R)$ makes the triangle they form containing a $90^\circ$ angle, and hence two antipodal robots.

\begin{description}
	\item[$\h_0$:] As specified by the definition of $m_4$, the only robot $r$ involved in this move is that on $C(R)$ corresponding to angle $\alpha_2$ if the triangle has angles $\alpha_1\geq \alpha_2\geq \alpha_3$ (in case of ties, the uniqueness of $r$ is guaranteed by using the view of robots). 
	\item[$\h_1$:] $r$ remains always detectable because as soon as it starts moving we get $\alpha_1> \alpha_2> \alpha_3$, that is the triangle formed by the three robots on $C(R)$ is asymmetric. Moreover the triangle remains asymmetric as during the movement and until the end, $\alpha_1$ increases, $\alpha_2$ maintains its value and $\alpha_3$ decreases. 
	\item[$\h_2$:] We show that $m_4$ is safe. Assuming that the configuration observed during the move of $r$ still belongs to $\Funo$, as $\stre \wedge\tzero$ remains true during the movement, then the configuration remains in $\Cuno$. While $r$ moves the observed configuration cannot belong to $\Fdue$ as $\sdue$ does not hold. The same for $\Ftre$, as there are no two antipodal robots and then $\gdue$ is false. Predicates  $\Fquattro_s$ cannot hold as the movement is controlled by $\SMove$ which preventively calculates the possible points that could make $\Fquattro_s$ true at the end of the movement. So, during the movement, $\Fquattro_s$ remains false. Similarly, $\Fcinque_s$ is false as it is controlled by $\SMove$.
As the above defined angles are such that $\alpha_1> \alpha_2> \alpha_3$ during the move, then the robot $r$ on $\alpha_2$ is always uniquely determined;
	\item[$\h_3$:] No collisions are created as the robot moves on $C(R)$ having as target a point antipodal to one of the other two robots. The third robot cannot be on its trajectory as, otherwise, the smallest enclosing circle would be different from $C(R)$.
	\item[$\h_4$:] If $r$ stops on a target specified by Procedure $\SMove$, the configuration can remain in $\Cuno$.
	As in case $\h_2$, the configuration cannot belong to $\Fdue$ as $\sdue$ does not hold. The same for $\Ftre$, as there are no two antipodal robots and then $\gdue$ is false. If Procedure $\SMove$ computes a target specified at line~\ref{algo:mFquattroe}, it could be possible that there exists an embedding satisfying $\iuno$. Then the configuration can belong to sub-phase $\Puno$ of $\Fquattro$. Moreover,  predicate $\w$ might hold.
	If the computed target comes from line~\ref{algo:mAquattros} of Procedure $\SMove$, an angle of $\alpha$ degrees is formed with a robot on $C^g(R)$. In that case the configuration may belong to either sub-phase $\Puno$ or sub-phase $\Pdue$ of $\Fquattro$.
If the computed target comes from line~\ref{algo:mWquattros} of Procedure $\SMove$, an asymmetric configuration where $\zuno$ holds is reached that may belong to sub-phase $\U$ of $\Fcinque$.
	
	If the robot stops reaching the target of move $m_4$, forming an angle of $90^\circ$, $t_0$ is false and then the configuration can satisfy $\CunoE$. If the configuration is not in $\Funo$, as above, it can satisfy $\w$, $\Fquattro_s$, or $\Fcinque_s$.
\end{description}

In case a multiplicity in $c(R)$ must be formed, in order to guarantee that the configuration remains a leader configuration, the algorithm ensures the triangle formed by the three robots on $C(R)$ is asymmetric. To guarantee the stationarity of the configuration, we impose that the triangle has angles equal to $30^\circ$, $60^\circ$, and $90^\circ$ degrees.
This is done in sub-phase $\Cdue$ by means of move $m_5$.

\begin{description}
	\item[$\h_0$:] Among the three robots on $C(R)$, only the one that does not admit an antipodal robot is moved (notice that the status of $\tzero$ in $\Cdue_s$ implies two antipodal robots on $C(R)$).
	\item[$\h_1$:] The configuration is always asymmetric because the triangle formed by the three robots can be symmetric only at the beginning.
	\item[$\h_2$:] We show that $m_5$ is safe. As predicate $\mzero \wedge \neg \muno$ holds during the whole movement, that is a multiplicity in $c(R)$ must be formed, then the configuration observed during the movement remains in sub-phase $\Cdue$ until $\tuno$ becomes true. For the same reason, the configuration cannot belong to $\Fdue$, $\Ftre$, and $\Fquattro$. About $\Fcinque$, the configuration is asymmetric hence $\bzero$ is false; there are two antipodal robots, hence $\zuno$ is false; the configuration obtained by excluding the closest robot to $c(R)$ cannot be symmetric as the triangle on $C(R)$ is asymmetric, then $\zdue$ is false.
	
During the movement all other robots are stationary, for the same reason as in $\h_0$.
	\item[$\h_3$:] No collisions are created as the target of the moving robot is always between the antipodal robots on $C(R)$.
	\item[$\h_4$:] As discussed in $\h_2$, the configuration remains in $\Funo$ and in particular in $\Cdue$ or $\D$ depending on $\tuno$.

\end{description}

Once the three robots on $C(R)$ form a triangle having angles equal to $30^\circ$, $60^\circ$, and $90^\circ$ degrees, the multiplicity in $c(R)$ can be safely formed with respect to undesired symmetries. This is done in sub-phase $\D$ by means of move $m_6$.

\begin{description}
	\item[$\h_0$:] The only moving robot  $r$ is the closest to $c(R)$, not in $c(R)$, and of minimum view. 
	\item[$\h_1$:] The configuration remains a leader configuration as predicate $\tuno$ remains true during the movement. In fact, none of the three robots on $C(R)$ is moved because at least two points in $F$ are on $C(F)$, so the multiplicity to be formed is at most of $|F|-2$ elements. However the algorithm moves at most $|F|-3$ robots in $c(R)$ (see predicate $\muno$).  
	\item[$\h_2$:] We show that $m_6$ is safe. Predicate $\tuno \wedge \mzero \wedge \neg \muno$ holds during the whole movement, then the configuration remains in $\D$ until $\muno$ becomes true. As long as $\muno$ is false, the configuration cannot be in $\Fdue$, $\Ftre$, and $\Fquattro$. It cannot be in $\Fcinque$ because either $\e$ is false or $\bzero\vee \zuno \vee \zdue$ is false. In fact, if $\e$ is true, the configuration is asymmetric, that is $\bzero$ is false. During the movement predicate $\zuno$ remains false as the three robots on $C(R)$ form a right-angle triangle. The configuration without $r$ is asymmetric, that is $\zdue$ is false. 
	
During the movement all other robots are stationary, for the same reason as in $\h_0$. 
	\item[$\h_3$:] No collisions are created as the moving robot is the closest to the target.
	\item[$\h_4$:] 
	Once the robot reaches $c(R)$, either another robot must be moved in $c(R)$ because $\muno$ is still false and then the configuration is still in $\D$, or $\muno$ holds. In the latter case the configuration
	can be in $\Euno$. It cannot be in $\Fdue$, as $\stre$ holds, and $\w$ cannot be satisfied as at least one robot must be still moved in $c(R)$. The configuration can be in any sub-phase of $\Ftre$ as, by chance, $\gdue$ might hold. Moreover the configuration can belong to $\Fquattro$, sub-phase $\P$, but not $\Q$ as predicate $\q$ is false. Finally, the configuration cannot be in $\Fcinque$ for the same reasons as in $\h_2$. 

\end{description}

As soon as phase $\D$ terminates and the configuration is in $\Euno$, move $m_1$ is applied to remove one robot from $C(R)$ leaving only the two antipodal robots.

\begin{description}
	\item[$\h_0$:] The only not critical robot $r$ on $C(R)$ is moved. 
	\item[$\h_1$:] During the movement of $r$ the configuration remains a leader configuration as $r$ cannot participate to neither a rotation, being the only robot on $C_{\downarrow}^1(R)$, nor a reflection as the axis of symmetry should pass through $r$, but then the starting configuration $R$ was symmetric too, a contradiction.

	\item[$\h_2$:] We prove that $m_1$ in $\Euno$ is safe. Assuming that the configuration $R'$ observed during the movement of $r$ still belongs to $\Funo$, then as soon as $r$ starts moving, predicate $\Edue_s=\EdueS$ holds. In fact, in $\Euno$ predicate $\stre$ holds and by moving one robot, $\sdue$ becomes true while $\l$ becomes false as the robot has to reach its target.
	
	The configuration while $r$ moves cannot belong to $\Fdue$ as $\l$ does not hold. It cannot belong to $\Ftre$ as well because $\ftre$ does not hold in $\Euno$ and $m_1$ cannot change this status. Possibly, the configuration falls in $\Fquattro$, in particular only in sub-phase $\Pdue$. In fact, in $\Puno$ predicate $\iuno$ should hold, but since $\sdue$ holds, $g''$ should be on $\mu(g'')$. In $\Q$, there must be at least three robots on $C(R)$. If $\Pdue$ holds then only $r$ can be the remaining unmatched robot that moves in $\Pdue$ since $r$ is guaranteed to not meet a point in $F$ according to the use of Procedure $\SMove$. It follows that during the movement and once $r$ is stopped by the adversary, it will be selected again by the algorithm as the unique robot that performs move $m_{14}$. 
	
	The configuration $R'$ cannot belong to $\Fcinque$ as either $\e$ is false or we now show that $\bzero\vee \zuno\vee\zdue$ is false. As the moving robot is alone on $\C_\downarrow^1(R')$, the configuration is asymmetric, that is $\bzero$ is false. Since $\sdue$ holds $\zuno$ is false. Finally as $\zdue$ is false in $R$, then, by removing the closest robot to $c(R')$, the obtained configuration cannot be symmetric, hence $\zdue$ remains false.

The above arguments also ensure that no other robot than $r$ can move from the reached configurations.
	\item[$\h_3$:] Move $m_1$ guarantees there are no robots between $r$ and its target.
	\item[$\h_4$:]  If $r$ stops because it reaches an intermediate target dictated by Procedure $\SMove$, then $\w$ or $\Pdue$ might hold, because $\idue$ becomes true.  For the same reasons as above, $\Puno$ and $\Q$ cannot be reached. If $r$ reaches the target $t=[r,c(R)] \cap C^{0,1}(R)$, the obtained configuration is not in $\Funo$ anymore, as $\fdue$ holds. Then, it can be in $\Fdue$ as $\fdue$ is now true, and as above it can be in $\Fquattro$ or $\w$ holds.
\end{description}

Similarly to sub-phases $\Auno$ and $\Adue$, sub-phase $\Edue$ is the continuation of sub-phase $\Euno$ in case the moving robot $r$ stops before reaching its target on $C^{0,1}(R)$ to make predicate $\l$ newly true. Then the same analysis of move $m_1$ for $\Euno$ applies. Once $r$ reaches its target, predicate $\fdue$ holds, and the configuration can be in $\Fdue$, $\Fquattro$ or $\w$ holds.
\qed
\end{proof}

\begin{lemma}\label{lem:corr-F2}
Let $R$ be a stationary configuration in $\Fdue$. From $R$ the algorithm eventually leads to a stationary 
configuration belonging to $\Ftre$, $\Fquattro$ or where $\w$ holds.
\end{lemma}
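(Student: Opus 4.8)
The plan is to follow the template established in the proof of Lemma~\ref{lem:corr-F1}. First I would appeal to Lemma~\ref{lem:F2-disjoint} to fix that $R$ satisfies exactly one of $\Guno_s$, $\Gdue_s$, $\H_s$, so that a single move among $m_7$, $m_8$, $m_9$ is applied; then, for each move, I would verify the properties $\h_0,\ldots,\h_4$ ($\h_5$ being immediate since each move advances a single robot). The global statement will follow from the admissible transitions $\H\to\Gdue$ (via $m_9$), $\Guno\to\Gdue$ (via $m_7$) and $\Gdue\to\Ftre$ or $\Fquattro$ (via $m_8$): because each move drives one robot by at least $\nu$ toward a well-defined target, only finitely many steps are needed to leave $\Fdue$ and reach $\Ftre$, $\Fquattro$, or a configuration satisfying $\w$.

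For $m_9$ (sub-phase $\H$, where $\c\wedge\neg\mzero$ holds) I would argue that the unique robot in $c(R)$, once it leaves the center toward the point of $C^g(R)$ at angle $\alpha/2$ from $g'$, becomes the sole occupant of $C_{\uparrow}^{1}(R)$ and sits at an angle $\alpha/2\notin\{0^\circ,90^\circ,180^\circ\}$ from the antipodal pair $\{g',g''\}$ on $C(R)$; hence no rotation (center now empty, unique innermost robot) and no reflection can arise, giving $\h_1$ and leadership. Property $\h_3$ holds since, by Definition~\ref{def:guard-disk}, the interior of $C^g(R)$ is empty, and $\h_2$/$\h_4$ follow by noting that as soon as the robot moves $\neg\gzero$ holds, placing the configuration in $\Guno$ (whose radial continuation is precisely $m_7$), while on reaching $C^g(R)$ we obtain $\gzero\wedge\neg\guno$, i.e. $\Gdue_s$.

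For $m_7$ (sub-phase $\Guno$) the arguments for $\h_0$, $\h_1$, $\h_3$ would reuse those for $m_1$ in Lemma~\ref{lem:corr-F1}: $r=\minview(\partial C_{\uparrow}^{1}(R))$ is the unique innermost robot, its radial motion preserves the angular structure (so neither a rotation nor a reflection can be created), and the interior of $C^g(R)$ is empty. For $\h_2$ I would show the configuration remains in $\Guno$ throughout (since $\neg\gzero$ persists until $C^g(R)$ is reached and the number of robots in $c(R)$, hence $\c\Rightarrow\muno$, is unchanged), excluding $\Gdue$, and exclude $\Fcinque$ as in $\h_2$ of Lemma~\ref{lem:corr-F1}: $\bzero$ fails as $r$ is alone on its circle, $\zuno$ fails because $\sdue$ holds, $\zdue$ stays false because deleting the innermost robot cannot reinstate a symmetry $R$ lacked. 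The only controlled exits occur when \SMove halts $r$ on some $C_{\downarrow}^{j}(F)$, where $\w$, $\Puno_s$ or $\Pdue_s$ may hold but $r$ remains the unique designated mover; reaching $C^g(R)$ otherwise yields $\Gdue_s$, establishing $\h_4$.

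For $m_8$ (sub-phase $\Gdue$) I would exploit that the move is a clean rotation of the unique robot on $C^g(R)$, not mediated by \SMove: at every intermediate position $\gzero$ holds and $\neg\guno$ persists (the target angle $\alpha$ is not yet reached, and since $\alpha\le 60^\circ$ the rotating robot cannot simultaneously subtend $\alpha$ with the antipodal $g''$, so the uniqueness clause of $\guno$ is respected), so the configuration stays in $\Gdue$ and remains a leader configuration; when the angle $\alpha$ is attained, $\guno$ together with the antipodal $g''$ makes $\gdue$, hence $\ftre$, true, landing in $\Ftre$ (or directly in $\Fquattro$ if all non-guard robots already occupy their targets). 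The hard part will be precisely this exclusion of $\Fcinque$ in $\h_2$ for $m_7$ and $m_8$: the antipodal pair $\{g',g''\}$ already carries the reflection symmetries of a diameter, so I must verify that neither the radial motion of $r$ nor the angular motion of $g$ ever aligns the moving robot with a symmetry axis in a way that makes $\bzero$, $\zuno$ or $\zdue$ true, and that $\guno$ is never met prematurely with respect to $g''$; a secondary subtlety is the $m_9$-to-$m_7$ hand-off, where an adversarial stop silently switches the phase to $\Guno$ and recomputes a radial trajectory, which I must check keeps the same unique mover and still terminates on $C^g(R)$.
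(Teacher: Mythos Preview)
Your plan mirrors the paper's proof: invoke Lemma~\ref{lem:F2-disjoint}, then verify $\h_0,\ldots,\h_4$ for $m_7$, $m_8$, $m_9$ separately. The overall structure and the termination argument are sound, but a few of your intermediate claims are inaccurate in ways the paper handles differently.

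For $m_9$ you assert that ``as soon as the robot moves $\neg\gzero$ holds.'' This is not true: as $r$ leaves $c(R)$, the guard circle $C^g(R)$ is \emph{recomputed} (its defining conditions in Definition~\ref{def:guard-disk} depend on the current configuration), and $r$ can land exactly on the new $C^g(R)$ at intermediate distances $d/2^j$; the paper therefore says the configuration is in $\Guno$ \emph{or} $\Gdue$ during the motion. For $m_7$ you claim the configuration ``remains in $\Guno$ throughout,'' but the paper notes that $\idue$ (hence $\Pdue_s$) may become true mid-motion; safety is preserved only because $r$ is then still the unique mover (by \SMove\ it is never on a point of $F$), not because $\Fquattro$ is excluded. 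Conversely, at \SMove\ stops the paper shows $\Pdue$ is actually \emph{excluded} (since $r$ sits on some $C^i_{\uparrow}(F)$, violating the strict inequality in $\idue$), so your list ``$\w$, $\Puno_s$ or $\Pdue_s$'' is off on both sides. Finally, your transition $\Guno\to\Gdue$ is incomplete: upon reaching $C^g(R)$ the paper allows $\Gdue$, $\Ftre$, or $\Fquattro$ (if $\guno$ or $\iuno/\idue$ happen to hold already), which is also what Table~\ref{tab:F2-phases} records. None of these affect the lemma's conclusion, but each would need correcting if you tried to prove the claims as stated.
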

\begin{proof}
Recall that the aim of $\Fdue$ is the formation of a configuration satisfying predicate $\gdue$ (cf definition of predicate $\ftre$ in $\Ftre_s$). That is, the target configuration has three robots acting as guards such that two of them, $g'$ and $g''$, are antipodal on $C(R)$ and the third one $g$ is on $C^g(R)$ forming an angle of $\alpha$ degree with $g'$.

By Lemma~\ref{lem:F2-disjoint}, exactly one of the predicates for the starting phases in Table~\ref{tab:F2-phases} is true. In turn, this implies that exactly one of the moves associated to the sub-phases of $\Fdue$ is applied to $R$. We show that the properties $\h_0,\ldots,\h_4$ hold for each possible move applied to $R$.

We analyze move $m_7$: it is performed in sub-phase $\Guno$ to bring a robot $r$ on $C^g(R)$.
\begin{description}
	\item[$\h_0$:] The move selects only one robot: the robot $r$ on $C^1_\uparrow(R)$ of minimum view. 
	\item[$\h_1$:] During the movement of $r$, the configuration remains a leader configuration as $r$ cannot participate to neither a rotation, being the only robot on $C_{\uparrow}^1(R)$, or a reflection as the axis of symmetry should pass through $r$, but then the starting configuration $R$ was symmetric too, a contradiction.
	\item[$\h_2$:] We show that $m_7$ is safe. During the movement of $r$, as $\fdue$ remains true, the observed configuration cannot belong to $\Funo$. Assuming the observed configuration still belongs to $\Fdue$, predicate $\Guno_s$ remains true.
	Since $r$ has not yet reached the target, $\gdue$ is still false and hence the observed configuration cannot belong to $\Ftre$.  
	Possibly, the configuration falls in $\Fquattro$, in particular sub-phase $\Pdue$. In fact, in $\Puno$ predicate $\guno$ should hold, but $r$ is certainly not on $C^g(R)$ (target of the current move). In $\Q$ there should be at least three robots in $\partial C(R)$, but $\sdue$ holds. If $\Pdue$ holds then only $r$ can be the remaining unmatched robot that moves in $\Pdue$ since $r$ is guaranteed to not meet a point in $F$ according to the use of Procedure $\SMove$. The configuration cannot belong to $\Fcinque$ as either $\e$ is false or $\bzero \vee \zuno \vee \zdue$ is false. In fact, if $\e$ is true then the configuration is asymmetric, that is $\bzero$ is false; $\zuno$ cannot hold as $\sdue$ holds; $\zdue$ remains false as $m_7$ moves $r$ toward $C^g(R)$. It follows that during the movement and once $r$ is stopped by the adversary, it will be selected again by the algorithm as the unique robot that performs move $m_{14}$. 
	
Summarizing,  while $r$ is moving the configuration can be in sub-phase $\Guno$ of $\Fdue$ or in sub-phase $\Pdue$ of $\Fquattro$. In both cases $r$ is always recognized as the only robot allowed to move.  
	\item[$\h_3$:] no collisions are created as there is no robot between $r$ and $C^g(R)$.
	\item[$\h_4$:] Assume that $r$ stops moving because it reaches an intermediate target dictated by Procedure $\SMove$. Then, $\w$ can hold or the configuration is still in $\Guno$. In fact, by the analysis in $\h_2$, the configuration might be in $\Pdue$, but this is excluded as the robot is on $C^i_\uparrow(F)$, for some $i>0$, while $\idue$ does not hold because it requires that $d(c(R),r)< d(c(F), f)$, where $f$ is a point on $C^1_\uparrow(F)\cap F$.
	
	Assume $r$ reaches its target on $C^g(R)$. Then, $\w$ cannot hold because there are no points of $F$ on $C^g(R)$ by definition. The configuration is not in $\Funo$ as $\fdue$ holds.  As $\gzero$ holds, the configuration can be in $\Gdue$ and in $\Ftre$ in case also $\guno$ holds. The configuration can be in $\Fquattro$, in particular in $\Puno$ or $\Pdue$ depending on $\iuno$ or $\idue$. It cannot be in $\Q$ because $\sdue$ holds.
	The configuration cannot be on $\Fcinque$ by the same analysis in $\h_2$.
\end{description}

Once there is a robot $r$ on $C^g(R)$, to make $\guno$ true, that is to correctly place guard $g$, it should be rotated on $C^g(R)$. This is done in sub-phase $\Gdue$ by move $m_8$.

\begin{itemize}
	\item[$\h_0$:] $r$ is the unique robot on $C^g(R)$;
	\item[$\h_1$:] During the movement of $r$, the configuration remains a leader configuration as $r$ cannot participate to neither a rotation, being the only robot on $C^g(R)$, or a reflection as the axis of symmetry should pass through $r$, and through the antipodal robots $g'$ and $g''$ or between them. These cases can happen only if $r$ is collinear with $g'$ and $g''$ or if it lies on the line perpendicular to the segment $[g',g'']$.
 In any other case, the movement of $g$ cannot generate an axis of reflection as it would imply the existence of a point $t'$, representing the reflection of $t$ with respect to the axis, such that $d(g,t')<d(g,t)$. This contradicts the hypothesis that $t$ is the required closest point.
	
	\item[$\h_2$:]  We show that $m_8$ is safe. During the movement of $r$, as $\fdue$ remains true, the observed configuration cannot belong to $\Funo$. Assuming the configuration still belongs to $\Fdue$, predicate $\Gdue_s$ remains true.
	As $\gdue$ is still not true, the configuration cannot belong to $\Ftre$.  
	The observed configuration does not fall in $\Fquattro$. In particular: as $\guno$ is false, it is not in $\Puno$;  
    it is not in $\Pdue$ otherwise $\idue$ true during the movement of $r$ implies $\idue$  true in the starting configuration $R$ too, a contradiction;
	it is not in $\Q$ as $\sdue$ still holds during the movement of $r$, while $\Q$ handles configurations with at least $n-1$ robots on $C(R)$.
	The configuration does not fall in $\Fcinque$. In particular: either $\e$ is false or $\bzero \vee \zuno \vee \zdue$ is false. In fact, if $\e$ is true then the configuration is asymmetric, that is $\bzero$ is false. $\zuno$ cannot hold as $\sdue$ holds. $\zdue$ does not hold as $r$ is on $C^g(R)$. 

As the configuration remains in $\Gdue$ by the above analysis, robot $r$ is always detected as the only moving one.
	\item[$\h_3$:] No collisions are created as there is only $r$ on $C^g(R)$.
	\item[$\h_4$:] If $r$ reaches its target on $C^g(G)$, $\w$ cannot hold because there are no points of $F$ on $C^g(R)$ by definition. The configuration is not in $\Funo$ as $\fdue$ holds.  As $\gdue$ holds, the configuration can be in $\Ftre$. The configuration can be in $\Fquattro$, in particular in $\Puno$ as $\iuno$ might hold. It cannot be in $\Pdue$ or in $\Q$ by the same analysis in $\h_2$. It cannot be in $\Fcinque$ by the same analysis in $\h_2$.
\end{itemize}

In case a robot $r$ is in $c(R)$, but there is no multiplicity in $c(F)$, then $r$ will be moved on $C^g(R)$ in sub-phase $\H$ by means of move $m_9$.

\begin{itemize}
	\item[$\h_0$:] $r$ is clearly the only robot to move;
	\item[$\h_1$:] the configuration is always a leader configuration by the same analysis provided for move $m_8$ in sub-phase $\Gdue$.
	\item[$\h_2$:] We show that $m_9$ is safe. As soon as $r$ starts moving, the observed  configuration can only belong in $\G$. In fact, during the movement of $r$, $C^g(R)$ changes, but $r$ is recognized as the unique robot on $C^1_\uparrow(R)$. It follows that the configuration is in $\Gdue$ or $\Guno$ depending whether $r$ is on the current circle $C^g(R)$ or not.
	The configuration cannot be in $\Funo$ as $\fdue$ holds. It cannot be in $\H$ as $\c$ is false, and it cannot be in $\Ftre$ as $\gdue$ is false. It cannot be in $\Fquattro$ as $\guno$ remains false and this excludes $\Puno$, $\idue$ remains false and this excludes $\Pdue$, and  $\Q$ is excluded by $\sdue$. It cannot be in $\Fcinque$ as the configuration is asymmetric with $c(R)$ occupied, which means both $\zuno$ and $\zdue$ are false.

In any of the reachable sub-phases described, $r$ is the only moving robot.
	\item[$\h_3$:] No collisions are created as there are no further robots between $c(R)$ and $C^g(R)$.
	\item[$\h_4$:] Once $r$ reaches $C^g(R)$, $\w$ cannot hold because there are no points of $F$ on $C^g(R)$ by definition. The obtained configuration is not in $\Funo$ as $\fdue$ holds, and is not in $\Ftre$ as $\guno$ is false. The obtained configuration is not in $\Fquattro$ nor in $\Fcinque$ as both $\fquattro$ and $\fcinque$ remain false, for similar reasons as in $\h_2$. It means the configuration can only be in phase $\Fdue$, in particular it cannot be in $\Guno$ as $\gzero$ is true, and it cannot be in $\H$ as $\c$ is false. So it can only be in $\Gdue$.
\end{itemize}
\qed
\end{proof}

\begin{lemma}\label{lem:corr-F3}
Let $R$ be a stationary configuration in $\Ftre$. From $R$ the algorithm eventually leads to a stationary configuration belonging to $\Fquattro$. 
\end{lemma}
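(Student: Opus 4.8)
The plan is to mirror the structure used for Lemmas~\ref{lem:corr-F1} and~\ref{lem:corr-F2}. By Lemma~\ref{lem:F3-disjoint}, exactly one of the starting predicates $\M_s$, $\N_s$, $\O_s$ of Table~\ref{tab:F3-phases} holds, so exactly one of the moves $m_{10}$, $m_{11}$, $m_{12}$ is applied, and for each I would establish $\h_0,\ldots,\h_4$ (property $\h_5$ again coming for free, since only one robot moves). The single fact underlying the whole proof is that the three guards $g$, $g'$, $g''$ are never moved during $\Ftre$: this keeps the reference system fixed, so the embedding of $F$ on $C(R)$ and the induced matching $\mu$, together with the sets $R^{\neg m}$, $F^{\neg m}$, $\eta$, and $R^{\neg m}_{\eta}$, are well defined and computed identically by every robot. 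Consequently $\gdue$ stays true; moreover, since $\mzero\Rightarrow\muno$ holds on entry to $\Ftre$ and no robot of this phase is moved into or out of $c(R)$, the whole invariant $\ftre=(\mzero\Rightarrow\muno)\wedge\gdue$ is preserved. This is what prevents the configuration from silently falling into $\Funo$, $\Fdue$, or $\Fcinque$, and restricts the only possible exit to $\Fquattro$.

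For $\h_0$ and $\h_1$: the robot selected to move is $r=\minview(R^{\neg m}_{\eta})$, unique because the guards keep $R$ a leader configuration, so the view order on $R^{\neg m}_{\eta}$ is well defined. During the motion $r$ stays unmatched (it occupies a target only on arrival) while every matched robot stays matched and every other robot is stationary; hence $R^{\neg m}$ and the view order are unchanged for the others, and $r$ is continuously recognized as the unique distinguished mobile robot throughout its trajectory. No symmetry or multiplicity is created, because Procedure \DistMin\ keeps $r$ strictly inside $C(R)$ and inside the Voronoi cell of its target (Lemma~\ref{lem:corr-DistMin}), and because the teleporter routing keeps every moving robot outside $C^g(R)$, so $g$ remains the unique robot on $C^g(R)$ and $R$ stays a leader configuration.

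The heart of the argument is $\h_2$ (safety), which I would treat move by move. For $m_{12}$ (sub-phase $\O$, case (c): the approach toward $\mu(r)$ when $[r,\mu(r)]$ misses $C^t(R)$, i.e.\ $\neg\ddue$), \DistMin\ gives $\h_3$ and keeps $r$ within $C(R)$ and the Voronoi cell of $\mu(r)$, so $r$ reaches $\mu(r)$ without collisions. For $m_{11}$ (sub-phase $\N$, case (a): heading to $p_1=(r,\mu(r)]\cap C^t(R)$), \DistMin\ gives $\h_3$ and, by the choice of the half-line $\ell$, guarantees that intermediate deviation points never fall inside $C^t(R)$ (the last remark of Lemma~\ref{lem:corr-DistMin}); hence at most $r$ ever touches the teleporter, so $\dzero$ is correctly controlled and $g$ stays alone on $C^g(R)$. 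For $m_{10}$ (sub-phase $\M$, case (b): rotating the unique robot already on $C^t(R)$ toward the exit $p_2=[c(F),f^*]\cap C^t(R)$), $r$ is alone on $C^t(R)$, so $\h_3$ is immediate and $\dzero$ stays true until $\duno$ turns true. In all three cases $\gdue$ and $\ftre$ persist and $\sdue$ is false with the two antipodal guards on $C(R)$, so the observed configuration can only be another sub-phase of $\Ftre$, selected by the current status of $\dzero$, $\duno$, $\ddue$; moreover, while any non-guard robot is unmatched, $R\setminus\{g,g''\}$ is not similar to $F\setminus\{\mu(g),\mu(g'')\}$, so none of $\iuno$, $\idue$, $\w$ can hold, excluding a premature jump to $\Fquattro$ and excluding that any second robot starts moving.

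For $\h_4$ and termination: whenever $r$ is stopped short of its destination, the $\h_2$ analysis shows the resulting stationary configuration is again one of $\M$, $\N$, $\O$ (the sub-phases cycle as $\N\to\M\to\O$ and back), so $r$ is reselected and continues. Since there are finitely many non-guard robots and each advances by at least $\nu$ on every move, after finitely many moves every point of $F^{\neg m}$ is occupied. At that instant the only robots possibly still off their targets are the guards $g$ and $g''$ (recall $g'$ is matched by construction of $\mu$, while $g$ lies on $C^g(R)$, which carries no point of $F$), so $R\setminus\{g,g''\}$ is similar to $F\setminus\{\mu(g),\mu(g'')\}$ and either $\iuno$ or $\idue$ holds, making $\fquattro$ true; hence the configuration is stationary and belongs to $\Fquattro$, as recorded by the end predicate $\Fquattro_s$ of sub-phase $\O$. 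The main obstacle I anticipate is precisely the safety verification $\h_2$ for the three-part teleporter trajectory: I must show that crossing into and rotating within the annulus bounded by $C^t(R)$ never lets a second robot onto $C^t(R)$ (so that $\dzero$ and the guard-circle protection are respected) and never momentarily matches enough robots to trigger a $\Fquattro$ predicate before $r$ has finished.
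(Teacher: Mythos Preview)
Your overall plan matches the paper's: invoke Lemma~\ref{lem:F3-disjoint}, verify $\h_0,\ldots,\h_4$ for each of $m_{10},m_{11},m_{12}$, and conclude termination via finiteness of unmatched non-guard robots and the $\nu$-step guarantee. Two points, however, are genuine gaps.

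First, preserving $\ftre$ does \emph{not} by itself exclude $\Fcinque$: in Table~\ref{tab:phases}, $\Fcinque_s=\fcinque\wedge\neg\w$ contains no $\neg\ftre$ clause, so your sentence ``This is what prevents the configuration from silently falling into $\Funo$, $\Fdue$, or $\Fcinque$'' is wrong for $\Fcinque$. The paper argues this separately under $\h_2$ for $m_{10}$: either $\e$ is false, or all of $\bzero,\zuno,\zdue$ fail ($\bzero$ because the guards force asymmetry, $\zuno$ because $\gdue$ gives antipodal robots on $C(R)$, $\zdue$ because $g$ sits on $C^g(R)$). You need this explicit check. Relatedly, to exclude $\Fquattro$ you only address $\iuno,\idue$; when $\q$ holds you must also rule out $\itre,\ldots,\isei$. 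The paper does this uniformly by noting that during the move both $r$ and $g$ are neither on a target of $F$ nor on $C(R)$, while each of $\iuno,\ldots,\isei$ requires at most two unmatched robots, at least one of which lies on $C(R)$.

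Second, your re-selection argument for $m_{11},m_{12}$ --- that ``$R^{\neg m}$ and the view order are unchanged for the others'' --- does not work: the view of $r$ itself changes as it moves, so $\minview$ alone does not guarantee $r$ is reselected. The paper's mechanism is different and essential: by condition~3 of Lemma~\ref{lem:corr-DistMin}, the distance from $r$ to its target strictly decreases to some $\eta'<\eta$, so $R^{\neg m}_{\eta'}=\{r\}$ and $r$ is trivially the unique robot in that set. This distance monotonicity is what keeps all other robots stationary and what makes the cycling among $\M,\N,\O$ terminate. (Minor: your parenthetical ``$\sdue$ is false with the two antipodal guards on $C(R)$'' is backwards; if only $g',g''$ lie on $C(R)$ then $\sdue$ is true.)
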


\begin{proof}
By Lemma~\ref{lem:F3-disjoint}, exactly one of the predicates for the starting phases in Table~\ref{tab:F3-phases} is true. In turn, this implies that exactly one of the moves associated to the sub-phases of $\Ftre$ is applied to $R$. We show that the properties $\h_0,\ldots,\h_4$ hold for each possible move applied to $R$.

Let $P^* =\{[c(F),f^*]\cap C^t(R)~|~ f^*\in F^*\}$, and let $d$ be the distance from any point in $P^*$ to any target in $F^*$. Let us start the analysis of moves $m_{10}$, $m_{11}$, and $m_{12}$ by assuming there is exactly one robot $r$ on $C^t(R)$ and that $r$ is not on a point of $P^*$, that is $\dzero$ holds while $\duno$ is false. In this case, the configuration is in sub-phase $\M$ and move $m_{10}$ is applied.

\begin{itemize}
	\item[$\h_0$:] The only moved robot is that on $C^t(R)$, that by assumption is $r$.
	\item[$\h_1$:] The configuration is maintained a leader configuration by the position of the three guards $g$, $g'$, and $g''$. In fact, as $g$ is the only robot on $C^g(R)$, the configuration cannot be rotational. Moreover, the only possible reflection axis should pass through $g$, but there is no robot that can be reflected to $g'$ as, by predicate $\guno$, $g'$ is the only robot such that $\angolo(g,c(R),g')=\alpha$. 
	\item[$\h_2$:] We show that $m_{10}$ is safe. As $\ftre$ holds during the movement of $r$, the observed configuration cannot be in $\Funo$ and $\Fdue$. Moreover, during the move, both robots $r$ and $g$ do not stay neither on a target of $F$ nor on $C(R)$. As each predicate among $\iuno, \ldots, \isei$ requires that at most two robots are not on target and at least one of them is on $C(R)$, they are all false and then the observed configuration is not in $\Fquattro$. 
		The configuration does not fall in $\Fcinque$. In particular: either $\e$ is false or $\bzero \vee \zuno \vee \zdue$ is false. In fact, if $\e$ is true then the configuration is asymmetric due to guard $g$, that is $\bzero$ is false. $\zuno$ cannot hold as $\gdue$ holds which implies the existence of antipodal robots on $C(R)$. $\zdue$ does not hold as there is $g$ on $C^g(R)$. 
	Hence, as $\MS$ holds, the observed configuration remains in $\Ftre$, sub-phase $\M$.
Robot $r$ remains the only robot on $C^t(R)$, then all the other robots are stationary.
	\item[$\h_3$:] Collisions are impossible as $r$ rotates on $C^t(R)$ and it is the only robot on it.
	\item[$\h_4$:] Once $r$ reaches the target on $C^t(R)$, by the analysis done in $\h_2$, the configuration remains in $\Ftre$, but now $\duno$ holds. If still $r = \minview( R^{\neg m}_{\eta} )$, then $\ddue$ is false as the current target $\mu(r)\in F^*$ is reachable without intersecting $C^t(R)$. If another robot $r' = \minview( R^{\neg m}_{\eta} )$ (this can happen only the first time phase $\Ftre$ is applied, and there was already $r$ on $C^t(R)$), then again $\ddue$ is false because otherwise the distance from $r'$ to $\mu(r')$ would be greater than $d$, a contradiction. Hence, the obtained configuration is in sub-phase $\O$.
\end{itemize}

When $\dzero \Rightarrow \duno$ holds, the configuration is in sub-phase $\N$ or $\O$ depending on whether $\ddue$ is true or not. Let us assume that $\ddue$ is true, that is $C^t(R)\cap (r, \mu(r)] \neq \emptyset$, where $r = \minview( R^{\neg m}_{\eta} )$. Then, move $m_{11}$ is applied and $r$ is moved toward the closest point $p_1$ in  $C^t(R)\cap (r, \mu(r)]$ according to Procedure $\DistMin$. 

\begin{itemize}
	\item[$\h_0$:] The only moved robot is $r$ that is the one with minimum view in $R^{\neg m}_{\eta}$. It is unique as the configuration is a leader configuration and hence there cannot be two robots in $R^{\neg m}_{\eta}$ with the same view.
	\item[$\h_1$:] As in the analysis done for move $m_{10}$, the configuration is maintained a leader configuration by the  guards. 
	\item[$\h_2$:] We show that $m_{11}$ is safe. During the movement of $r$, the configuration remains in $\Ftre$ by the same analysis done for move $m_{10}$. In particular, it can still belong to sub-phase $\N$ or to sub-phase $\O$. In fact, if there are robots between $r$ and $p_1$, then $r$ receives an intermediate target $p$ by procedure $\DistMin$ and hence $\ddue$ may remain true or not. 
By Lemma~\ref{lem:corr-DistMin}, condition $3)$, robot $r$ reduces its distance to $p_1$ and then to $\mu(r)$, being $p_1$ an intermediate point between the initial position of $r$ and $\mu(r)$. Then the distance $\eta'$ of $r$ to $\mu(r)$ is such that $\eta' <\eta$, and hence it  remains the only robot in $R^{\neg m}_{\eta'}$. All the other robots remain stationary as their distance to any target is at least $\eta$.
	\item[$\h_3$:] Collisions are impossible as $r$ is moved according to Procedure $\DistMin$.
	\item[$\h_4$:] By the analysis provided for $m_{10}$, the configuration remains in $\Ftre$ while $r$ is moving. Here there are three possible cases: (i) $r$ reaches target $p_1$; (ii) $r$ has reached the new target $p$ (or it has been stopped before by the adversary) and the trajectory toward $\mu(r)$ does not intersect $C^t(R)$ anymore; (iii) $r$ has reached the new target $p$ (or it has been stopped before by the adversary) and still the trajectory toward $\mu(r)$ intersects $C^t(R)$. In case (i) $r$ is now the only robot on $C^t(R)$, and the configuration is in sub-phase $\M$. In case (ii), $\ddue$ is false and the configuration is in $\O$, and by Lemma~\ref{lem:corr-DistMin}, $r$ will be selected again to move. In case (iii), the configuration is still in $\N$ and robot $r$ will be selected again by the algorithm. In fact, by condition $3)$ of Lemma~\ref{lem:corr-DistMin}, $r=\minview( R^{\neg m}_{\eta'} )$, with $\eta'<\eta$. 
	Moreover, by condition $1)$ of Lemma~\ref{lem:corr-DistMin}, $C(R)$ and the target $\mu(r)$ do not change. Let $p'\neq p$ be the closest point to $r$ on $(r,\mu(r)]\cap C^t(R)$. By condition $2)$ of Lemma~\ref{lem:corr-DistMin}, there are no robots between $r$ and $p'$, that is there will not be a deviation by means of $\DistMin$ when $r$ applies again $m_{11}$. Note that, in case (iii), still $r$ applies $m_{11}$, so predicate $\N_e$ does not hold, but now $p'$ is assured to be reached within a finite number of steps because in each step $r$ moves of at least $\nu$.
\end{itemize}

Let us assume that $\ddue$ is false, that is $C^t(R)\cap (r, \mu(r)] = \emptyset$, where $r = \minview( R^{\neg m}_{\eta} )$. Then, the configuration is in $\O$, move $m_{12}$ is applied and $r$ is moved toward $\mu(r)$ according to Procedure $\DistMin$. 

\begin{itemize}
	\item[$\h_0$:] As in move $m_{11}$, the only moved robot is that with minimum view $r=R^{\neg m}_{\eta}$. Robot $r$ is unique as the configuration is a leader configuration.
	\item[$\h_1$:] As in the analysis done for move $m_{10}$, the configuration is maintained a leader configuration by the  guards. 
	\item[$\h_2$:] During the movement of $r$, the configuration remains in $\Ftre$ by the same analysis done for move $m_{10}$. In particular, it can only belong to $\O$ as the trajectory from $r$ to $\mu(r)$ cannot intersect $C^t(R)$, even if Procedure $\DistMin$ assigns a new target.  
By Lemma~\ref{lem:corr-DistMin} condition $3)$, robot $r$ reduces its distance to $\mu(r)$. Then the distance $\eta'$ of $r$ to $\mu(r)$ is such that $\eta' <\eta$, and hence it remains the only robot in $R^{\neg m}_{\eta'}$. All the other robots remain stationary as their distance to any target is at least $\eta$.
	\item[$\h_3$:] Collisions are impossible as $r$ is moved according to Procedure $\DistMin$.
	\item[$\h_4$:] 	If $r$ reaches a new target $p$ assigned by $\DistMin$ (or it has been stopped before by the adversary), the configuration is still in $\O$ but now, by condition $2)$ of Lemma~\ref{lem:corr-DistMin}, there are no robots between $r$ and $\mu(r)$. By condition $1)$ of the same lemma both $C(R)$ and $\mu(r)$ do not change, and by condition $3)$, robot $r$ reduces its distance to $\mu(r)$. So $r$ applies again $m_{12}$, predicate $\O_e$ does not hold, but now $\mu(r)$ is assured to be reached within a finite number of steps because in each step $r$ moves of at least $\nu$.
	
	If $r$ reaches $\mu(r)$ and there are still unmatched points in $F\setminus \{\mu(g),\mu(g'')\}$ then the configuration remains in $\Ftre$ as $\FtreS$ holds. In particular, it belongs either to $\N$ or to $\O$ depending on $\ddue$. It cannot belong to $\M$ as a robot on $C^t(R)$ would move before $r$. 

If $r$ reaches $\mu(r)$ and all points in $F\setminus \{\mu(g),\mu(g'')\}$ are matched, then $\w$ does not hold as $g$ is not on $\mu(g)$. The configuration is not in $\Fcinque$ by the same analysis done for $\h_2$ of move $m_{10}$. Hence, the configuration is in $\Fquattro$. In particular it can be in sub-phases $\Puno$, $\Pdue$, or $\Quno$, as $g$ is not on $C(R)$.
\end{itemize}

Clearly, the fact that from $\O$ the configuration can go back to $\N$ or to $\O$ can happen only a finite number of times, until all points in $F\setminus \{\mu(g),\mu(g'')\}$ become matched.

In conclusion, moves $m_{10}$, $m_{11}$, and $m_{12}$ can be applied only a finite number of times, then eventually the configuration leaves phase $\Ftre$ and, following the above analysis, phase $\Fquattro$ is reached.
\qed
\end{proof}

\begin{lemma}\label{lem:corr-F4}
Let $R$ be a stationary configuration in $\Fquattro$. From $R$ the algorithm eventually leads to a stationary configuration where $\w$ holds. 
\end{lemma}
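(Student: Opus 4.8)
The plan is to reuse the template of Lemmas~\ref{lem:corr-F1}--\ref{lem:corr-F3}. First I would invoke Lemma~\ref{lem:F4-disjoint}, so that exactly one of the starting predicates $\Puno_s,\Pdue_s,\Quno_s,\Qdue_s,\Qtre_s,\Qquattro_s$ of Table~\ref{tab:F4-phases} holds and hence exactly one of $m_{13},\ldots,m_{17}$ is applied to $R$. Since $\q$ depends only on $F$ (Definition~\ref{def:q}), its truth value is constant along any execution, so the analysis splits into two independent itineraries: the \emph{$\P$-thread}, active when $\neg\q$, which runs $\Puno\to\Pdue\to\w$; and the \emph{$\Q$-thread}, active when $\q$, which runs $\Quno\to(\Qdue\text{ or }\Qtre)\to\Qtre\to\Qquattro\to\w$. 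For each sub-phase I would prove properties $\h_0$--$\h_4$ for its move; $\h_5$ comes for free since exactly one robot moves at a time.

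For the $\P$-thread, in $\Puno$ only $g''$ moves ($\h_0$), rotating along $C(R)$ toward $\mu(g'')$ via $m_{13}$. As $g$ is still the unique robot on $C^g(R)$ and $g'$ the unique robot with $\angolo(g,c(R),g')=\alpha$, the three guards keep $R$ a leader configuration ($\h_1$), exactly as for $m_{10}$ in Lemma~\ref{lem:corr-F3}, and no collision arises since $g''$ travels on $C(R)$ ($\h_3$). For $\h_2$ I would check that the embedding fixed in $\Ftre$ still recognizes $g'$, so that $R\setminus\{g,g''\}$ stays similar to $F\setminus\{\mu(g),\mu(g'')\}$ and $\iuno$ persists; thus $\fquattro$ remains true (ruling out $\Funo,\Fdue,\Ftre$, each of which needs $\neg\fquattro$), $\neg\q$ rules out the $\Q$ sub-phases, and $\fcinque$ stays false because the guards force asymmetry. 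When $g''$ reaches $\mu(g'')$ only $g$ is unmatched, so $\idue$ holds and $R$ enters $\Pdue$ ($\h_4$). In $\Pdue$, $m_{14}$ drives $g$ straight to $\mu(g)$; here $\h_0$, $\h_3$ are immediate, and for $\h_1$--$\h_2$ I would use that $R\setminus\{g\}$ is frozen and similar to $F\setminus\{\mu(g)\}$, so only the placement of the single robot $g$ can affect leadership. The target is either $c(F)$ or the innermost pattern point on $C_{\uparrow}^{1}(F)$; a radial trajectory keeps $g$ the unique innermost robot and maintains the strict inequality $d(c(R),g)<d(c(F),\mu(g))$ defining $\idue$, so membership in $\Pdue$ is preserved until $g$ arrives, at which moment $R$ is similar to $F$ and $\w$ holds ($\h_4$).

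For the $\Q$-thread I would first establish the claim that whenever $\q$ holds and $r_2,\ldots,r_{n-1}$ are matched with $f_2,\ldots,f_{n-1}$ the configuration is asymmetric; this is forced by condition (3) of Definition~\ref{def:q}, which confines $f_2,\ldots,f_n$ to an open semicircle and isolates $f_1$. Asymmetry makes the robot ordering, and thus the identification of $g=r_1$, $g'=r_2$, $g''=r_n$, unique in each sub-phase, giving $\h_0$ and $\h_1$. I would then follow the designed itinerary: $m_{15}$ moves $g$ radially out to $C(R)$ keeping $\angolo(g,c(R),f_2)=\alpha$ (phase $\Quno$); $m_{16}$ rotates the critical robot $g''$ by at most $\alpha$ so as to preserve $C(R)$ (phase $\Qdue$); $m_{17}$ rotates $g$ along $C(R)$ toward $f_1$ (phase $\Qtre$); and $m_{13}$ finally rotates $g''$ to $f_n$ (phase $\Qquattro$), after which $\w$ holds. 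For $\h_2$ and $\h_4$ I would track the angular tests of $\itre,\ldots,\isei$ (e.g.\ $\angolo(r_n,c(R),p)<\alpha$, then $=\alpha$, then at most $\angolo(f_n,c(R),p)$, with $p$ antipodal to $f_2$) to confirm that each move stays inside its sub-phase and hands over to the intended successor; $\h_3$ is clear since a single robot moves along $C(R)$ or a radius.

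Termination follows because each thread is a fixed finite chain of sub-phases and, by the $\nu$-bound on step length, every move completes after finitely many activations. The hard part will be $\h_1$ and $\h_2$ for the moves that displace the internal guard $g$ ($m_{14},m_{15},m_{17}$) or the critical boundary guard $g''$ ($m_{13},m_{16}$): these are exactly the moves that dismantle the reference system built in $\Fdue$, so I must show that at every intermediate position the mover remains the unique robot admitting the relevant embedding, that no transient rotational or reflective symmetry appears (the failure that invalidated~\cite{FYOKY15}, cf.\ Section~\ref{ssec:ce-1}), and that $C(R)$ is untouched while $g''$ is critical. The two-step displacement of $g$ in the $\Q$-thread---$m_{15}$ then $m_{17}$, separated by the guarded rotation $m_{16}$ of $g''$---is precisely the device averting these symmetries, and checking that it does so, via the angular constraints imposed by $\iquattro,\icinque,\isei$, is where most of the effort lies.
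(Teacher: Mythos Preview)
Your plan matches the paper's proof almost exactly: invoke Lemma~\ref{lem:F4-disjoint}, split on the (execution-constant) value of $\q$ into the $\P$-thread and the $\Q$-thread, and verify $\h_0$--$\h_4$ for each of $m_{13},\ldots,m_{17}$ in the order you describe, with $\h_5$ free since a single robot moves. Two small points where the paper is more careful than your sketch: in $\Puno$ it gives an explicit case analysis (on which clause of Definition~\ref{def:q} fails) showing that $g''$ is \emph{not} critical for $C(R)$ along its rotation, which is what you need for $\h_2$ in the $\P$-thread and which you only flag as a difficulty for the $\Q$-thread; and in $\Pdue$ the trajectory of $g$ under $m_{14}$ is straight toward $\mu(g)$ but not radial in general, so the paper's $\h_1$ argument relies on $g$ being the unique robot on $C_{\uparrow}^{1}(R)$ together with the minimum-distance clause of $\idue$, rather than on a radial geometry.
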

\begin{proof}
By Lemma~\ref{lem:F3-disjoint}, exactly one of the predicates for the starting phases in Table~\ref{tab:F4-phases} is true. In turn, this implies that exactly one of the moves associated to the sub-phases of $\Fquattro$ is applied to $R$. We show that the properties $\h_0,\ldots,\h_4$ hold for each possible move applied to $R$.

We recall that in this phase only guards $g$ and $g''$ need to be moved to complete the formation of the pattern. As guards move, the embedding exploited in phase $\Ftre$ cannot be always recognized, at least not straightforwardly. Each sub-phase refers in fact to a different embedding that tries to reconstruct where the configuration comes from. 

Sub-phases $\Puno$ and $\Pdue$ manage the case where $\q$ is false. In sub-phase $\Puno$, guard $g''$ rotates along $C(R)$ in order to reach $\mu(g'')$, performing move $m_{13}$. Since $\q$ is false, we are guaranteed that $C(R)$ does not change while $g''$ moves. In fact, let $p$ be the antipodal point to $\mu(g'')$.
If $\q$ is false because of the first condition that is $\partial C(F)\neq F$, then all points in $\partial C(F)\setminus \mu(g'')$ are occupied by robots as $\mu(g)$ is inside $C(R)$.  Without loss of generality, let us assume that $g''$ needs to rotate in the clockwise direction to reach $\mu(g'')$. Since $C(F)=C(R)$, there must exist a point $f\in \partial C(F)$, which is occupied, that either coincides with $p$ or it can be met in the clockwise direction by rotating on $C(R)$ from $p$. Robot $g''$ moves in between $\mu(g'')$ and $f$, and hence it is not critical. 
Similar arguments hold if $\q$ is false because of the second condition that is $F$ contains multiplicities since $\mu(g)$ is either inside $C(R)$ or on a multiplicity. 
If $\q$ is false because of the third condition, then $\angolo(f_2,c(R),f_n)\leq 180^\circ$. If $\mu(g'')$ is reached by $g''$ in the counter-clockwise direction, then by the disposal of the points on $C(F)$ and the minimality of $f_1$, $g''$ can safely move without affecting $C(R)$. If $\mu(g'')$ is reached by $g''$ in the clockwise direction, then the condition on $\q$ assures that $f_n$ lies in the counter-clockwise direction from $g''$, it is occupied, and it is closer than $p$ to $\mu(g'')$.

\begin{description}
	\item[$\h_0$:] Only $g''$ is involved in move $m_{13}$. Even though $g''$ moves, $\iuno$ ensures to always recognize the same robot as $g''$.
	\item[$\h_1$:] As in $\Ftre$, the configuration is maintained always a leader configuration by the positioning of $g$ on $C^g(R)$.
	\item[$\h_2$:] We show that $m_{13}$ is safe. As $\q$ only depends on $F$, it is always false. During the movement of $g''$, $\iuno$ remains true hence, by Lemma~\ref{lem:F4-disjoint}, the observed configuration always belongs to $\Puno$ and it cannot belong to any other sub-phase of $\Fquattro$. The configuration cannot belong to $\Fcinque$ because either $\e$ is false or we have to show that $\bzero \vee \zuno \vee \zdue$ is false. Assuming $\e$ true, the configuration is asymmetric because of the guards, that is $\gdue$ holds. $\zuno$ is false because $\stre$ should hold and there should exist an axis of reflection $\ell$ as defined in $\zuno$. The axis $\ell$ cannot exist because of the chosen angle $\alpha$ formed by the guard $g$. Predicate $\zdue$ is false because guard $g$ is on $C^g(R)$. It follows that the configuration cannot belong to any other phase because $\fquattro$ holds and that 
$g''$ is the only moving robot.
	\item[$\h_3$:] No collisions are created as there are no robots between $g''$ and $\mu(g'')$ on $C(R)$.
	\item[$\h_4$:] Once $g''$ reaches the target, predicate $\iuno$ does not hold anymore as $g''=\mu(g'')$ but predicate $\idue$ holds, that is by Lemma~\ref{lem:F4-disjoint} the configuration is in $\Pdue$. The configuration cannot satisfy $\w$ as $g$ is on $C^g(R)$. It is not in any other phase because the same arguments as in $\h_2$.
\end{description}

From $\Puno$ only guard $g$ remains to be positioned in order to form $F$. Now the embedding of $F$ on $R$ is more difficult to detect and $g$ moves according to move $m_{14}$ in phase $\Pdue$.

\begin{description}
	\item[$\h_0$:] Only $g$ is involved in the move. Even though $g$ moves, $\idue$ ensures to always recognize the same robot as $g$ since the distance to the target always decreases.
	\item[$\h_1$:] The configuration is maintained always a leader configuration during the movement because $g$ is the only robot on $\C_\uparrow^1(R)$.  
So it cannot participate to a rotation. Moreover, the configuration cannot admit a reflection as the axis of symmetry should pass through $g$. This means that there exists another embedding of $F$ such that the target of $g$ would be closer than that it has currently calculated, but this contradicts predicate $\idue$ that chooses the target that minimizes the distance. 
	\item[$\h_2$:] We show that $m_{14}$ is safe. As $\q$ only depends on $F$, it is always false.  During the movement $\idue$ remains true, hence, by Lemma~\ref{lem:F4-disjoint}, the configuration always belongs to $\Pdue$ and it cannot belong to any other sub-phase of $\Fquattro$. The configuration cannot belong to $\Fcinque$ because either $\e$ is false and we are done, or we have to show that $\bzero \vee \zuno \vee \zdue$ is false. Assuming $\e$ true, the configuration is asymmetric because of the movement of $g$. $\zuno$ is false because $\stre$ should hold and there should exist an axis of reflection $\ell$ as defined in $\zuno$. Such an axis cannot exist as $g$ should be on it being the only robot on $\C_\uparrow^1$. Predicate $\zdue$ is false because of $\idue$. Moreover, it cannot belong to any other phase because $\fquattro$ holds. It follows that 
$g$ is the only moving robot.
	\item[$\h_3$:] No collisions are created as there are no robots between $g$ and its target since $g$ always moves inside $C_\uparrow^1(F)$ toward its border or toward $c(F)$. In any case no further robot is met as all of them are already positioned according to $F$.
	\item[$\h_4$:] Once $g$ reaches the target, predicate $\w$  holds, that is $F$ has been formed and the configuration does not belong to any phase.
\end{description}

Sub-phases $\Quno$--$\Qquattro$ manage the case where $\q$ is true. The main difficult here is to maintain $C(R)$ unchanged while guards are moving. In fact, if $g''$ rotates toward $\mu(g'')$ as in sub-phase $\Puno$, $C(R)$ could change. 

As first move, in $\Quno$ guard $g$ is moved radially on $C(R)$ by means of move $m_{15}$.

\begin{description}
	\item[$\h_0$:] Only $g$ is involved in the move. As it moves radially toward $C(R)$, angle $\alpha$ is maintained along all the movement, hence $g$ is easily recognizable.
	\item[$\h_1$:] The configuration is maintained always asymmetric as $F$ does not require multiplicities and $g$ is the only robot inside $C(R)$. 
 So it cannot participate to neither a rotation, nor a reflection as the axis of symmetry should pass through $g$, but then the starting configuration $R$ was symmetric, a contradiction.	
	\item[$\h_2$:] We show that $m_{15}$ is safe. As $\q$ only depends on $F$, it is always true. During the movement of $g$, $\itre$ remains true and hence the observed configuration always belongs to $\Quno$. By Lemma~\ref{lem:F4-disjoint}, it cannot belong to any other sub-phase of $\Fquattro$. The configuration cannot belong to $\Fcinque$ because $\e$ is true, and the configuration is asymmetric because of the movement of $g$. $\zuno$ is false because $\stre$ should hold and there should exist an axis of reflection $\ell$ as defined in $\zuno$. Such an axis cannot exist as $g$ should be on it being the only robot on $\C_\uparrow^1$. Predicate $\zdue$ is false because of the radial movement of $g$ from $\C^g(R)$. Moreover, the configuration cannot belong to any other phase because $\fquattro$ holds.

It follows that 
$g$ is the only moving robot.
	\item[$\h_3$:] No collisions are created as $g$ is the only robot inside $C(R)$ and there are no robots forming an angle of $\alpha$ degrees on $C(R)$ as they are all well positioned according to $F$ but for $g''$ that by construction is not on the way of $g$.
	\item[$\h_4$:] Once $g$ reaches the target, predicate $\itre$ becomes false while either $\iquattro$ or $\icinque$ become true, depending whether $g''$ is already on target or not. By Lemma~\ref{lem:F4-disjoint}, this means the configuration may belong to $\Qdue$ or $\Qtre$ and it cannot belong to any other sub-phase of $\Fquattro$. The configuration cannot belong to $\Fcinque$ because $\e$ is true, and the configuration is asymmetric because of $g$. $\zuno$ is false because $\stre$ is false. Predicate $\zdue$ is false because all robots are far from $\C^g(R)$. 
	Moreover, the configuration cannot belong to any other phase because $\fquattro$ holds.
\end{description}

Sub-phase $\Qdue$ is applied if $g''$ is not yet on its target. Since now all robots are in $\partial C(R)$, $g''$ cannot freely move toward $\mu(g'')$ as this could change $C(R)$. A safe place to reach is the antipodal point $p$ to $g$. Move $m_{16}$ rotates $g''$ on $C(R)$ toward the closest point among $p$ and $\mu(g'')$.

\begin{description}
	\item[$\h_0$:] Only $g''$ is involved in the move. The angle $\alpha$ between $g$ and $g'$ maintains $g''$ easily recognizable along all the movement.
	\item[$\h_1$:] The configuration is maintained always asymmetric as the angle $\alpha$ between $g$ and $g'$ guarantees no rotations. Moreover, the only axis of reflection should cut $\alpha$. Since $\q$ holds, $|F|-1$ points occupy a semi-circle. As all robots but $g$ and $g''$ are not yet positioned according to $F$, it follows that $g$ is the only robot in the semi-circle between $g'$ and $g''$ in the clockwise direction. Since by assumption there are at least four robots, this situation cannot hold as there cannot be a robot specular to one which is not a guard.
	\item[$\h_2$:] We show that $m_{16}$ is safe. As $\q$ only depends on $F$, it is always true. During the movement of $g''$, $\iquattro$ remains true. As a consequence, by Lemma~\ref{lem:F4-disjoint}, the observed configuration belongs to $\Qdue$ and it cannot belong to any other sub-phase of $\Fquattro$. The configuration cannot belong to $\Fcinque$ because $\e$ is true, and the configuration is asymmetric because of $g$. $\zuno$ is false because $\stre$ is false. Predicate $\zdue$ is false because all robots are far from $\C^g(R)$. 
	Moreover, the configuration cannot belong to any other phase because $\fquattro$ holds. 
It follows that 
$g''$ is the only moving robot.
	\item[$\h_3$:] No collisions are created as by construction the path between $g''$ and its target along $C(R)$ does not contain further robots.
	\item[$\h_4$:] Once $g''$ reaches the target, predicate $\iquattro$ becomes false while predicate $\icinque$ becomes true. By Lemma~\ref{lem:F4-disjoint}, this means the configuration may belong to $\Qtre$ and it cannot belong to any other sub-phase of $\Fquattro$. Moreover, it cannot belong to any other phase because of the same arguments as in $\h_2$.
\end{description}

In sub-phase $\Qtre$, guard $g$ can freely move toward its final target by means of move $m_{17}$. In fact, because of predicate $\q$, the move does not affect $C(R)$. 

\begin{description}
	\item[$\h_0$:] Only $g$ is involved in the move. Robot $g$ is always recognizable as the only robot of minimum view.
	\item[$\h_1$:] The configuration is maintained always asymmetric as $g$ is the only one with minimum view and its clockwise view is different from its anti-clockwise view.
	\item[$\h_2$:] We show that $m_{17}$ in phase $\Qtre$ is safe. As $\q$ only depends on $F$, it is always true. During the movement of $g$, $\icinque$ remains true. Hence, by Lemma~\ref{lem:F4-disjoint}, the observed configuration belongs to $\Qtre$ and it cannot belong to any other sub-phase of $\Fquattro$. Moreover, it cannot belong to any other phase because of the same arguments as for move $m_{13}$.
It follows that $g$ is the only moving robot.
	\item[$\h_3$:] No collisions are created as by construction the path between $g$ and its target along $C(R)$ does not contain further robots.
	\item[$\h_4$:] Once $g$ reaches the target, predicate $\icinque$ becomes false while either $\w$ or $\isei$ become true, depending whether $g''$ is already on target or not. This means either $F$ is formed or the obtained configuration $R'$ belongs to $\Qquattro$ according to Lemma~\ref{lem:F4-disjoint}. We now show that the configuration cannot belong to $\Fcinque$ since $\bzero\vee \zuno \vee \zdue$ is false, that is $\fcinque$ is false. $R'$ is asymmetric (i.e. $\bzero$ is false) because $\q$ holds and hence $R'$ cannot be rotational and does not admit multiplicities. The only possible axis of reflection in $R'$ should reflect $g$ with $g''$. In turn, this implies that there must be another robot $r$ specular to $g'$ such that $\angolo(r,c(R'),g'')=3\alpha$ with $\mu(g'')$ being on the smaller arc of $C(R')$ between $g''$ and $r$. It follows that $\angolo(r,c(R'),\mu(g''))<3\alpha$, contradicting the definition of $\alpha$. Predicate $\zuno$ is false because $\stre$ is false. Predicate $\zdue$ is false because all robots are far from $\C^g(R')$.
	The configuration cannot belong to any other phase as either $\fquattro$ or $\w$ holds.
\end{description}

In sub-phase $\Qquattro$, guard $g''$ can freely move toward its final target by means of move $m_{13}$. In fact, predicate $\q$ guarantees that the target of $g''$ does not overcome the antipodal point to $g$, hence the movement does not affect $C(R)$.

\begin{description}
	\item[$\h_0$:] Only $g''$ is involved in the move. Robot $g''$ is always recognizable as $\isei$ holds and $g''$ is the only robot not on target.
	\item[$\h_1$:] During the movement, the configuration cannot admit a rotation as the arc from $g'$ to $g''$ in the clockwise direction is greater than half of $C(R)$. There cannot be an axis of symmetry that makes $g$ specular to $g'$. In fact, $g''$ cannot admit a specular robot with respect to such an axis as it is closer to the axis than $g$ which contradicts the property of $g$ being the robot of minimum view. There cannot be an axis making specular $g$ to $g''$. In fact, $g$ cannot admit a specular robot $r$ with respect to such an axis as $\angolo(r,c(R),g'')$ should be equal to $\angolo(g,c(R),g')=3\alpha$, but this is possible only once $g''$ has reached its target being $g$ the robot of minimum view.	
	\item[$\h_2$:] We show that $m_{13}$ in phase $\Qquattro$ is safe. As $\q$ only depends on $F$, it is always true. During the movement of $g''$, $\isei$ remains true. Hence, by Lemma~\ref{lem:F4-disjoint}, the configuration belongs to $\Qtre$ and it cannot belong to any other sub-phase of $\Fquattro$. Moreover, it cannot belong to any other phase because of the same arguments as for move $m_{17}$ holds.
It follows that 
$g''$ is the only moving robot.
	\item[$\h_3$:] No collisions are created as by construction the path between $g''$ and its target along $C(R)$ does not contain further robots.
	\item[$\h_4$:] Once $g''$ reaches the target, predicate $\w$ becomes true. This means the configuration does not belong to any phase.
\end{description}
\qed
\end{proof}

%\nuovo{
Concerning phase $\Fcinque$, for each move $m$ defined in Table~\ref{tab:F5-phases} we need to show several properties (similar to those used for $\Funo, \Fdue, \Ftre$, and $\Fquattro$) that guarantee to our algorithm to safely evolve until a different phase is reached:
\begin{itemize}
	\item[$\h'_0$:] at the beginning, $m$ involves at most two robots;
	\item[$\h'_1$:] while robots are moving according to $m$, the configuration remains a leader configuration;
	\item[$\h'_2$:] $m$ is safe, and in particular that while robots are moving according to $m$, all other robots are stationary;
	\item[$\h'_3$:] while robots are moving according to $m$, no collisions are created;
	\item[$\h'_4$:] if $m$ is associated to any phase $\mathcal{X}$, then the predicate $\mathcal{X}_e$ holds once the robots have terminated to apply $m$; 
	\item[$\h'_5$:] $m$ preserves stationarity.
\end{itemize}

Note that, in this case we cannot get rid of property $\h_5'$ as we did with $\h_5$ for phases $\Funo$, $\Fdue$, $\Ftre$, $\Fquattro$ as now there might be two robots moving concurrently.

Moreover, during phase $\Fcinque$ it is possible that by chance a robot makes true predicate $\w$ while moving. For the sake of clarity, and in order to not overcharge Procedure $\SMove$, we prefer to ignore such occurrences. This simply implies that the pattern will be formed successively by following the designed transitions, and from there on robots will not move anymore. Differently, if a stationary configuration is reached where $\w$ holds, then no move will be performed anymore. As described in Table~\ref{tab:F5-phases}, all sub-phases of $\Fcinque$ with the exception of $\W$ may lead to stationary configurations where $\w$ holds. In the subsequent analysis we omit to mention all the times such possibilities.

\begin{lemma}\label{lem:corr-F5}
Let $R$ be a stationary configuration in $\Fcinque$. From $R$ the algorithm eventually leads to a stationary configuration belonging to $\Funo$, $\Fdue$, $\Fquattro$ or where $\w$ holds. 
\end{lemma}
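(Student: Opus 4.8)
The plan is to replicate the scheme of Lemmas~\ref{lem:corr-F1}--\ref{lem:corr-F4}, but now working with the primed properties $\h'_0,\ldots,\h'_5$ so as to account for the two robots that may move concurrently in this phase. First I would invoke Lemma~\ref{lem:F5-disjoint} to conclude that exactly one starting predicate of Table~\ref{tab:F5-phases} holds, hence exactly one of the moves $m_{18},\ldots,m_{24}$ is applied to $R$. The argument then splits into one block per sub-phase ($\T$, $\U$, $\Vuno$, $\Vdue$, $\Vtre$, $\Vquattro$, $\W$), and in each block I verify $\h'_0$--$\h'_5$. The essential novelty with respect to the previous phases is that $\h'_5$ can no longer be inferred for free from the presence of a single moving robot, so it must be established explicitly whenever two robots advance simultaneously.

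For $\T$ (move $m_{18}$) the single robot $r$ lies on the unique reflection axis $\ell$, and I distinguish the two cases in the definition of rotational-free path. When such a path exists, $r$ slides along $\ell$ toward $c(R\setminus\{r\})$; here I would show that, by the very definition of rotational-free path, no rotation can appear and $r$ stays the unique robot on $\ell$, so $\buno$ keeps it recognizable and the terminal configuration satisfies $\c$ and lies in $\W$. When no rotational-free path exists (precisely the counter-example of Section~\ref{ssec:ce-3}), I would prove the geometric fact announced earlier in the text: $m_{18}$ drives $r$ to a point making $R$ a faraway configuration with $\partial C(R)=\{r,r_1,r_2\}$ and $r$ a faraway robot, so $\zuno$ becomes true and the configuration enters $\U$. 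For $\U$ (move $m_{19}$), the faraway robot $r$ rotates on $C(R)$ until antipodal to $r_1$ or $r_2$; the crucial point is that $\zuno$ is designed to persist both while $R$ is still symmetric and after $r$ leaves $\ell$ and the configuration turns asymmetric, so that $r$ remains the unique mover, and upon reaching its target the two antipodal robots on $C(R)$ place the configuration in $\Funo$.

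The heart of the proof lies in $\Vuno$--$\Vquattro$, where $R$ has a reflection axis carrying $\ge 2$ robots, ordered into $\robotuno,\robotdue$. For $\Vuno$ (move $m_{20}$) and $\Vdue$ (move $m_{21}$) two robots move at once, and I must establish that throughout the simultaneous motion: (i) $\robotuno$ approaching $c(R)$ never creates a rotation, which follows from the rotational-free-path test and, in $m_{21}$, from the constraint forbidding $\robotuno$ to cross any circle $C_\downarrow^j(R)$ occupied by other robots; (ii) $\robotdue$ never swaps its radial order with $\robotuno$, so the pair $(\robotuno,\robotdue)$ is always re-identifiable and the invariants $\bdue$, $\btre$ survive, yielding $\h'_2$; and (iii) every adversarial stop leaves a stationary configuration still recognized as the same sub-phase, which gives $\h'_5$. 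I would then dispatch $\Vtre$ (move $m_{22}$, a single move of $\robotuno$ into $c(R)$) and $\Vquattro$ (move $m_{23}$) by showing that each reduces the number of critical robots on $\ell$, so that the chain $\Vquattro\to\Vdue\to\cdots$ terminates in $\W$ after finitely many steps, each robot advancing at least $\nu$.

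For $\W$ (move $m_{24}$) the robot $r$ in, or just moved out of, $c(R)$ travels radially to a point of $C^g(R)$ chosen off every symmetry axis and avoiding the reference angle $\alpha$, so that the configuration becomes asymmetric as soon as $r$ leaves $c(R)$. I would argue that this choice of target prevents re-entering $\Fcinque$ and that, according to how many robots are already matched, the configuration lands in $\Funo$, $\Fdue$ or $\Fquattro$; in particular the boundary case flagged in the text must be checked, namely that $\idue$ is kept false while $R$ is asymmetric, so that the handover to $\Pdue$ is recognized unambiguously. Gluing the sub-phase transitions ($\T\to\U$ or $\W$; the reductions $\Vquattro\to\Vdue\to\cdots\to\W$; $\U\to\Funo$; $\W\to\Funo/\Fdue/\Fquattro$) and using the $\nu$-progress of every move, the configuration leaves $\Fcinque$ in finitely many steps. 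The main obstacle is precisely items (ii)--(iii) above: controlling the concurrent two-robot motion in $\Vuno$ and $\Vdue$ so that the two movers keep their distinct radial roles, no rotational symmetry is ever transiently created, and each stop yields a stationary configuration of the same sub-phase.
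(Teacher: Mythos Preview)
Your proposal is correct and follows essentially the same approach as the paper: invoke Lemma~\ref{lem:F5-disjoint}, then verify $\h'_0$--$\h'_5$ sub-phase by sub-phase, with the emphasis on the concurrent two-robot motion in $\Vuno$ and $\Vdue$ and the rotational-free-path analysis in $\T$. One small correction: in $\W$ the paper does not argue that $\idue$ stays false during the motion of $r$; rather, it allows $\idue$ to become true mid-move (so the configuration transiently belongs to $\Pdue$) and observes that move $m_{14}$ there selects the very same robot $r$, so safety is preserved without forcing $\neg\idue$.
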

\begin{proof}
By Lemma~\ref{lem:F5-disjoint}, exactly one of the predicates for the starting phases in Table~\ref{tab:F5-phases} is true. Then, exactly one of the moves associated to the sub-phases of $\Fcinque$ is applied to $R$. We show that the properties $\h'_0,\ldots,\h'_5$ hold for each possible move applied to $R$.

Let us consider sub-phase $\T$ where move $m_{18}$ is performed.
\begin{description}
	\item[$\h'_0$:] As $\buno$ holds, there is only one robot $r$ on the axis $\ell$ moved by $m_{18}$.
	\item[$\h'_1$:] If $r$ admits a rotational-free path, then move $m_{18}$ makes $r$ move toward $c(R\setminus \{r\})$. Since $r$ moves along the axis $\ell$, then the obtained configuration remains a leader configuration. Moreover, $r$ is always recognizable as the moving robot. 
	 
If $r$ does not admit a rotational-free path, let $p'$ be the point on $\ell$ toward $c(R\setminus \{r\})$ such that, if reached by $r$, the new configuration $R'$ becomes rotational. In this case $r$ is moved to become a faraway robot, that is ‘sufficiently' far from $c(R)$. We now show that during the movement no configuration admitting a rotation can be created. By contradiction, let us assume that that there exists a point $p''$ such that, if reached by $r$ the new configuration $R''$ admits a rotation. If in $p''$ the robot $r$ is not critical, then $R''$ has the same center of $R$ and cannot admit a rotation as in $p'$ there is not a robot. So, in $p''$ robot $r$ must be critical. In this case, the center is not maintained and, if $R''$ is rotational, the robots on $C(R'')$ must form a regular triangle (as in any other $n$-gon the robots are not critical). Let $r_1$ and $r_2$ be the other two robots on $C(R'')$. As they are at the same distance from $\ell$, they are also on a same circle $C^i_\uparrow(R')$. If $r_1$ and $r_2$ are antipodal on $C^i_\uparrow(R')$ then to make $R'$ rotational, $r$ should be in $c(R')$, but it is not possible by hypothesis. If $r_1$ and $r_2$ are not antipodal on $C^i_\uparrow(R')$, then there are $k$ equidistant robots on $C^i_\uparrow(R')$, with $k$ odd, as the robots on $C^i_\uparrow(R')$ consist of $r$ and pairs of symmetric robots with respect to $\ell$. Moreover, $k > 3$ as the case $k=3$ is realized when $r$ is in $p''$. This implies that there should be robots on $C^i_\uparrow(R')$ in the arc between $r_1$ and $r_2$ external to $C(R'')$, a contradiction to the definition of $C(R'')$. Hence, also in this case the obtained configuration remains a leader configuration.
	\item[$\h'_2$:] Move $m_{18}$ is safe as $\bzero$ and $\e$ are both true during the movement and then the  configuration cannot be in $\Funo$, $\Fdue$, $\Ftre$, $\Fquattro$. Moreover, as $\buno \wedge \neg \zuno$ remains true, the configuration cannot be in the other sub-phases of $\Fcinque$ and $r$ remains the only moving robot.
	\item[$\h'_3$:] No collisions are created as $r$ moves on $\ell$ and it is the only robot on $\ell$.
	\item[$\h'_4$:] As $\bzero \wedge \e$ is still true the configuration remains in $\Fcinque$. If $r$ reaches $c(R)$, the configuration is in $\W$ as $\zdue$ is clearly true, and both $\Vuno_s$ and $\Vdue_s$ are false as $\bdue$ and $\btre$ are false. If $r$ becomes faraway, then $\zuno$ becomes true and the configuration is in $\U$. 
	\item[$\h'_5$:] As $m_{18}$ is safe and a single robot is moving, then when $r$ stops the configuration is stationary.
\end{description}

Let us consider sub-phase $\U$ where move $m_{19}$ is performed.
\begin{description}
	\item[$\h'_0$:] There is only one faraway robot $r$ moved by $m_{19}$.
	\item[$\h'_1$:] While $r$ is moving according to $m_{19}$, the configuration remains asymmetric as the three robots on $C(R)$ do not form a regular triangle.
	\item[$\h'_2$:] While $r$ is moving, predicate $\fcinque$  remains true as both $\zuno$ and $\e$ maintain their values. Then the configuration remains in $\Fcinque$. As $\bzero$ is false and $\zuno \Rightarrow \neg \zdue$ (cf the proof of Lemma~\ref{lem:F5-disjoint}), the configuration cannot belong to any other sub-phase of $\Fcinque$. Finally, robot $r$ remains always recognizable during the movement. 
	
	\item[$\h'_3$:] There are only three robots on $C(R)$ and $r$ is moving on $C(R)$ within an arc without other robots.
	\item[$\h'_4$:] When $r$ reaches its final position on $t$, predicate $\zuno$ is false and so are $\bzero$ and $\zdue$. Then the configuration is no more in $\Fcinque$ as $\fcinque$ is false. The configuration can verify $\Funo_s$. In fact $\Fdue_s$ is false as $\sdue$ is false, and $\Ftre_s$ is not verified as $\gdue$ is false being $\gzero$ false. Predicate $\gzero$ is false because on $C^1_{\uparrow}(R)$ there are at least two robots since $R\setminus \{r\}$ is symmetric. Moreover the configuration is not in $\Fquattro$ as $\q$ is false and neither $\iuno$ or $\idue$ are verified, as $\guno$ is false and $\partial C^1_{\uparrow}(R)$ contains at least two robots.
	\item[$\h'_5$:] As $m_{19}$ is safe and a single robot is moving, then when the robot $r$ stops the configuration is stationary.
\end{description}

Let us consider sub-phase $\Vuno$ where move $m_{20}$ is performed.
\begin{description}
	\item[$\h'_0$:] 
    Move $m_{20}$ moves 	$\robotuno$; in some cases, it moves both $\robotuno$ and $\robotdue$ concurrently. Hence, at the beginning, the move involves at most two robots.
	\item[$\h'_1$:] It is worth to note that $\robotuno$ cannot move only if it does not admit a rotational-free path. Such a situation can happen only if the robots on the circle $C^i_\uparrow(R)$ where $\robotdue$ lies form a configuration admitting a rotation. For this reason $\robotdue$ is moved so as to ensure that it is the only robot on the circle where it lies. This implies that no rotational symmetries are created and hence the configurations remains a leader configuration. Moreover, robots $\robotuno$ and $\robotdue$ are always recognizable as such since they move without changing their order on $\ell$.
	\item[$\h'_2$:] Move $m_{20}$ is safe as $\bzero$ and $\e$ are both true during the movement, that is, $\fcinque$ holds and the configuration cannot belong to $\Funo$, $\Fdue$, $\Ftre$, $\Fquattro$. Moreover, as $\bdue$ remains true and at least one among $\robotuno$ and $\robotdue$ is not on its target, then, by Lemma~\ref{lem:F5-disjoint}, the configuration cannot be in the other sub-phases of $\Fcinque$ as $\neg \c \vee \neg \uuno$ holds. Moreover, as already observed in the previous item, robots $\robotuno$ and $\robotdue$ are always recognizable.
	\item[$\h'_3$:] No collisions are created as $\robotuno$ is the robot closest to $c(R)$, its target, and $\robotdue$, after starting to move, always remains the only robot between $C_{\uparrow}^{i-1}(R)$ and $C_{\uparrow}^{i+1}(R)$.
	\item[$\h'_4$:] Once both $\robotuno$ and $\robotdue$ reach their target, still $\bzero \wedge \e$ is true hence the configuration remains in $\Fcinque$, $\zdue$ becomes true but $\neg \c \vee \neg \uuno$ is false, that is $\Vuno_s$ is false. Also $\Vdue_s$ is false as $\bdue$ remains true, and hence $\W_s$ holds. 
	\item[$\h'_5$:] As $m_{20}$ is safe with $\robotuno$ and $\robotdue$ being the moving robots, when they both stop the configuration is stationary.
\end{description}

Let us consider sub-phase $\Vdue$ where move $m_{21}$ is performed.
\begin{description}
	\item[$\h'_0$:] As $\btre$ holds, robots $\robotuno$ and $\robotdue$ are the only robots on the axis $\ell$, exactly the robots moved by means of $m_{21}$.
	\item[$\h'_1$:] The only case in which $\robotuno$ does not admit a rotational-free path occurs when $\robotdue$ is on $t^{60}$ (cf Figure~\ref{fig:fase_F5}.(e)). In fact, this is the only case in which $C(R)$ can admit a rotation, with $\robotdue$ being critical by definition. Of course, the rotational configuration must be avoided as it is not a leader configuration. The only case in which $\robotdue$ does not move from $t^{60}$ is when $t^x=t^{60}$. However, if the current configuration $R$ admits $|\partial C(R)|>4$, then $\robotuno$ can freely move toward $c(R)$ whereas $\robotdue$ is already on target. If $|\partial C(R)|=4$, then $\robotuno \in \partial C(R)$ and moves toward its intermediate target $t$. In so doing, $t^x$ changes and $\robotdue$ can leave $t^{60}$. From now on, $\robotuno$ admits a rotational-free path. Note that, while robots are moving, both their targets change. In particular, $t^x$ can get further from $t^{60}$ or even disappearing. $\delta C(R)$ grows and the target $c(R)$ for $\robotuno$ moves further from $\robotuno$ according to the movement of $\robotdue$. Eventually, $\robotdue$ either reaches $t^x$ that does not change anymore or it reaches $t^{55}$. Hence, also $\robotuno$ can eventually reach its final destination. Hence, during the movements the obtained configuration is always a leader configuration.
	\item[$\h'_2$:] Move $m_{21}$ is safe as $\bzero$ and $\e$ are both true during the movement, that is, $\fcinque$ holds and the configuration cannot belong to $\Funo$, $\Fdue$, $\Ftre$, $\Fquattro$. Moreover, as $\btre$ remains true and at least one among $\robotuno$ and $\robotdue$ is not on its target, then, by Lemma~\ref{lem:F5-disjoint}, the configuration cannot be in the other sub-phases of $\Fcinque$ as $\neg \c \vee \neg \udue$ holds. Moreover, robots $\robotuno$ and $\robotdue$ are always recognizable.
	\item[$\h'_3$:] No collisions are created as there are no further robots on $\ell$, whereas $\robotuno$ and $\robotdue$ cannot meet.
	\item[$\h'_4$:] Once both $\robotuno$ and $\robotdue$ reach their targets, still $\bzero \wedge \e$ is true hence the configuration remains in $\Fcinque$, $\zdue$ becomes true but $\neg \c \vee \neg \udue$ is false, that is $\Vdue_s$ is false. Also $\Vuno_s$ is false as $\btre$ remains true, and hence $\W_s$ holds. 
	\item[$\h'_5$:] As $m_{21}$ is safe with $\robotuno$ and $\robotdue$ being the moving robots, when they both stop the configuration is stationary.
\end{description}

Let us consider sub-phase $\Vtre$ where move $m_{22}$ is performed.
\begin{description}
	\item[$\h'_0$:] As $\bquattro$ holds, only $\robotuno$ moves on the axis $\ell$ by means of $m_{22}$.
	\item[$\h'_1$:] While $r$ moves toward $c(R)$, it remains not critical and it is always detected as the robot to move (cf Figure~\ref{fig:fase_F5}.(f)). Moreover the configuration remains a leader configuration as it cannot become rotational: in fact, the critical robots are the only robots on $C(R)$, then the only possible rotation $\varphi$ is a $180^\circ$ rotation, but in this case $\varphi(\robotuno)$ would not exists.
	\item[$\h'_2$:] Move $m_{22}$ is safe as $\bzero$ and $\e$ are both true during the movement, that is, $\fcinque$ holds and the configuration cannot belong to $\Funo$, $\Fdue$, $\Ftre$, $\Fquattro$. Moreover, as $\neg \c \wedge \bquattro$ remains true, the configuration cannot be in the other sub-phases of $\Fcinque$ and $\robotuno$ remains the only moving robot.
	\item[$\h'_3$:] No collisions are created as $\robotuno$ admits a rotational-free path.
	\item[$\h'_4$:] Once $r$ reaches its target, still $\bzero \wedge \e$ is true hence the configuration remains in $\Fcinque$. Once $r$ is in $c(R)$, $\zdue$ holds. Since $\bquattro$ remains true, neither $\Vuno_s$ nor $\Vdue_s$ can become true, then $\W_s$ holds. 
	\item[$\h'_5$:] As $m_{22}$ is safe and a single robot is moving, then when $\robotuno$ stops the configuration is stationary.
\end{description}

Let us consider sub-phase $\Vquattro$ where move $m_{23}$ is performed.
\begin{description}
	\item[$\h'_0$:] As $\bcinque$ holds, there is only one robot $r$ on the axis $\ell$ moved by $m_{23}$.
	\item[$\h'_1$:] While $r$ moves, it is always detected as the robot on $\ell$ closest to $C(R\setminus \{\robotuno,\robotdue\})$  (cf Figure~\ref{fig:fase_F5}.(g)). Then, no rotational symmetry can be created and hence the configuration remains a leader configuration.
	\item[$\h'_2$:] Move $m_{23}$ is safe as $\bzero$ and $\e$ are both true during the movement, that is, $\fcinque$ holds and the configuration cannot belong to $\Funo$, $\Fdue$, $\Ftre$, $\Fquattro$. Moreover, as $\bcinque \wedge \neg \c$ remains true, the configuration cannot be in the other sub-phases of $\Fcinque$ and $r$ remains the only moving robot.
	\item[$\h'_3$:] No collisions are created as $r$ moves on $\ell$ and the only other robot on $\ell$ is antipodal.
	\item[$\h'_4$:] Once $r$ reaches its target, still $\bzero \wedge \e$ is true hence the configuration remains in $\Fcinque$. Now $r$ is not critical, hence $\btre$ holds and sub-phase $\Vdue$ is reached. 
	\item[$\h'_5$:] As $m_{23}$ is safe and a single robot is moving, then when the robot $r$ stops the configuration is stationary.
\end{description}

Let us consider sub-phase $\W$ where move $m_{24}$ is performed.
\begin{description}
	\item[$\h'_0$:] There is only one robot $r$ moved by $m_{24}$, the one  identified as the closest to $c(R)$.
	\item[$\h'_1$:] While $r$ is moving according to $m_{24}$, the configuration remains asymmetric as $R'=R\setminus \{r\}$ is symmetric, being  $\zdue$ true, and $r$ is not moving along an axis of symmetry. 
	\item[$\h'_2$:] While $r$ is moving, either $\idue$ becomes true or $\zdue \wedge e$ holds and hence $\fcinque$ remains true. In the first case, the configuration may belong to $\Fquattro$, and in particular to sub-phase $\Pdue$. However, from $\Pdue$, by move $m_{14}$, only the same robot $r$ is allowed to move. In the second case, the configuration remains in $\Fcinque$. As $\bzero$ is false and $\zdue \Rightarrow \neg \zuno$, the configuration cannot belong to any other sub-phase of $\Fcinque$.
	
	\item[$\h'_3$:] By definition of $C^g(R)$ there are no robots between $r$ and its target.
	\item[$\h'_4$:] When $r$ reaches its final position on $C^g(R)$, predicate $\zdue$ is false and so are $\bzero$ and $\zuno$. Then the configuration is no more in $\Fcinque$ as $\fcinque$ is false. The obtained configuration can verify $\Funo_s$, $\Fdue_s$ or $\Fquattro_s$. It cannot verify $\Ftre_s$ as $\gdue$ is false being $\guno$ false. Predicate $\guno$ is false since $r$ does not form the reference angle of $\alpha$ degree. Predicate $\w$ cannot hold as $r$ is on $C^g(R)$. 
	\item[$\h'_5$:] As $m_{24}$ is safe and a single robot is moving, then when the robot $r$ stops the configuration is stationary.
\end{description}
\qed
\end{proof}

\begin{theorem}\label{th:correctness}
Let $R$ be an initial leader configuration of \async robots without chirality, and $F$ any pattern (possibly with multiplicities) with $|F|=|R|$. Then, there exists an algorithm able to form $F$ from $R$.
\end{theorem}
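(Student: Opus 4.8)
The plan is to combine the per-phase correctness results into one global argument, after the trivial cases are removed. Since point formation (gathering) is delegated to~\cite{CFPS12}, I would assume $F$ is not a single point; then $n=1$ is trivial, $n=2$ is unsolvable and excluded, and $n=3$ is settled by Theorem~\ref{teo:3robots}. For $n\ge 4$ I would drive the execution through the five phases $\Funo,\Fdue,\Ftre,\Fquattro,\Fcinque$ and show that every execution reaches a stationary configuration satisfying $\w$.

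The first ingredient is well-definedness of the whole execution. By Lemma~\ref{lem:phases-disjoint}, any non-final stationary configuration satisfies exactly one starting predicate of Table~\ref{tab:phases}, and by Lemmas~\ref{lem:F1-disjoint}--\ref{lem:F5-disjoint} it then satisfies exactly one sub-phase predicate; hence every active robot, upon observing a stationary snapshot, unambiguously recognises the (sub-)phase and the move to execute. The second ingredient is transition-safety: the properties $\h_2$ and $\h'_2$ established inside Lemmas~\ref{lem:corr-F1}--\ref{lem:corr-F5} guarantee that, while the designated robots move, \emph{no} other robot is ever allowed to move, which is exactly the resolution of the $\Member$ problem. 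Together these two ingredients justify concatenating consecutive phases into a single legitimate \async\ execution, since each phase starts and ends at a stationary configuration whose membership cannot drift while robots travel.

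I would then argue convergence in two layers. The inner layer is that each phase reaches a fresh stationary configuration after finitely many moves: this is the content of Lemmas~\ref{lem:corr-F1}--\ref{lem:corr-F5}, where finiteness follows from the bounded per-phase task (evacuating the at most $n-3$ surplus robots from $C(R)$, placing the three guards, matching the $n-3$ non-guard robots one at a time, finalising the guards, or breaking one symmetry) combined with the $\nu$-progress assumption, by which every move advances the active robot by at least $\nu$. Properties $\h_1$/$\h'_1$ and $\h_3$/$\h'_3$ additionally ensure that throughout these moves the configuration stays a leader configuration and no undesired multiplicity or collision is created, so the mapping $\mu$ fixed in phase $\Ftre$ stays consistent. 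The outer layer reads the admissible phase transitions off the ``end'' columns: $\Ftre\Rightarrow\Fquattro$, $\Fquattro\Rightarrow\w$, and $\Fdue\Rightarrow\{\Ftre,\Fquattro,\w\}$ are plainly forward, whereas $\Funo$ and $\Fcinque$ are the only phases admitting seemingly backward transitions.

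The hard part is ruling out an infinite oscillation among phases. I would introduce a well-founded lexicographic progress measure on stationary configurations, ordering (i) whether $R$ is asymmetric, (ii) the placement status of the two boundary guards and of the internal guard, and (iii) the number of still-unmatched non-guard targets, and verify that every phase transition strictly decreases it. The delicate case is the apparent loop $\Funo\leftrightarrow\Fcinque$: a move $m_4$ in sub-phase $\Cuno$ may, via Procedure~\SMove, halt at a configuration where $\zuno$ holds, passing control to sub-phase $\U$ of $\Fcinque$; but $m_{19}$ then rotates the faraway robot until it is antipodal to another robot on $C(R)$, so control returns to $\Funo$ in a sub-phase beyond $\Cuno$ (with an antipodal pair, hence two boundary guards already fixed), which strictly decreases component (ii). Symmetrically, $\Fcinque$ is entered only while $R$ is symmetric or is a partially broken symmetric configuration, and properties $\h_1$/$\h'_1$ prevent symmetry from ever being recreated once component (i) has improved; thus each visit to $\Fcinque$ also strictly decreases the measure. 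Since the measure is well-founded, the execution reaches a configuration satisfying $\w$ after finitely many phases, each finite by the inner layer; by definition $\w$ means $R$ is similar to $F$ and is terminal, so $F$ is formed and no robot moves afterwards, completing the proof.
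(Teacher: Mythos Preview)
Your proposal is correct and follows essentially the same skeleton as the paper: dispose of the small cases ($n\le 3$ and point formation), invoke Lemma~\ref{lem:phases-disjoint} plus Lemmata~\ref{lem:F1-disjoint}--\ref{lem:F5-disjoint} for well-definedness, use Lemmata~\ref{lem:corr-F1}--\ref{lem:corr-F5} for per-phase termination and transition-safety, and then argue that the inter-phase graph is acyclic except for the $\Funo\leftrightarrow\Fcinque$ edge, which is crossed at most once.

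The one substantive difference is your termination device. The paper does not build a lexicographic measure; it simply observes that the only inter-phase cycle is $\Funo\leftrightarrow\Fcinque$, that it is realised solely by $m_4$ (entering $\U$ when \SMove\ halts at a $\zuno$ point) and $m_{19}$ (returning to $\Funo$ with a right-angled triangle on $C(R)$, hence $\neg\tzero$), and that therefore the cycle is traversed at most once. Your ranking function is a reasonable alternative, but as stated it does not literally decrease on \emph{every} transition: the step $\Funo(\Cuno)\to\Fcinque(\U)$ leaves components (i)--(iii) unchanged, and only the composite round-trip $\Funo\to\Fcinque\to\Funo$ improves (ii). Either weaken the claim to ``no infinite sequence of phase transitions'' (which is what you actually prove), or add a component recording ``currently in $\Fcinque$ via $\zuno$'' so that each single step decreases the tuple. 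With that adjustment your argument is equivalent to the paper's.
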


\begin{proof}
As remarked in Section~\ref{sec:algorithm}, the cases of $|R|\leq 2$ are either trivial or unsolvable, and hence are not required to be managed by our algorithm. The case of $F$ being a single point is delegated to~\cite{CFPS12}, whereas when $|R|= 3$ Theorem~\ref{teo:3robots} holds.
When $|R|>3$, the claim simply follows by Lemmata~\ref{lem:phases-disjoint},~\ref{lem:corr-F1},~\ref{lem:corr-F2},~\ref{lem:corr-F3},~\ref{lem:corr-F4}, and~\ref{lem:corr-F5}. In fact, Lemma~\ref{lem:phases-disjoint} shows that $R$ belongs to exactly one phase among $\Funo$, $\Fdue$, $\Ftre$, $\Fquattro$, and $\Fcinque$. Lemmata~\ref{lem:corr-F1}--\ref{lem:corr-F4} show that from a given phase among $\Funo, \cdots, \Fquattro$ only subsequent phases can be reached, or $\w$ eventually holds (cf Table~\ref{tab:phases}, first and last column). Lemma~\ref{lem:corr-F5} instead shows that from $\Fcinque$ any other phase but $\Ftre$ can be reached, or $\w$ eventually holds. Cycles among phases can occur only between $\Funo$ and $\Fcinque$. However, the involved moves $m_4$ and $m_{19}$ guarantee that the transitions of such a cycle can be traversed only once. 

Inside each phase among $\Funo$, $\cdots$, $\Fcinque$, the only possible cycles among transitions can occur in phase $\Funo$ among sub-phases $\Auno$ and $\Adue$, or in  $\Ftre$ among sub-phases $\M$, $\N$ and $\O$. However, the corresponding Lemmata~\ref{lem:corr-F1} and~\ref{lem:corr-F3} also show that such cycles can be performed only a finite number of times.
\qed
\end{proof}
 
 From~\cite{FPSW08}, it is possible to state the next theorem. For the sake of completeness we also remind the corresponding easy proof. 

\begin{theorem}~\cite{FPSW08}\label{th:leader}
Let $R$ be an initial configuration of $n$ \async robots without chirality. If \apf is solvable, then also Leader Election can be solved.
\end{theorem}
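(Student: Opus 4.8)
The plan is to reduce Leader Election to the formation of a single, conveniently chosen target pattern, exploiting the assumption that \apf is solvable from $R$. The guiding idea is that electing a leader amounts to distinguishing one robot from all the others, and an \emph{asymmetric} configuration already provides such a distinguished robot through the notion of minimum view introduced in Section~\ref{ssec:view}. So the whole argument reduces to observing that \apf solvability lets us reach an asymmetric configuration, from which the leader can be read off unambiguously.

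First I would fix, once and for all, a target pattern $F^*$ with $|F^*| = n$ that is asymmetric, i.e.\ admits only the identity isometry. For $n = 1$ the claim is trivial, while $n = 2$ need not be considered since \apf is unsolvable there (two--robot gathering is unsolvable, see~\cite{CFPS12}), making the hypothesis vacuous. For every $n \ge 3$ such an $F^*$ exists: it suffices to place the $n$ points in sufficiently generic position (for instance on a scalene triangle together with generic interior points) so that no nontrivial isometry maps the point set onto itself. Since \apf is assumed solvable from $R$, the associated algorithm forms $F^*$ from $R$: for every execution there is a time after which the configuration is similar to $F^*$ and no robot moves anymore.

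Next I would read the leader off the terminal configuration. Because $F^*$ is asymmetric, the final configuration is asymmetric as well, and hence---by the properties established in Section~\ref{ssec:view}---it admits a \emph{unique} point of minimum view; equivalently, there is a single robot $r = \minview(R)$, which I declare the leader. The election is consistent across robots: the view is computed as $V(p) = \min\{V^+(p), V^-(p)\}$, so it depends only on the (common) multiset of perceived positions and is insensitive to the absence of chirality. Therefore every robot, observing the same asymmetric final configuration, identifies the very same robot $r$, which is exactly what Leader Election requires.

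The one delicate point I would be careful to spell out is the agreement step in the \async, no--chirality model: the distinguished robot must be frame--independent. This is guaranteed precisely by taking the view as the minimum over both the clockwise and counter--clockwise orderings, so that a reflection of an observer's local coordinate system cannot alter the identity of the minimum--view point; this is the only place where the lack of chirality could a priori cause trouble. Granting this, the reduction is complete: solvability of \apf yields, via the formation of an asymmetric pattern, a uniquely and consistently identifiable robot, hence solvability of Leader Election.
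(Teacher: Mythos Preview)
Your proof is correct and follows the same high-level reduction as the paper: form a particular pattern using the \apf assumption, then read off a uniquely determined robot from the resulting geometry. The paper, however, takes a more concrete and elementary route: it fixes the specific pattern consisting of $n-1$ collinear points evenly spaced at distance $d$ together with one further point on the same line at distance $2d$ from its unique neighbor, and declares the leader to be that endpoint robot whose only neighbor lies at distance $2d$. This avoids any appeal to the view machinery of Section~\ref{ssec:view}; the leader is identified by a single local distance check that is manifestly frame- and chirality-independent. Your argument instead works with an arbitrary asymmetric $F^*$ and invokes the uniqueness of $\minview$ in asymmetric configurations. Both are valid; the paper's version is shorter and needs no auxiliary notions, while yours is slightly more general in that it shows any asymmetric target suffices, at the cost of importing the view apparatus and the chirality-robustness remark you correctly flagged.
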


\begin{proof}
Consider the pattern $F$ with $n-1$ collinear points evenly placed such that the distance between two adjacent points is some $d$, and one further point on the same line at distance $2d$ from its unique neighbor.
If robots can form $F$, then the unique robot with one neighbor at distance $2d$ can be elected as the leader.
\qed
\end{proof}

From Theorem~\ref{th:correctness} and Theorem~\ref{th:leader} we obtain Theorem~\ref{th:main1} that we now rephrase in a more explicit form:

\begin{theorem}\label{th:main}
Let $R$ be an initial configuration of \async robots without chirality. There exists a deterministic transition-safe algorithm that solves \apf from $R$ if and only the Leader Election problem can be solved in $R$, that is, $R$ is a leader configuration.
\end{theorem}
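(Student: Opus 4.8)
The plan is to prove the biconditional by assembling the two one-directional results already established in the excerpt, after first fixing the standard identification between solvability of Leader Election and membership in the class of leader configurations. Recall that, by the definition in Section~\ref{ssec:view}, $R$ is a leader configuration precisely when some robot is fixed by every isometry of $R$; such a robot is canonically electable (the occupied center in the rotational case, or the minimum-view robot on the unique reflection axis otherwise), whereas if no isometry-invariant robot exists then any two robots related by a nontrivial isometry remain perpetually indistinguishable to a deterministic oblivious protocol, so no leader can be deterministically singled out. Hence ``Leader Election is solvable in $R$'' and ``$R$ is a leader configuration'' delimit the same class, which is exactly the hypothesis class handled by our construction.

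For the implication that a leader configuration admits an \apf-solving transition-safe algorithm, I would simply invoke Theorem~\ref{th:correctness}: for $|R|=3$ the ad hoc construction of Theorem~\ref{teo:3robots} applies, the degenerate single-point target is delegated to the gathering routine of~\cite{CFPS12}, and for $|R|\ge 4$ the phase decomposition $\Funo,\ldots,\Fcinque$ forms any prescribed $F$. Transition-safety is not a fresh obligation here but an established property of that same algorithm: every move is proved safe in Lemmata~\ref{lem:corr-F1}--\ref{lem:corr-F5} (properties $\h_2$ and $\h'_2$), so by Definition~\ref{def:safeness} the composite algorithm is transition-safe. It remains only to note that the correctness argument never violates the leader-configuration invariant ($\h_1$/$\h'_1$), so the hypothesis feeding Theorem~\ref{th:correctness} is exactly ``$R$ is a leader configuration.''

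For the converse I would appeal to Theorem~\ref{th:leader}: taking the witness pattern $F$ consisting of $n-1$ evenly spaced collinear points together with one extra collinear point at double spacing, any algorithm forming $F$ from $R$ lets each robot identify the unique point having a single neighbour at distance $2d$, thereby electing a leader; hence solvability of \apf\ from $R$ forces solvability of Leader Election in $R$, and by the equivalence above $R$ must be a leader configuration. Combining the two implications yields the biconditional, which is the explicit restatement of Theorem~\ref{th:main1}.

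The step I expect to be the genuine crux is none of the above bookkeeping but rather the content already absorbed into Theorem~\ref{th:correctness}: guaranteeing transition-safety across all phase boundaries, i.e.\ that while one robot executes a move no observer can legitimately attribute the current (possibly in-transit) configuration to a different phase and start a second, conflicting move. This is exactly the $\Member$ obstacle that invalidated~\cite{FYOKY15}, resolved only through the triple-invariant $(\mathcal{X}_s,\mathcal{X}_d,\mathcal{X}_e)$ discipline together with Procedure~\SMove; within the present theorem that difficulty is \emph{inherited} rather than re-proved, which is what makes the final assembly short.
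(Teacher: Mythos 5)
Your proposal is correct and follows essentially the same route as the paper, which derives Theorem~\ref{th:main} directly by combining Theorem~\ref{th:correctness} (the constructive direction, with transition-safety already established in Lemmata~\ref{lem:corr-F1}--\ref{lem:corr-F5}) with Theorem~\ref{th:leader} (the witness-pattern converse). Your added remark making explicit the identification between leader configurations and deterministic Leader Election solvability is a harmless elaboration of what the paper leaves implicit.
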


\section{Conclusion}\label{sec:conclusion}
We considered the Arbitrary Pattern Formation problem in the well-known asynchronous Look-Compute-Move model. So far the problem has been mainly investigated with the further assumption on the capability of robots to share a common left-right orientation (chirality). 

Our study removes any assumption concerning the orientation of the robots. We shown that starting from initial leader configurations, robots can deterministically form any pattern including symmetric ones and those containing multiplicities.
This extends the previously known results in terms of required number of robots, orientation capabilities, multiplicities,  and formalisms. In fact, our algorithm does not rely on any assumption on the number of robots, nor on any orientation of the robots, it allows the formation of multiplicities if required by the given pattern, and it is provided in terms of logical predicates that facilitate to check its correctness.

Also, the relevance of our results is shown in light of the consequences obtained with respect to~\cite{FYOKY15},~\cite{BT15,BT16b}, and~\cite{DPVarx09}. 
Recently, errata to~\cite{FYOKY15} and~\cite{BT15,BT16b} have been delivered but the level of formalism and details with respect to our approach is still something remarkable.

The main open question left asks to provide a deterministic algorithm that solves the Pattern Formation rather than the Arbitrary Pattern Formation from any initial configuration, including symmetric ones. Potentially, robots should be able to form any pattern if starting from configurations characterized by symmetries that are included in the final pattern. The main difficulty in designing an algorithm for such cases is that in symmetric configurations many robots may move simultaneously, all those that look equivalent with respect to the symmetry. The adversary can decide to move any subset of such robots, and all of them may traverse different distances as the adversary can stop them in different moments. Hence, during a Look phase, it becomes very difficult to provide a mean to guess how the current configuration has been originated.

%\bibliographystyle{spmpsci}      % mathematics and physical sciences
%\bibliography{../../../global_references.bib}   % name your BibTeX data base

\begin{thebibliography}{10}
\providecommand{\url}[1]{{#1}}
\providecommand{\urlprefix}{URL }
\expandafter\ifx\csname urlstyle\endcsname\relax
  \providecommand{\doi}[1]{DOI~\discretionary{}{}{}#1}\else
  \providecommand{\doi}{DOI~\discretionary{}{}{}\begingroup
  \urlstyle{rm}\Url}\fi

\bibitem{BLP16}
B{\'{e}}rard, B., Lafourcade, P., Millet, L., Potop{-}Butucaru, M.,
  Thierry{-}Mieg, Y., Tixeuil, S.: Formal verification of mobile robot
  protocols.
\newblock Distributed Computing \textbf{29}(6), 459--487 (2016)

\bibitem{BCM16}
Bhagat, S., Chaudhuri, S.G., Mukhopadhyaya, K.: Formation of general position
  by asynchronous mobile robots under one-axis agreement.
\newblock In: Proc. 10th Int.'l WS on Algorithms and Computation {(WALCOM)},
  \emph{LNCS}, vol. 9627, pp. 80--91. Springer (2016)

\bibitem{BPT16}
Bonnet, F., Potop{-}Butucaru, M., Tixeuil, S.: Asynchronous gathering in rings
  with 4 robots.
\newblock In: Proc. 5th Int.'l Conf. on Ad-hoc, Mobile, and Wireless Networks
  {(ADHOC-NOW)}, \emph{LNCS}, vol. 9724, pp. 311--324. Springer (2016)

\bibitem{BT15}
Bramas, Q., Tixeuil, S.: {Brief Announcement}: {P}robabilistic asynchronous
  arbitrary pattern formation.
\newblock In: Proc. 35th {ACM SIGACT-SIGOPS} Symp. on Principles of Distributed
  Computing (PODC) (2016)

\bibitem{BT16b}
Bramas, Q., Tixeuil, S.: {Brief Announcement}: {P}robabilistic asynchronous
  arbitrary pattern formation.
\newblock In: Proc. 18th Int.'l Symp. on Stabilization, Safety, and Security of
  Distributed Systems (SSS), \emph{Lecture Notes in Computer Science}, vol.
  10083, pp. 88--93 (2016)

\bibitem{BT16}
Bramas, Q., Tixeuil, S.: {P}robabilistic asynchronous arbitrary pattern
  formation.
\newblock CoRR \textbf{abs/1508.03714} (2016).
\newblock \urlprefix\url{https://arxiv.org/abs/1508.03714}

\bibitem{CGJM14}
Chaudhuri, S.G., Ghike, S., Jain, S., Mukhopadhyaya, K.: Pattern formation for
  asynchronous robots without agreement in chirality.
\newblock CoRR \textbf{abs/1403.2625} (2014).
\newblock \urlprefix\url{http://arxiv.org/abs/1403.2625}

\bibitem{CDN16}
Cicerone, S., {Di Stefano}, G., Navarra, A.: Asynchronous embedded pattern
  formation without orientation.
\newblock In: Proc. 30th Int.'l Symp. on Distributed Computing (DISC),
  \emph{LNCS}, vol. 9888, pp. 85--98. Springer (2016)

\bibitem{CFPS12}
Cieliebak, M., Flocchini, P., Prencipe, G., Santoro, N.: Distributed computing
  by mobile robots: Gathering.
\newblock SIAM J. on Computing \textbf{41}(4), 829--879 (2012)

\bibitem{CieliebakP02}
Cieliebak, M., Prencipe, G.: Gathering autonomous mobile robots.
\newblock In: Proceedings of the 9th International Colloquium on Structural
  Information and Communication Complexity (SIROCCO), vol.~13, pp. 57--72.
  Carleton Scientific (2002)

\bibitem{DDKN12}
{D'Angelo}, G., {Di Stefano}, G., Klasing, R., Navarra, A.: Gathering of robots
  on anonymous grids and trees without multiplicity detection.
\newblock Theor. Comput. Sci. \textbf{610}, 158--168 (2016)

\bibitem{DDN12tr}
{D'Angelo}, G., {Di Stefano}, G., Navarra, A.: Gathering on rings under the
  look-compute-move model.
\newblock Distributed Computing \textbf{27}(4), 255--285 (2014)

\bibitem{DDN11}
D'Angelo, G., Stefano, G.D., Navarra, A.: Gathering six oblivious robots on
  anonymous symmetric rings.
\newblock J. Discrete Algorithms \textbf{26}, 16--27 (2014)

\bibitem{DFSY15}
Das, S., Flocchini, P., Santoro, N., Yamashita, M.: Forming sequences of
  geometric patterns with oblivious mobile robots.
\newblock Distributed Computing \textbf{28}(2), 131--145 (2015)

\bibitem{DMN15}
{Di Stefano}, G., Montanari, P., Navarra, A.: About ungatherability of
  oblivious and asynchronous robots on anonymous rings.
\newblock In: Proc. 26th Int.'l WS on Combinatorial Algorithms {(IWOCA)},
  \emph{LNCS}, vol. 9538, pp. 136--147. Springer (2016)

\bibitem{DPVarx09}
Dieudonn{\'{e}}, Y., Petit, F., Villain, V.: Leader election problem versus
  pattern formation problem.
\newblock CoRR \textbf{abs/0902.2851} (2009).
\newblock \urlprefix\url{http://arxiv.org/abs/0902.2851}

\bibitem{DPV10-2}
Dieudonn{\'{e}}, Y., Petit, F., Villain, V.: Brief announcement: leader
  election vs pattern formation.
\newblock In: Proc. 29th Annual {ACM} Symposium on Principles of Distributed
  Computing {(PODC)}, pp. 404--405. {ACM} (2010)

\bibitem{DPV10}
Dieudonn{\'{e}}, Y., Petit, F., Villain, V.: Leader election problem versus
  pattern formation problem.
\newblock In: Proc. 24th Int.'l Symp. on Distributed Computing (DISC),
  \emph{LNCS}, vol. 6343, pp. 267--281. Springer (2010)

\bibitem{DBO17}
Doan, H.T.T., Bonnet, F., Ogata, K.: Model checking of a mobile robots
  perpetual exploration algorithm.
\newblock In: Proc. 6th Int.'l Work. on Structured Object-Oriented Formal
  Language and Method ({SOFL+MSVL}), \emph{Lecture Notes in Computer Science},
  vol. 10189, pp. 201--219 (2017)

\bibitem{FPSV14}
Flocchini, P., Prencipe, G., Santoro, N., Viglietta, G.: Distributed computing
  by mobile robots: uniform circle formation.
\newblock Distributed Computing \textbf{30}, 413--457 (2017)

\bibitem{FPSW08}
Flocchini, P., Prencipe, G., Santoro, N., Widmayer, P.: Arbitrary pattern
  formation by asynchronous, anonymous, oblivious robots.
\newblock Theor. Comput. Sci. \textbf{407}(1-3), 412--447 (2008)

\bibitem{FYOKY15}
Fujinaga, N., Yamauchi, Y., Ono, H., Kijima, S., Yamashita, M.: Pattern
  formation by oblivious asynchronous mobile robots.
\newblock {SIAM} J. Computing \textbf{44}(3), 740--785 (2015)

\bibitem{FYOKY17}
Fujinaga, N., Yamauchi, Y., Ono, H., Kijima, S., Yamashita, M.: Erratum:
  Pattern formation by oblivious asynchronous mobile robots (2017).
\newblock
  \urlprefix\url{http://tcslab.csce.kyushu-u.ac.jp/$\sim$kijima/papers/ErratumFujinagaSICOMP15v2.pdf}

\bibitem{GM10}
Ghike, S., Mukhopadhyaya, K.: A distributed algorithm for pattern formation by
  autonomous robots, with no agreement on coordinate compass.
\newblock In: Proc. 6th Int.'l Conf. on Distributed Computing and Internet
  Technology, ({ICDCIT}), \emph{LNCS}, vol. 5966, pp. 157--169. Springer (2010)

\bibitem{MV16}
Mamino, M., Viglietta, G.: Square formation by asynchronous oblivious robots.
\newblock In: Proc. of the 28th Canadian Conference on Computational Geometry
  {(CCCG)}, pp. 1--6 (2016)

\bibitem{M83}
Megiddo, N.: Linear-time algorithms for linear programming in
  {R}\({}^{\mbox{3}}\) and related problems.
\newblock {SIAM} J. Comput. \textbf{12}(4), 759--776 (1983)

\bibitem{MPST14}
Millet, L., Potop{-}Butucaru, M., Sznajder, N., Tixeuil, S.: On the synthesis
  of mobile robots algorithms: The case of ring gathering.
\newblock In: Proc. 16th Int.'l Symp. on Stabilization, Safety, and Security of
  Distributed Systems {(SSS)}, \emph{LNCS}, vol. 8756, pp. 237--251. Springer
  (2014)

\bibitem{SY99}
Suzuki, I., Yamashita, M.: Distributed anonymous mobile robots: Formation of
  geometric patterns.
\newblock {SIAM} J. Comput. \textbf{28}(4), 1347--1363 (1999)

\bibitem{Welz91}
Welzl, E.: Smallest enclosing disks (balls and ellipsoids).
\newblock In: Results and New Trends in Computer Science, pp. 359--370.
  Springer-Verlag (1991)

\bibitem{YS10}
Yamashita, M., Suzuki, I.: Characterizing geometric patterns formable by
  oblivious anonymous mobile robots.
\newblock Theor. Comput. Sci. \textbf{411}(26-28), 2433--2453 (2010)

\bibitem{YUKY17}
Yamauchi, Y., Uehara, T., Kijima, S., Yamashita, M.: Plane formation by
  synchronous mobile robots in the three-dimensional euclidean space.
\newblock J. {ACM} \textbf{64}(3), 16:1--16:43 (2017)

\bibitem{YY14}
Yamauchi, Y., Yamashita, M.: Randomized pattern formation algorithm for
  asynchronous oblivious mobile robots.
\newblock In: Proc. 28th Int.'l Symp. on Distributed Computing, {(DISC)},
  \emph{LNCS}, vol. 8784, pp. 137--151. Springer (2014)

\end{thebibliography}

\end{document}